\documentclass[aps,showpacs,reprint,floatfix,superscriptaddress]{revtex4-2}
\usepackage{graphicx}
\usepackage{bm}
\usepackage[colorlinks=true,urlcolor=blue,citecolor=blue,linkcolor=blue]{hyperref}
\usepackage{suffix}

\usepackage{xr}
\usepackage{xfrac}
\usepackage{multirow}
\usepackage{amsmath}
\usepackage{amssymb}
\usepackage{braket}
\usepackage{physics}
\usepackage{amsthm}
\usepackage{natbib}
\usepackage{mathtools}
\usepackage[utf8]{inputenc}
\usepackage[english]{babel}
\usepackage{scalerel}[2014/03/10]
\usepackage{stackengine}
\usepackage{quantikz}
\usepackage{algorithm}
\usepackage[noend]{algpseudocode} 
\usepackage{subfigure}
\usepackage{enumitem}

\newcommand{\algmargin}{\the\ALG@thistlm}
\makeatother
\newlength{\whilewidth}
\settowidth{\whilewidth}{\algorithmicwhile\ }
\algdef{SE}[parWHILE]{parWhile}{EndparWhile}[1]
  {\parbox[t]{\dimexpr\linewidth-\algmargin}{%
     \hangindent\whilewidth\strut\algorithmicwhile\ #1\ \algorithmicdo\strut}}{\algorithmicend\ \algorithmicwhile}%
\algnewcommand{\parState}[1]{\State%
  \parbox[t]{\dimexpr\linewidth-\algmargin}{\strut #1\strut}}

\newtheorem{theorem}{Theorem}[]
\newtheorem*{remark}{Remark}
\newtheorem{corollary}{Corollary}[]
\newtheorem{lemma}[]{Lemma}

\newtheorem{definition}{Definition}

\theoremstyle{definition}

\makeatletter
\@addtoreset{proofpart}{theorem}
\@addtoreset{proofcase}{theorem}
\makeatother
\renewcommand{\exp}[1]{\text{exp}\left( #1 \right)}

\newcommand{\sm}[0]{Supplemental Information}


\begin{document}
\preprint{APS/123-QED}

\title{Diagnosing Quantum Circuits: Noise Robustness, Trainability, and Expressibility}

\author{Yuguo Shao}
\thanks{These authors contributed equally to this work.}
\affiliation{Yau Mathematical Sciences Center, Tsinghua University, Beijing 100084, China}
\affiliation{Yanqi Lake Beijing Institute of Mathematical Sciences and Applications, Beijing 100407, China}

\author{Zhenyu Chen}
\thanks{These authors contributed equally to this work.}
\affiliation{Institute for Interdisciplinary Information Sciences, Tsinghua University, Beijing 100084, China}

\author{Zhaohui Wei}
\email{weizhaohui@gmail.com}
\affiliation{Yau Mathematical Sciences Center, Tsinghua University, Beijing 100084, China}
\affiliation{Yanqi Lake Beijing Institute of Mathematical Sciences and Applications, Beijing 100407, China}

\author{Zhengwei Liu}
\email{liuzhengwei@mail.tsinghua.edu.cn}
\affiliation{Yau Mathematical Sciences Center, Tsinghua University, Beijing 100084, China}
\affiliation{Department of Mathematics, Tsinghua University, Beijing 100084, China}
\affiliation{Yanqi Lake Beijing Institute of Mathematical Sciences and Applications, Beijing 100407, China}


\begin{abstract}
  Achieving practical quantum advantage on near-term noisy hardware is a central goal of quantum computation. 
  However, without efficient pre-execution diagnostics, circuit design and scheme selection often rely on costly hardware-in-the-loop trial-and-error, inflating experimental overhead and impeding progress. 
  To address this challenge, we introduce 2MC-OBPPP, a polynomial-time classical estimator that, for parameterized quantum circuits, jointly estimates trainability, expressibility, and robustness to noise.
  For example, our approach visually demonstrates that moderate amplitude damping alleviates barren plateaus (improving trainability) while decreasing expressibility.
  Moreover, the method produces a spatiotemporal ``noise-hotspot" map that pinpoints the most noise-sensitive qubits/gates, enabling targeted noise suppression. In a representative circuit, interventions on fewer than $2\%$ qubits reduce the error up to $90\%$.
  Together, before execution, our approach provides an efficient diagnostic benchmark for circuit/scheme design, and in deployment, guides for targeted interventions that substantially reduce the cost of error suppression.
\end{abstract}

\maketitle

\section{Introduction}
Recent developments in quantum hardware have raised hopes for achieving practical quantum advantages in the near future~\cite{preskill2018quantum}.
In this context, parameterized quantum circuits~(PQCs) play a central role, which are widely employed in variational quantum algorithms~(VQAs)~\cite{cerezo2021variational,bharti2022noisy,kandala2017hardware} and quantum machine learning~(QML)~\cite{biamonte2017quantum,cerezo2022challenges,cong2019quantum,abbas2021power}.
However, to realize practical utility, the designed PQCs must satisfy three essential criteria: trainable, sufficiently expressive to approximate intended targets, and robust to hardware noise.
Therefore, assessing these properties and locating the most influential circuit elements are crucial steps toward benchmarking current circuits and informing the development of future architectures.


A principled route to this goal is to quantify how noise and circuit structure jointly affect three key figures of merit: 
(i) the mean-squared error~(MSE) between noisy and ideal expectation values that we want, which reflects the circuit's robustness to noise; 
(ii) the variance of the parameter gradient, whose exponential decay signals barren plateaus~\cite{mcclean2018barren,larocca2025barren} and thus a loss of trainability~\cite{larocca2025barren,mcclean2018barren,holmes2022connecting,Yu_2024,ortiz2021entanglement,wang2021noise,fontana2024characterizing,ragone2024lie}; 
and (iii) the deviation from the Haar ensemble, which serves as a proxy for expressibility~\cite{sim2019expressibility,holmes2022connecting,thanasilp2024exponential,Yu_2024,nakaji2021expressibility}.
Exact evaluation of these quantities generally scales exponentially with the number of qubits, whereas existing polynomial-time methods are typically restricted to Pauli noise or Clifford circuits~\cite{gottesman1998heisenberg,aaronson2004improved,heyraud2023efficient}, excluding many realistic scenarios.

\begin{figure}[htbp]
  \centering
  \includegraphics[width=\columnwidth]{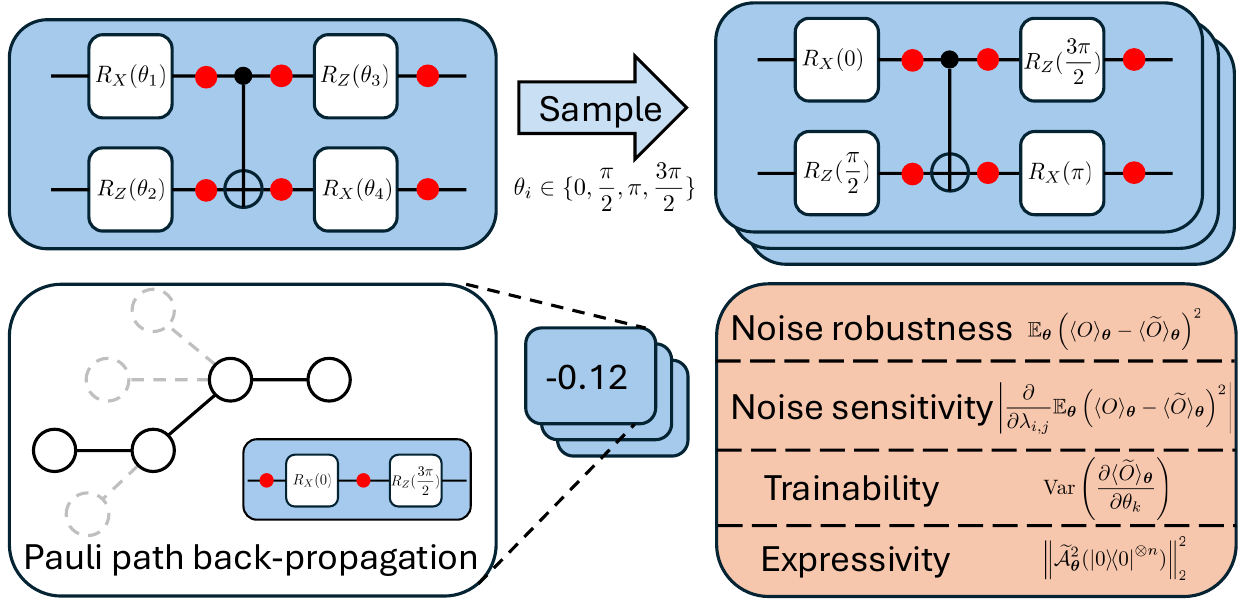}
  \caption{Workflow overview of 2MC-OBPPP.
  For a given PQC, each rotation angle is sampled from a Clifford-compatible discrete set~(blue arrow). 
  For each sampled configuration of rotation angles, the algorithm starts from Pauli terms in final observable and recursively back-propagates the operators through the circuit. 
  Upon encountering a PCS1 channel~(red points), a Pauli operator~(solid circles) is sampled from the set of candidate operators~(dashed circles), the resulting sequence of Pauli operators defines a Pauli path~(solid line) and produces a sample value.
  Finally, post-processing and averaging over samples yield estimators of the diagnostic quantities.
  }
  \label{fig:main_fig}
\end{figure}

In this work, we introduce \textit{2-moment Monte Carlo using observable's back-propagation on Pauli paths}~(2MC-OBPPP), a classical framework for estimating the three aforementioned metrics in any noisy PQCs, where the noise can belong to a broad class of non-unital channels. The algorithm combines Monte Carlo sampling with Pauli path back-propagation~\cite{gao2018efficient,aharonov2023polynomial,shao2024simulating,schuster2024polynomial,mele2024noise,zhang2024clifford,angrisani2025simulating,fontana2025classical,angrisani2024classically,martinez2025efficient,li2025dual} to yield unbiased estimators.
In particular, the computational cost to estimate each metric to a prescribed accuracy (e.g., additive error $\varepsilon$ with high probability) scales linearly with qubit number and circuit depth without any assumptions on circuit connectivity.
Using this approach, we numerically observe that amplitude damping suppresses the barren plateau by increasing the variance of gradient, albeit at the expense of reduced expressibility.

Beyond estimating these metrics, the same framework can also identify ``noise bottlenecks''—gates or qubits whose errors have the greatest impact on the circuit output.
This information enables targeted pulse-shaping~\cite{motzoi2009simple,chen2016measuring,werninghaus2021leakage,evered2023high} and other noise-reduction techniques, directing interventions to the elements where they yield the greatest benefit and thereby improving circuit performance with minimal overhead.
In particular, by focusing on the identified bottlenecks, our approach can significantly reduce the cost of quantum error mitigation~(QEM)~\cite{takagi2022fundamental,takagi2023universal,tsubouchi2023universal,quek2024exponentially}, allowing it to scale to larger system sizes.
In a typical quantum circuit, we show that, guided by our approach, interventions on fewer than $2\%$ qubits can reduce the error up to $90\%$.
Altogether, our approach offers an efficient and scalable toolkit to quantify key properties of large-scale PQCs, and to guide hardware improvements and targeted noise-reduction strategies, thereby supporting progress toward practical quantum advantage.

\section{2MC-OBPPP}
An $n$-qubit parameterized quantum circuit~(PQC) is specified by  $\mathcal{C}(\bm{\theta}) = U_{N_g}(\theta_{N_g}) \cdots U_1(\theta_1)$, where $\bm{\theta} = (\theta_1, \theta_2, \ldots, \theta_{N_g})$ with $\theta_i \in \left[0, 2\pi \right)$ collects the gate parameters, and $N_g$ is the number of parameters.
We assume that each parameterized gate can be written as a Pauli rotation followed by a non-parameterized Clifford gate $U_i(\theta_i) = C_i e^{-i\frac{\theta_i}{2} P}$, where $P \in \mathcal{P}_n = \{\mathbb{I}, X, Y, Z\}^{\otimes n}$.

Except for ideal unitary operations, real quantum devices inevitably suffer from noise. To capture these effects and other non-unitary processes, we consider the following \textit{Pauli column-wise sum at most one}~(PCS1) channels encountered in the circuit:
\begin{definition}[PCS1]
  Let $\mathcal{S}_{\mathcal{E}}$ be the Pauli transfer matrix of a channel $\mathcal{E}$, with entries $\left( \mathcal{S}_\mathcal{E}\right)_{i,j}=\tr{\mathcal{E}(\sigma_i) \sigma_j}$, where $\{\sigma_i\}$ is the normalized Pauli basis.
  The channel $\mathcal{E}$ is PCS1 if for each column $j$, $\sum_{i} \abs{\left(\mathcal{S}_\mathcal{E}\right)_{i,j}}\leq 1$ holds.
\end{definition}
PCS1 channels encompass a broad class of channels that capture many realistic scenarios, such as depolarizing noise, Pauli error channel, amplitude damping, thermal relaxation process, and mid-circuit measurement with Pauli feed-forward~(MMPF, defined as
\scalebox{0.45}{
\begin{quantikz}[baseline=(current bounding box.center)]
  \meter{} &  \cwbend{1} \\
  \qw      & \gate{U} & \qw
\end{quantikz}} for $U$ being Pauli), making them a representative framework for modeling noise in quantum circuits, more details are provided in \sm.
We denote the circuit with PCS1 channels as $\widetilde{\mathcal{C}}(\bm{\theta})$.
Without loss of generality, we assume that each gate $U_i(\theta_i)$ is followed by a PCS1 channel $\mathcal{E}_{i}$, so that $\widetilde{\mathcal{C}}(\bm{\theta}) = \mathcal{E}_{N_g} \circ U_{N_g}(\theta_{N_g}) \cdots \mathcal{E}_{1} \circ U_1(\theta_1)$. We write $\widetilde{\mathcal{C}}(\bm{\theta})(\cdot)$ for its action on density matrices.

Several key properties of a PQC—such as trainability, expressibility, and robustness to noise—play a critical role in determining its performance.
In this work, we focus on a class of particular quantities that capture these properties and share a common structure: each is determined by the circuit's $2$-fold channel~\cite{roberts2017chaos}, defined as $\mathcal{E}_2(\mathcal{C}) = \mathbb{E}_{\bm{\theta}} \widetilde{\mathcal{C}}(\bm{\theta})^{\otimes 2}(\cdot)$, where $\bm{\theta}$ is drawn uniformly from $\left[0, 2\pi\right)^{N_g}$.
As shown in Ref.~\cite{heyraud2023efficient,zhu2025scalable,chen2025quantum}, the continuous average can be replaced by an average over the discrete set $\{0, \frac{\pi}{2}, \pi, \frac{3\pi}{2}\}^{N_g}$.
Consequently, these diagnostic quantities can be written as $\mathbb{E}_{\bm{\theta}} h_{*}(\bm{\theta})$, where $\bm{\theta}$ is sampled uniformly from $\{0, \frac{\pi}{2}, \pi, \frac{3\pi}{2}\}^{N_g}$, and $h_{*}(\bm{\theta})$ denotes a bounded function determined by the diagnostic of interest. More details are provided in \sm.

To investigate the circuit properties under PCS1 channel settings—particularly beyond the Pauli noise regime, which is typically assumed in stabilizer-based methods—we develop 2MC-OBPPP.
As illustrated in Fig.~\ref{fig:main_fig}, 2MC-OBPPP consists of two parts:
\begin{enumerate}[label=\arabic*., topsep=1pt, itemsep=0pt, parsep=0pt, leftmargin=*]
  \item Outer-layer (over $\bm{\theta}$): sample $\bm{\theta}$ uniformly from $\{0, \frac{\pi}{2}, \pi, \frac{3\pi}{2}\}^{N_g}$; if needed, also sample additional randomized configurations; call the inner layer to estimate $h_{*}(\bm{\theta})$, and average over samples to approximate $\mathbb{E}_{\bm{\theta}} h_{*}(\bm{\theta})$.
  \item Inner-layer (estimating $h_{*}(\bm{\theta})$): apply Pauli path back-propagation and Monte Carlo sampling to produce an unbiased estimator of $h_{*}(\bm{\theta})$.
\end{enumerate}
For an $n$-qubit circuit with $L$ layers, an $\varepsilon$-accurate estimate of these diagnostic quantities can be obtained with success probability at least $1-\delta$ at a computational cost scaling as:
\begin{equation}
  \order{\frac{\norm{O}_{\mathrm{Pauli},1}^2 \norm{O}_\infty^6 (nL + n\norm{\rho}_0)}{\varepsilon^4} \ln{\frac{\norm{O}_\infty^2}{\varepsilon}}\ln{\frac{1}{\delta}} },
\end{equation}
provided that the PCS1 channels are $\order{1}$-local (i.e., each $\mathcal{E}_i$ acts on at most a constant number of qubits).
Here, $\norm{\rho}_0$ denotes the number of non-zero entries in the density matrix of the input state $\rho$; for $\ketbra{0}{0}^{\otimes n}$, we have $\norm{\rho}_0=1$.
For an observable $O=\sum_{h=1}^{N_O} c_h P_h$ with $P_h \in \mathcal{P}_n$, $\norm{O}_\infty$ denotes its spectral norm, and $\norm{O}_{\mathrm{Pauli},1}\coloneq \sum_{h=1}^{N_O} \abs{c_h}$ is the Pauli $l_1$-norm.
For the ``expressibility'' estimation discussed in the subsequent section, no observable $O$ is involved; thus, the $\norm{O}_\infty$ and $\norm{O}_{\mathrm{Pauli},1}$ in the above expression should be replaced by $1$.
In many QML tasks~\cite{biamonte2017quantum,cerezo2022challenges,cong2019quantum,abbas2021power,jing2025quantum}, the measurement is a Pauli operator or a computational-basis projector, in which case $\norm{O}_\infty \leq \norm{O}_{\mathrm{Pauli},1}=1$.
When $\norm{O}_{\mathrm{Pauli},1}=\order{1}$ and $\norm{\rho}_0=\order{1}$, the cost simplifies to $\order{\frac{nL}{\varepsilon^4} \ln{\frac{1}{\varepsilon}}\ln{\frac{1}{\delta}} }$.
Consequently, these quantities can be estimated efficiently.
Full algorithmic details and overhead analysis are provided in \sm, which also include numerical benchmarks showing that the proposed estimators are unbiased and exhibit low variance.

\section{Noise Robustness and Noise Bottleneck}
The noise robustness of a circuit can be quantified by the MSE between the ideal and noisy expectation values of an observable $O$ with respect to the parameters $\bm{\theta}$:
\begin{small}
\begin{equation}\label{eq:noise_mse}
  \mathrm{MSE}(\langle O \rangle) = \mathbb{E}_{\bm{\theta}} \left( \langle O \rangle_{\bm{\theta}} - \langle \widetilde{O} \rangle_{\bm{\theta}} \right)^2,
\end{equation}
\end{small}
where $\mathbb{E}_{\bm{\theta}}$ denotes the expectation over $\bm{\theta}$ sampled uniformly from $\left[0, 2\pi \right)^{N_g}$, $\langle O \rangle_{\bm{\theta}}=\tr{O \mathcal{C}(\bm{\theta})\rho \mathcal{C}(\bm{\theta})^\dagger}$ denotes the expectation value of $O$ under the ideal quantum circuit $\mathcal{C}(\bm{\theta})$, and $\langle \widetilde{O} \rangle_{\bm{\theta}}$ denotes the corresponding value under the noisy circuit $\widetilde{\mathcal{C}}(\bm{\theta})$.

In realistic quantum devices, noise is ubiquitous but often affects different circuit components unevenly; errors from particular gates or qubits can impact the computation more adversely than others.
We refer to this especially sensitive subset of circuit elements as the \emph{noise bottleneck}.
To characterize the bottleneck, we consider a gate-based noise model in which each gate $U_{i,j}$ is followed by a noise channel $\mathcal{N}_{i,j}$; here the index $(i,j)$ labels the $j$-th gate in the $i$-th layer.
An analogous formulation applies to a qubit-based noise model where $(i,j)$ instead indexes the $j$-th qubit in the $i$-th layer.
Each $\mathcal{N}_{i,j}$ is assumed to be a PCS1 channel acting on at most a constant number of qubits.

The sensitivity of a specific element $(i,j)$ in the circuit can be quantified by:
\begin{small}
\begin{equation}\label{eq:noise_gradient}
  \abs{\frac{\partial \mathrm{MSE}(\langle O \rangle)}{\partial \lambda_{i,j}}} = \abs{  \frac{\partial }{\partial \lambda_{i,j}} \mathbb{E}_{\bm{\theta}} \left( \langle O \rangle_{\bm{\theta}} - \langle \widetilde{O} \rangle_{\bm{\theta}} \right)^2 },
\end{equation}
\end{small}
where $\lambda_{i,j}$ is a parameter of the noise channel $\mathcal{N}_{i,j}$ (e.g., Pauli error rate, amplitude damping rate, thermal relaxation coefficient).
This sensitivity is closely related to the classical Fisher information~\cite{meyer2021fisher} about $\lambda_{i,j}$ encoded in $\langle \widetilde{O} \rangle_{\bm{\theta}}$, more details are provided in \sm.
A larger value indicates that variations in $\lambda_{i,j}$ exert a stronger influence on the MSE.
Therefore, prioritizing noise suppression for elements with the largest sensitivity provides a cost-effective strategy to reduce the MSE, which motivates the term ``noise bottleneck".

\begin{figure*}[!htbp]
  \centering
  \subfigure[]{
    \includegraphics[width=0.36\textwidth]{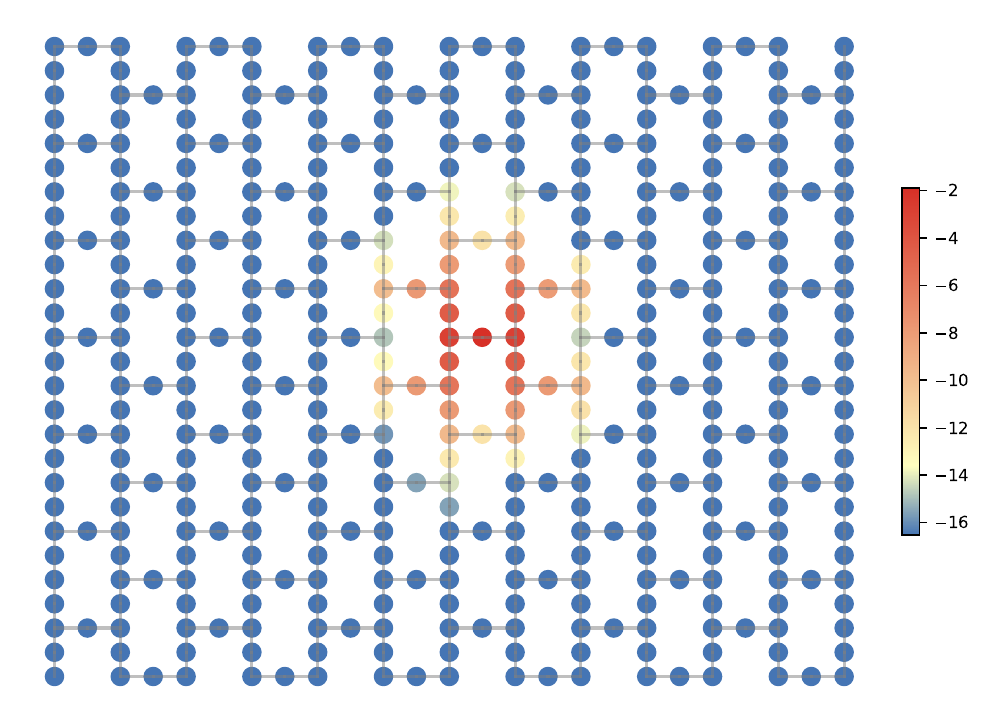}
  }
  \hspace{0.1\textwidth}
  \subfigure[]{
    \includegraphics[width=0.35\textwidth]{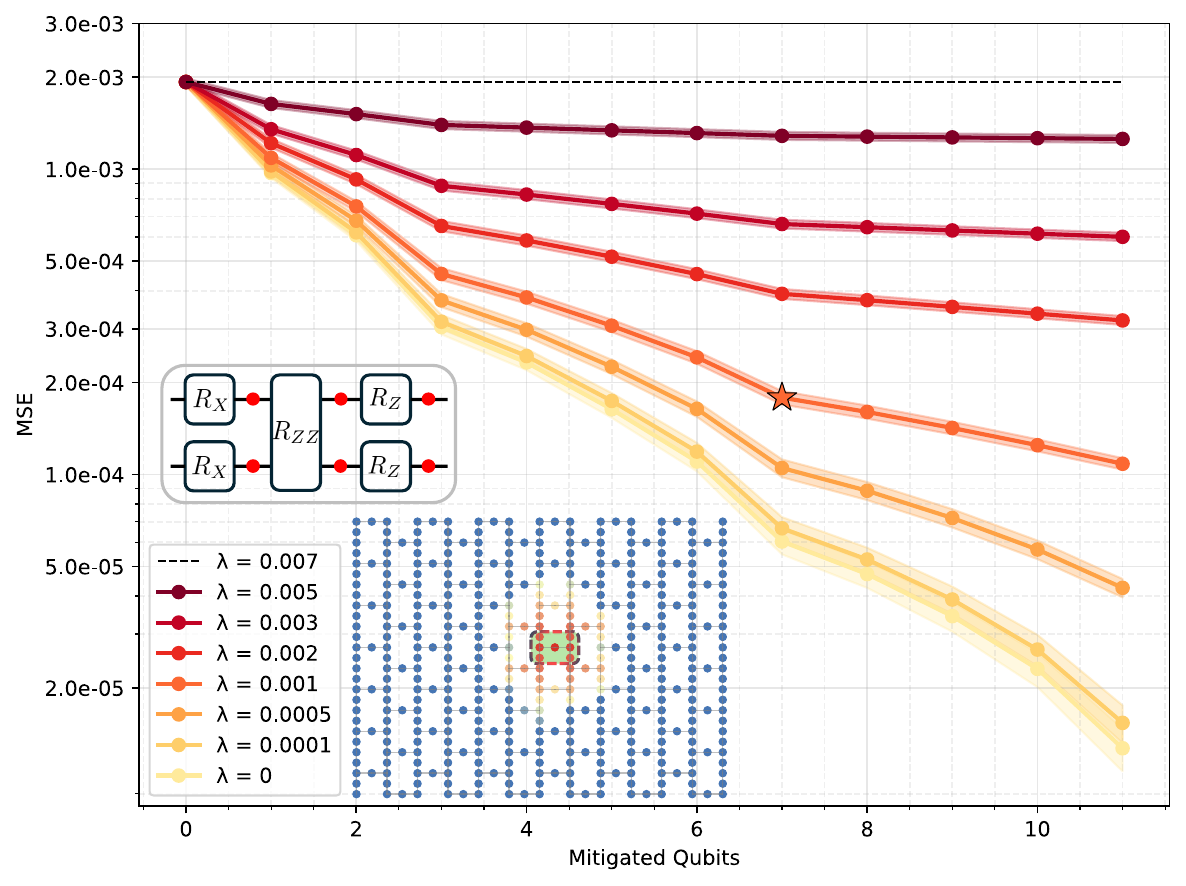}
  }
  \caption{(a) Sensitivity of qubits in a 75-layer, 435-qubit circuit with 1-weight measurement. Dots represent qubits, and lines indicate two-qubit gates.
  The color of each qubit reflects the logarithm of the gradient norm of the MSE~\eqref{eq:noise_gradient} with respect to the qubit's noise strength.
  (b) Reduction of MSE~\eqref{eq:noise_mse} under the bottleneck-first error-suppression strategy.
  The x-axis denotes the number of mitigated qubits, and the y-axis shows the corresponding MSE. 
  The legend of $\lambda$ presents the noise strength of the selected qubits after error mitigation.
  Shaded regions represent the standard deviation over $100$ independent trials.
  The $\star$ marks the case where $7$ qubits are mitigated to $\lambda=0.001$.
  The inset highlights the $7$ qubits with the largest gradient norms on the chip layout.
  }
  \label{fig:chip435}
\end{figure*}

As a concrete example, we identify the noise bottlenecks in a 75-layer, 435-qubit circuit, designed on a hypothetical chip architecture and composed of $R_X$, $R_Z$ and $R_{ZZ}$ gates.
Further details on the circuit design are provided in \sm.
The noise model is given by qubit-wise depolarizing noise, $\mathcal{N}_{dep}(\rho)=(1-\lambda)\rho + \lambda\frac{\mathbb{I}}{2}$, applied gate-by-gate.
Here, the depolarizing rate $\lambda$ is uniformly set to $\lambda=0.007$ across all qubits, matching the fitted noise strength of IBM's Eagle quantum device~\cite{shao2024simulating}.
The results are shown in Fig.~\ref{fig:chip435}(a), with bottlenecks highlighted in red.

These bottlenecks are naturally the prime targets for noise-reduction techniques, such as QEM~\cite{cai2023quantum,van2023probabilistic,kandala2019error} or enhanced quantum control~\cite{rol2017restless,motzoi2009simple,chen2016measuring,werninghaus2021leakage,evered2023high}.
In contrast, globally applied QEM schemes face fundamental scalability limitations—the overhead can grow exponentially in the worst case~\cite{takagi2022fundamental,takagi2023universal,tsubouchi2023universal,quek2024exponentially}.
Consequently, concentrating mitigation efforts on the identified bottlenecks can substantially reduce the overall cost of noise suppression.
As demonstrated in Fig.~\ref{fig:chip435}(b), reducing the noise strength to $\lambda=0.001$ on fewer than $2\%$ of the qubits~(7 out of 435), leads to a tenfold reduction in the MSE.

\section{Trainability Estimation}
A PQC $\mathcal{C}(\bm{\theta})$ with measurement observable $O$ is said to exhibit a \textit{barren plateau} if the probability that its gradient significantly deviates from zero decays exponentially with the system size, formally, $\mathbb{P}\left( \abs{\frac{\partial \langle O \rangle_{\bm{\theta}}}{\partial \theta_k}} > \epsilon \right) \leq \order{\exp{ -\alpha n}}$ for all $k$ , where $n$ is the system size, and $\alpha>0$ is a constant.
This phenomenon was related to various properties of circuits, including expressibility~\cite{mcclean2018barren,holmes2022connecting,Yu_2024}, entanglement~\cite{ortiz2021entanglement,patti2021entanglement} and noise~\cite{wang2021noise}.
A key quantity for diagnosing barren plateaus—and more broadly, for assessing trainability—is the variance of gradient, defined as:
\begin{small}
\begin{equation}\label{eq:trainability_variance}
  \mathrm{Var}\left(\frac{\partial \langle \widetilde{O} \rangle_{\bm{\theta}}}{\partial \theta_k}\right) = \mathbb{E}_{\bm{\theta}} \left( \frac{\partial \langle \widetilde{O} \rangle_{\bm{\theta}}}{\partial \theta_k} \right)^2 - \left( \mathbb{E}_{\bm{\theta}} \frac{\partial \langle \widetilde{O} \rangle_{\bm{\theta}}}{\partial \theta_k} \right)^2.
\end{equation}
\end{small}
By Chebyshev's inequality, we have $\mathbb{P}\left( \abs{\frac{\partial \langle \widetilde{O} \rangle_{\bm{\theta}}}{\partial \theta_k}} > \epsilon \right) \leq {\mathrm{Var}(\frac{\partial \langle \widetilde{O} \rangle_{\bm{\theta}}}{\partial \theta_k})} \big/ {\epsilon^2}$, which implies that a vanishing variance for all parameters directly implies a barren plateau.
Conversely, a non-vanishing variance guarantees large fluctuations of the randomly initialized gradients, thereby indicating favorable trainability in the early stages of optimization~\cite{mcclean2018barren,wang2021noise,cerezo2021variational}.

Theoretically, representation-theoretic tools can, in principle, compute the variance of gradient exactly, but only for circuits with structural features (such as symmetries or algebraic regularity) that make the analysis tractable~\cite{fontana2024characterizing,ragone2024lie}.
In the absence of such structures, analytical approaches become difficult to apply, and numerical methods are needed.
In the noiseless setting, stabilizer-based methods are applicable~\cite{heyraud2023efficient}.
However, in the presence of noise, the picture is more nuanced: noise can distort the loss function landscape and potentially induce barren plateaus, whereas under appropriate conditions, there is evidence that non-unital noise channels and MMPFs can alleviate this issue~\cite{wang2021noise,mele2024noise,fefferman2024effect,deshpande2024dynamic,cao2025measurement}.
As a result, trainability analysis in noisy circuits (especially non-unital channels) has attracted increasing attention~\cite{mele2024noise,fefferman2024effect,deshpande2024dynamic,cao2025measurement}.

\begin{figure}[!htbp]
  \centering
  \includegraphics[width=0.85\columnwidth]{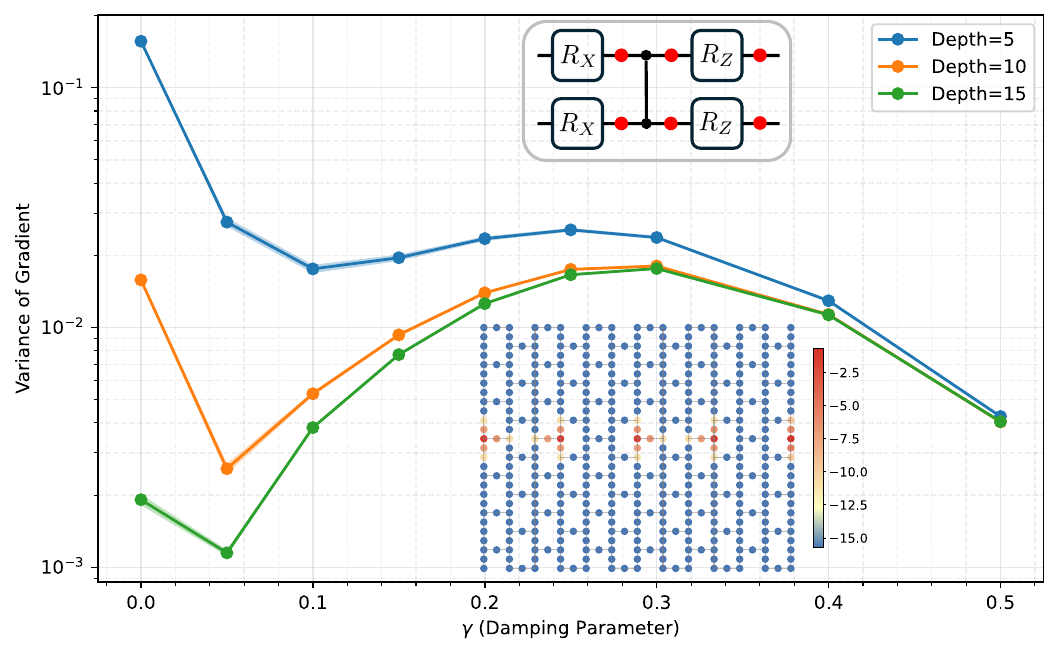}
  \caption{
    The sum of the variance of gradient over all parameters $\sum_k \mathrm{Var}\left(\frac{\partial \langle \widetilde{O} \rangle_{\bm{\theta}}}{\partial \theta_k}\right)$ in circuits with 5-weight measurement under amplitude damping noise.
    The x-axis represents the damping intensities $\gamma$.
    Different colors indicate different circuit depths.
    The shaded area denotes the standard deviation over $100$ independent trials.
    The inset shows the spatial distribution of the variance of gradient at $\gamma=0.25$ and circuit depth $15$.
  }
  \label{fig:variance_estimation}
\end{figure}

As a demonstration, we employ 2MC-OBPPP to analyze circuits subjected to amplitude damping noise, $\mathcal{N}_{ad}\left(\begin{pmatrix}
  \rho_{00} & \rho_{01} \\
  \rho_{10} & \rho_{11}
\end{pmatrix}\right)=\begin{pmatrix}
  \rho_{00}+\gamma\rho_{11} & \sqrt{1-\gamma}\rho_{01} \\
  \sqrt{1-\gamma}\rho_{10} & (1-\gamma)\rho_{11}
\end{pmatrix}$, where $\gamma$ is the damping parameter. 
The circuits follow the same hypothetical chip architecture shown in Fig.~\ref{fig:chip435}(a).
The results in Fig.~\ref{fig:variance_estimation} show that the variance exhibits the expected significant decay with increasing circuit depth.
Remarkably, as the damping strength $\gamma$ increases, the variance exhibits a non-monotonic behavior: it initially decreases, then increases, and eventually decreases again.
The intermediate rise reflects the noise-induced shallow-circuit effect~\cite{mele2024noise}, where moderate amplitude damping reduces the effective circuit depth, leading to a shallower overall evolution.
Consequently, the variance increases, indicating that the barren plateau is temporarily mitigated.
This observation suggests the existence of a noise-induced ``sweet spot", where moderate amplitude damping can partially alleviate the barren plateau.
However, as $\gamma$ increases further, excessive decoherence overwhelms the evolution dynamics, causing the variance to decrease once more.
These results highlight a nuanced role of noise in training processes, revealing that certain types and magnitudes of noise may transiently improve trainability before ultimately degrading it.

Beyond providing a circuit-level trainability diagnostic, our method can also pinpoint gates that are effectively untrainable (Fig.~\ref{fig:variance_estimation} inset), in the sense that the variance of their gradients vanishes.
Such gate-resolved information can inform circuit design and training strategies, e.g., by adjusting the architecture to avoid these gates or by removing them to simplify the circuit.

In addition to the estimation method, theoretically, the Pauli path provides a direct and intuitive perspective on the emergence of the noise-induced barren plateau phenomenon~\cite{wang2021noise}.
This reveals how noise suppresses nontrivial Pauli paths, leading to vanishing gradients.
More details can be found in \sm.

\section{Expressibility Estimation}
The expressibility of a circuit characterizes its ability to explore the Hilbert space.
A common approach for assessing expressibility involves comparing the circuit-induced ensemble with the Haar ensemble~\cite{sim2019expressibility,holmes2022connecting,thanasilp2024exponential,Yu_2024,nakaji2021expressibility}.
Focusing on the 2-th moment, the deviation is characterized by:
\begin{small}
\begin{equation}\label{eq:deviation_expressibility}
  \mathcal{A}_{\mathcal{C}}^{2}(\cdot)=\mathbb{E}_{U} U^{\otimes 2}(\cdot)^{\otimes 2}\left(U^{\dagger}\right)^{\otimes 2}-\mathbb{E}_{\bm{\theta}} \mathcal{C}(\bm{\theta})^{\otimes 2}(\cdot)^{\otimes 2}\left(\mathcal{C}(\bm{\theta})^{\dagger}\right)^{\otimes 2},
\end{equation}
\end{small}
where $U$ is drawn from the Haar measure. 
We quantify the deviation by the Hilbert-Schmidt~(HS) norm: $\mathcal{M}_2^2=\norm{\mathcal{A}_{\mathcal{C}}^{2}(\ketbra{0}^{\otimes n})}_2^2$, with smaller values indicating higher expressibility (i.e., closer to a random ensemble).

While previous studies have primarily addressed noiseless setting, we consider an analogous metric for the noisy setting:
\begin{small}
\begin{equation}\label{eq:2_moment_expressibility}
  \widetilde{\mathcal{M}}_2^2=\norm{\mathcal{A}_{\widetilde{\mathcal{C}}}^{2}(\ketbra{0}^{\otimes n})}_2^2,
\end{equation}
\end{small}
where $\mathcal{A}_{\widetilde{\mathcal{C}}}^{2}(\cdot)$ is obtained from Eq.~\eqref{eq:deviation_expressibility} by replacing $\mathcal{C}(\bm{\theta})$ with the noisy circuit $\widetilde{\mathcal{C}}(\bm{\theta})$. 
Unlike the robustness and trainability estimators, which only require channels in the circuits to be PCS1, computing $\widetilde{\mathcal{M}}_2^2$ via 2MC-OBPPP additionally requires that their adjoints are PCS1.
Nevertheless, when this condition is not met, we provide a computable lower bound, denoted $\widetilde{\mathcal{M}}_{2\leq}^2\leq \widetilde{\mathcal{M}}_2^2$, which serves as a proxy for noisy expressibility.
Further details are provided in \sm.

Fig.~\ref{fig:variance_estimation} shows that moderate damping~($\gamma\in[0.2,0.3]$) can alleviate the barren plateau.
To provide a complementary perspective, we investigate how amplitude damping influences expressibility in Fig.~\ref{fig:expressibility_estimation}, where $\widetilde{\mathcal{M}}_{2\leq}^2$ increases with $\gamma$ and can exceed its noiseless counterpart $\mathcal{M}_2^2$ when $\gamma>0.2$.
Taken together, these results reveal a fundamental trade-off: whilst amplitude damping can improve trainability in certain scenarios, it simultaneously drives the circuit ensemble away from a unitary 2-design, thereby degrading its expressibility.

In addition to the HS norm, we derive a computable lower bound $\norm{\mathcal{A}_{\widetilde{\mathcal{C}}}^{2}(\ketbra{0}^{\otimes n})}_1$ that can explain the barren plateau phenomenon induced by strong expressibility, as previously established in Ref.~\cite{holmes2022connecting}.
It offers a new perspective on the relationship between expressibility and trainability.
More details can be found in \sm.

\begin{figure}[tbp]
  \centering
  \includegraphics[width=0.85\columnwidth]{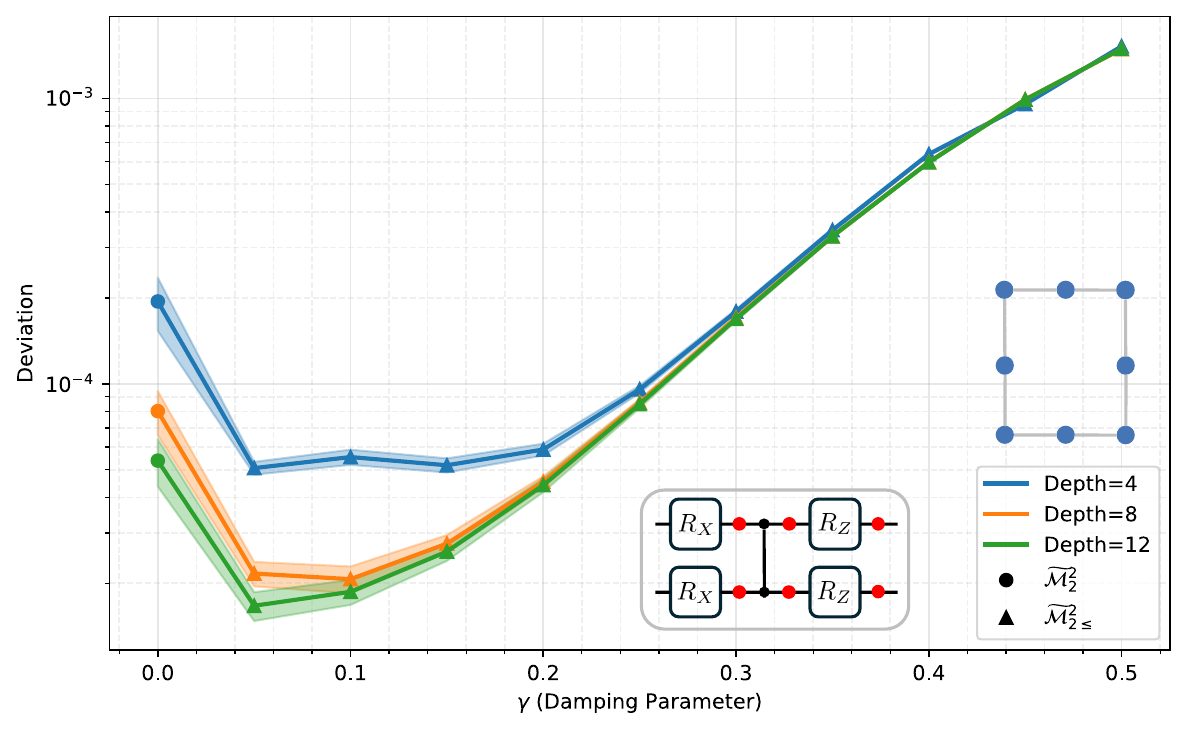}
  \caption{
    Deviation from $2$-design for circuits with 8 qubits under amplitude damping noise.
    The circuits comprise $R_X$, $R_Z$ and nearest-neighbor $CZ$ gates on a ring, which can be viewed as a reduced unit used in Fig.~\ref{fig:variance_estimation}.
    The x-axis represents the damping intensities $\gamma$; different colors indicate different circuit depths. 
    Markers $\bullet$ and $\blacktriangle$ denote the noiseless metric $\mathcal{M}_2^2$ and the noisy lower bound $\widetilde{\mathcal{M}}_{2\leq}^2$, respectively.
    The shaded area denotes the standard deviation over $100$ independent trials.
  }
  \label{fig:expressibility_estimation}
\end{figure}

\section{Discussions}
We introduced the 2MC-OBPPP framework to estimate several key metrics for quantum circuits efficiently: the noise-induced MSE, the variance of gradient, the deviation from a random unitary ensemble, and establish a noise bottleneck analysis.
These results have immediate practical utility. 
The bottleneck map provides a guide for allocating error mitigation or pulse-shaping resources, replacing an indiscriminate noise-reduction strategy with a focused one. 
Fast classical benchmarking enables the prescreening of ansatzes and depths before hardware deployment, thereby shortening the design-test cycle of quantum algorithms.

Future work could examine how the introduction of MMPFs affects circuit performance, particularly under realistic noise models.
Ultimately, integrating this framework into compilation pipelines, extending it to time-dependent noise and more diverse mid-circuit measurements with feed-forward, will establish a noise-aware toolkit that advances the co-design of hardware and algorithms toward practical quantum advantage.

\begin{acknowledgments}
We thank Song Cheng, Weikang Li, Ziwen Liu, Weixiao Sun, Fuchuan Wei and Ruiqi Zhang for valuable discussions.
YS, CZ, ZW and ZL were supported by Beijing Natural Science Foundation Key Program (Grant No. Z220002).
YS and ZL were supported by BMSTC and ACZSP (Grant No. Z221100002722017). ZW was supported by Beijing Science and Technology Planning Project (Grant No. Z25110100810000).
ZC and ZW were also supported by the National Natural Science Foundation of China (Grant Nos. 62272259 and 62332009).
ZL was supported by NKPs (Grant No. 2020YFA0713000).\\
\end{acknowledgments}

\bibliography{main}

\clearpage

\onecolumngrid 
\section*{Supplemental Material}
\appendix
\tableofcontents

\clearpage

\section{Pauli path integral formalism}
\label{sec:pauli_path_integral}

\subsection{Pauli path integral for ideal quantum circuits}
In the current NISQ era, parameterized quantum circuits~(PQC) are widely used in many near-term algorithms~\cite{kandala2017hardware,farhi2014quantum}.
A typical $n$-qubit PQC, denoted as $\mathcal{C}(\bm{\theta})$, consists of a sequence of Pauli rotation gates and non-parameterized Clifford gates.
The Pauli rotation gates are represented as $e^{-i\frac{\theta}{2} P}$, where $P\in\{\mathbb{I},X,Y,Z\}^{\otimes n}$. The Clifford gates are the unitary operators that normalize the Pauli group $Cl_n \coloneqq \{C\in U(2^n) \mid C\mathcal{P}_nC^\dagger=\mathcal{P}_n\}$, where $\mathcal{P}_n$ is the Pauli group on $n$ qubits. Any unitary operator $U\in Cl_n$ is equivalent to a circuit generated using Hadamard, CNOT, and phase gates $S$~\cite{gottesman2016surviving}.

Without loss of generality, we assume that PQCs follow the form:
\begin{equation}\label{eq:parameterized_circuit}
  \mathcal{C}(\bm{\theta})=U_{N_g}(\theta_{N_g})  \cdots {U}_1(\theta_1),
\end{equation}
where $\bm{\theta}=(\theta_1,\cdots,\theta_{N_g})$ are rotation angles and ${N_g}$ is the number of parameters. Each unitary ${U}_i(\theta_i):=\exp{-i \frac{\theta_i}{2} P_i}C_i $ comprises a Clifford operator $C_i$ and a rotation $\exp{-i \frac{\theta_i}{2} P_i}$ on Pauli operator $P_i\in\{\mathbb{I},X,Y,Z\}^{\otimes n}$ with angle $\theta_i$.

In this context, the quantum circuit $\mathcal{C}(\bm{\theta})$ is applied to an initial state $\rho$, and what we are interested in is the expectation value of an observable $O$, given by
\begin{equation}\label{eq:expectation_value_noiseless}
  \langle O \rangle = \tr{O \mathcal{C}(\bm{\theta})\rho \mathcal{C}(\bm{\theta})^\dagger}.
\end{equation}
Without loss of generality, we assume that the observable is traceless, i.e., $\tr{O}=0$, otherwise we can replace $O$ with $O-\frac{\tr{O}}{\tr{\mathbb{I}^{\otimes n}}} \mathbb{I}^{\otimes n}$.

A Pauli path is a sequence $\vec{s}=(s_0,\cdots,s_{N_g})\in \bm{P}^{{N_g}+1}_n$, where $\bm{P}_n=\{\sfrac{\mathbb{I}}{\sqrt{2}},\sfrac{X}{\sqrt{2}},\sfrac{Y}{\sqrt{2}},\sfrac{Z}{\sqrt{2}}\}^{\otimes n} $ represents the set of all normalized $n$-qubit Pauli operators. 
Using the fact that the normalized $n$-qubit Pauli group $\bm{P}_n$ forms a basis of the $2^n \times 2^n$ Hermitian operator space, we can express any observable $O$ as a linear combination of Pauli operators:
\begin{equation}
  O=\sum_{s\in \bm{P}_n} \tr{O s} s,
\end{equation}
Iteratively applying the Pauli operator decomposition, we can express the expectation value as the sum of contributions from all Pauli paths:
\begin{equation}\label{eq:pauli_path_integral_noiseless}
  \begin{aligned}
  \langle O \rangle &= \sum_{s_{N_g}} \tr{O s_{N_g}} \tr{s_{N_g} \mathcal{C}(\bm{\theta})\rho \mathcal{C}(\bm{\theta})^\dagger}\\
  &= \sum_{s_{N_g}} \tr{O s_{N_g}} \tr{s_{N_g} U_{N_g}(\theta_{N_g}) \cdots U_1(\theta_1) \rho U_1^\dagger(\theta_1) \cdots U_{N_g}^\dagger(\theta_{N_g})}\\
  &= \sum_{s_{N_g},s_{{N_g}-1}} \tr{O s_{N_g}} \tr{s_{N_g} U_{N_g}(\theta_{N_g}) s_{{N_g}-1} U_{N_g}^\dagger(\theta_{N_g})} \tr{s_{{N_g}-1} U_{{N_g}-1}(\theta_{{N_g}-1}) \cdots U_1(\theta_1) \rho U_1^\dagger(\theta_1) \cdots U_{{N_g}-1}^\dagger(\theta_{{N_g}-1})}\\
  &\vdots\\
  &= \sum_{s_{N_g},s_{{N_g}-1},\cdots,s_0} \tr{O s_{N_g}} \tr{s_{N_g} U_{N_g}(\theta_{N_g}) s_{{N_g}-1} U_{N_g}^\dagger(\theta_{N_g})} \cdots \tr{s_1 U_1(\theta_1) s_0 U_1^\dagger(\theta_1)} \tr{s_0 \rho}\\
  &= \sum_{s_{N_g},s_{{N_g}-1},\cdots,s_0} \tr{O s_{N_g}} \tr{s_0 \rho} \prod_{i=1}^{{N_g}} \tr{s_i U_i(\theta_i) s_{i-1} U_i^\dagger(\theta_i)} \\
  &=\sum_{\vec{s}} f(\vec{s},\bm{\theta},O,\rho),
  \end{aligned}
\end{equation}
where 
\begin{equation}\label{eq:contribution_function_noiseless}
  f(\vec{s},\bm{\theta},O,\rho)=\tr{O s_{N_g}} \tr{s_0 \rho} \prod_{i=1}^{{N_g}} \tr{s_i U_i(\theta_i) s_{i-1} U_i^\dagger(\theta_i)}
\end{equation}
denotes the contribution of a specfic Pauli path $\vec{s}=(s_0,\cdots,s_{N_g})$ in the expectation value $\langle O \rangle$.

The evolution of the Pauli operator $s$ under the operator $U_i(\theta_i)=\exp{-i \frac{\theta_i}{2} P_i}C_i$ is given by:
\begin{equation}
  \begin{aligned}
    U_i(\theta_i) s_{i-1} U_i^\dagger(\theta_i) &= \exp{-i \frac{\theta_i}{2} P_i} \underbrace{C_i s_{i-1} C_i^\dagger}_Q \exp{i \frac{\theta_i}{2} P_i}\\
    &= \begin{cases}
      Q, & [P_i, Q] = 0, \\
      \cos(\theta_i) Q - i \sin(\theta_i) P_i Q & \{P_i, Q\} = 0,
      \end{cases}
  \end{aligned}
\end{equation}
where $Q=C_i s_{i-1} C_i^\dagger$ is the transformed Pauli operator after applying the Clifford gate $C_i$ to $s_{i-1}$.
The above equation shows that the factor $\tr{s_i U_i(\theta_i) s_{i-1} U_i^\dagger(\theta_i)}$ in $f(\vec{s},\bm{\theta},O,\rho)$ can be expressed as:
\begin{equation}\label{eq:gate_term_in_f}
  \begin{aligned}
    \tr{s_i U_i(\theta_i) s_{i-1} U_i^\dagger(\theta_i)} &= \begin{cases}
      \tr{s_i Q}, & [P_i, Q] = 0, \\
      \cos(\theta_i) \tr{s_i Q} - i \sin(\theta_i) \tr{s_i P_i Q}, & \{P_i, Q\} = 0.
      \end{cases}\\
      &= \begin{cases}
        0, & s_i\neq Q \quad \mathrm{ and } \quad s_i\neq iQP_i  \\
        1, & [P_i, Q] = 0, s_i=Q \\
        \cos(\theta_i) , & \{P_i, Q\} = 0, s_i=Q \\
        \sin(\theta_i) , & \{P_i, Q\} = 0, s_i=iQP_i. \\
      \end{cases}
  \end{aligned}
\end{equation}
Here, we ignore the sign $\pm$ in front of the Pauli operator $s_i$ in the above equation. 

Because of $Q=C_i s_{i-1} C_i^\dagger$, and $C_i$ is Clifford operator, the Pauli operator $Q$ is also a normalized Pauli operator in $\bm{P}_n$ except for the sign $\pm$. 
When $-Q\in \bm{P}_n$ and $s_i$ takes the form of $s_i=-Q$, the equation $\tr{s_i U_i(\theta_i) s_{i-1} U_i^\dagger(\theta_i)}$ equals to $-1$ for the case of $[P_i, Q] = 0$ and $\tr{s_i U_i(\theta_i) s_{i-1} U_i^\dagger(\theta_i)}$ equals to $-\cos(\theta_i)$ for the case of $\{P_i, Q\} = 0$, respectively.
By the property of the Pauli operator $-iXYZ=I$, we can see that each product of a anti-commuting Pauli operator pair equals to a Pauli operator with additional imaginary factor $\pm i$, e.g., $XY=iZ$, $XZ=-iY$, $YZ=iX$. Therefore, when $\{P_i, Q\} = 0$, the operator $iQ P_i$ is also a normalized Pauli operator in $\bm{P}_n$ except for the sign $\pm$.
In summary, the contribution function $f(\vec{s},\bm{\theta},O,\rho)$ is a product of factors, each of which is either $0$, $1$, $\pm \cos(\theta_i)$, or $\pm \sin(\theta_i)$.

\subsection{Discrete rotation angles}
Specifically, if the rotation angle $\theta_i$ takes the value in $\{0,\frac{\pi}{2},\pi,\frac{3\pi}{2}\}$, the Pauli rotation $\exp{-i \frac{\theta_i}{2} P_i}$ falls into the set of Clifford gates, and the factor $\tr{s_i U_i(\theta_i) s_{i-1} U_i^\dagger(\theta_i)}$ in Eq.~\eqref{eq:gate_term_in_f} can be expressed as:
\begin{equation}\label{eq:gate_term_in_f_discrete}
  \begin{aligned}
    &\tr{s_i U_i(\theta_i) s_{i-1} U_i^\dagger(\theta_i)} = \begin{cases}
      \tr{s_i Q}, & [P_i, Q] = 0 \\
      \cos(\theta_i) \tr{s_i Q} - i \sin(\theta_i) \tr{s_i P_i Q}, & \{P_i, Q\} = 0
      \end{cases}\\
      &= \begin{cases}
        0, & [P_i, Q] = 0, s_i\neq Q  \\
        1, & [P_i, Q] = 0, s_i=Q \\
        \pm 1 , & \{P_i, Q\} = 0, s_i=Q \\
      \end{cases}
      \mathrm{when} \quad \theta_i\in\{0,\pi\} \quad \mathrm{or} 
      = \begin{cases}
        0, & [P_i, Q] = 0, s_i\neq Q  \\
        1, & [P_i, Q] = 0, s_i=Q \\
        \pm 1 , & \{P_i, Q\} = 0, s_i=iQP_i \\
      \end{cases}
      \mathrm{when} \quad \theta_i\in\{\frac{\pi}{2},\frac{3\pi}{2}\},
  \end{aligned}
\end{equation}
where $Q=C_i s_{i-1} C_i^\dagger$ and similarly, we ignore the sign $\pm$ in front of the Pauli operator $s_i$ in the above equation. 
When the rotation angle $\theta_i$ take the value in $\{0,\frac{\pi}{2},\pi,\frac{3\pi}{2}\}$, for any given $s_{i-1}$, there is only one Pauli operator $s_i$ that satisfies the equation $\tr{s_i U_i(\theta_i) s_{i-1} U_i^\dagger(\theta_i)}\neq 0$. 
In return, for any given $s_i$, there is only one Pauli operator $s_{i-1}$ that satisfies the equation $\tr{s_i U_i(\theta_i) s_{i-1} U_i^\dagger(\theta_i)}\neq 0$. Therefore, we have the following results.

For the case that rotation angles $\{\theta_i\}$ in the circuit $\mathcal{C}(\bm{\theta})$ take the value in $\{0,\frac{\pi}{2},\pi,\frac{3\pi}{2}\}$, the Pauli path $\vec{s}$ with non-zero contribution $f(\vec{s},\bm{\theta},O,\rho)\neq 0$, is uniquely determined by final Pauli operator $s_{N_g}$. 
Because of the factor $\tr{O s_{N_g}}$ in Eq.~\eqref{eq:contribution_function_noiseless}, to avoid the zero contribution, the final Pauli operator $s_{N_g}$ must be a Pauli operator that is contained in the observable $O$.
Therefore, the expectation value $\langle O \rangle$ can be expressed as:
\begin{equation}\label{eq:pauli_path_integral_discrete}
  \begin{aligned}
  \langle O \rangle &= \sum_{\vec{s}} f(\vec{s},\bm{\theta},O,\rho)\\
  &= \sum_{s_{N_g}\in \{\sigma\mid \tr{\sigma O}\neq 0\}} f(\vec{s}_{\bm{\theta}}(s_{N_g}),\bm{\theta},O,\rho),
  \end{aligned}
\end{equation}
where $\vec{s}_{\bm{\theta}}(s_{N_g})$ is the Pauli path that is uniquely determined by the final Pauli operator $s_{N_g}$ with non-zero contribution $f(\vec{s}_{\bm{\theta}}(s_{N_g}),\bm{\theta},O,\rho)\neq 0$, and $\{\sigma\mid \tr{\sigma O}\neq 0\}$ is the set of Pauli operators that are contained in the observable $O$.

\section{PCS1 channel model}

\subsection{PCS1 channel}

In realistic quantum circuits, the quantum gates are not perfect and the quantum states are subject to noise. These noise can be described by quantum channels $\mathcal{N}$, which are completely positive trace-preserving~(CPTP) maps that transforms an input quantum state $\rho$ into an output state $\mathcal{N}(\rho)$.

In our discussion, we focus on the case that the quantum noise $\mathcal{E}$ is a Pauli column-wise sum at most one channel~(PCS1):
\begin{definition}{(Pauli column-wise sum at most one channel)}\label{def:PCS1}
  Let $\mathcal{S}_{\mathcal{E}}$ be the Pauli transfer matrix of a channel $\mathcal{E}$, with entries $\left( \mathcal{S}_\mathcal{E}\right)_{i,j}=\tr{\mathcal{E}(\sigma_i) \sigma_j}$, where $\{\sigma_i\}$ is the normalized Pauli group $\bm{P}_n$.
  We call $\mathcal{E}$ a PCS1 channel if every column has $l_1$-norm at most 1, i.e., 
  \begin{equation}
    \sum_{i} \abs{\left(\mathcal{S}_\mathcal{E}\right)_{i,j}}\leq 1, \quad \forall j.
  \end{equation}
\end{definition}

In our discussion, we assume that each PCS1 channel is $\order{1}$-local, i.e., the each quantum channel acts on a constant number of qubits at most.
This does not mean that only a constant number of qubits are subject to noise, but rather that the overall noise process to all qubits can be expressed as a tensor product of channels, each acting on at most a constant number of qubits.
This assumption is reasonable in many practical scenarios, and consistent with the locality of the physical interactions that lead to decoherence.
And we don't limit the PCS1 channel to act on adjacent qubits, i.e., the PCS1 channel can act on any qubits in the quantum circuit, as long as the total number of qubits that acts on is a constant, so the long range cross-talks between gates are also contained in our discussion.

Without loss of generality, we assume the PCS1 channels are following the quantum gates in quantum circuits, as shown in Fig~\ref{fig:noisy_gate}.
For the case that the PCS1 channels aren't following the gates, we can assume that there are identity gates before the PCS1 channels. 
We denote the quantum circuit with PCS1 channels as $\widetilde{\mathcal{C}}(\bm{\theta})$, which can be expressed as:
\begin{equation}\label{eq:noisy_circuit}
  \widetilde{\mathcal{C}}(\bm{\theta})(\rho)=\mathcal{N}_{N_g}\circ U_{N_g}(\theta_{N_g}) \circ \cdots \circ \mathcal{N}_1\circ U_1(\theta_1)(\rho),
\end{equation}
where $\mathcal{N}_i$ is a PCS1 channel and $U_i(\theta_i)(\cdot)$ is the conjugate action of the unitary operator $U_i(\theta_i)$, i.e., $U_i(\theta_i)(\rho)=U_i(\theta_i)\rho U_i(\theta_i)^\dagger$.

\begin{figure}[htbp]
  \begin{quantikz}
    \lstick{} & \gate[2]{U} & \gate[wires=2,style={starburst,draw=red,line width=1pt,inner xsep=-4pt,inner ysep=-5pt}]{\mathcal{N}} & \qw \\
    \lstick{} & \qw & \qw & \qw \\
  \end{quantikz}
  \caption{The noisy gate $\widetilde{U}$: ideal gate $U$ followed by noise channel $\mathcal{N}$ acting on the output.}\label{fig:noisy_gate}
\end{figure}
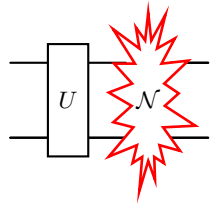

At this time, the noisy expectation value $\langle \widetilde{O} \rangle = \tr{O \widetilde{\mathcal{C}}(\bm{\theta})(\rho)}$, corresponding to Eq.~\eqref{eq:pauli_path_integral_noiseless}, we can express it by introducing an auxiliary Pauli path $\vec{\tau}$, as:
\begin{equation}\label{eq:pauli_path_integral_noisy}
  \begin{aligned}
  \langle \widetilde{O} \rangle = &\sum_{s_{N_g}} \tr{O s_{N_g}} \tr{s_{N_g} \widetilde{\mathcal{C}}(\bm{\theta})(\rho)}\\
  = &\sum_{s_{N_g}} \tr{O s_{N_g}} \tr{s_{N_g} \mathcal{N}_{N_g}\circ U_{N_g}(\theta_{N_g}) \circ \cdots \circ \mathcal{N}_1\circ U_1(\theta_1)(\rho)}\\
  =& \sum_{s_{N_g}} \tr{O s_{N_g}} \tr{ \mathcal{N}^\dagger_{N_g}(s_{N_g})\cdot U_{N_g}(\theta_{N_g}) \circ \cdots \circ \mathcal{N}_1\circ U_1(\theta_1)(\rho)}\\
  =& \sum_{s_{N_g},\tau_{N_g}} \tr{O s_{N_g}} \tr{ \mathcal{N}^\dagger_{N_g}(s_{N_g}) \tau_{N_g}}\tr{\tau_{N_g} U_{N_g}(\theta_{N_g}) \circ \cdots \circ \mathcal{N}_1\circ U_1(\theta_1)(\rho)}\\
  =& \sum_{s_{N_g},\tau_{N_g}} \tr{O s_{N_g}} \tr{s_{N_g} \mathcal{N}_{N_g}(\tau_{N_g})} \tr{\tau_{N_g} U_{N_g}(\theta_{N_g}) s_{N_g-1} U_{N_g}^\dagger(\theta_{N_g})} \\
  & \qquad \quad \tr{s_{N_g-1} \mathcal{N}_{N_g-1}\circ U_{N_g-1}(\theta_{N_g-1}) \circ \cdots \circ \mathcal{N}_1\circ U_1(\theta_1)(\rho)}\\
  \vdots\\
  =& \sum_{\tau_{N_g},\tau_{{N_g}-1},\cdots,\tau_1} \sum_{s_{N_g},s_{{N_g}-1},\cdots,s_0}  \tr{O s_{N_g}} \tr{s_0 \rho} \prod_{i=1}^{{N_g}} \tr{\tau_i U_i(\theta_i) s_{i-1} U_i^\dagger(\theta_i)} \tr{s_i \mathcal{N}_i(\tau_i)} \\
  =&\sum_{\vec{s}} \sum_{\vec{\tau}} \widetilde{f}^{(\vec{\tau})}(\vec{s},\bm{\theta},O,\rho) \\
  =&\sum_{\vec{s}} \widetilde{f}(\vec{s},\bm{\theta},O,\rho),
  \end{aligned}
\end{equation}
where $\vec{\tau}=(\tau_1,\cdots,\tau_{N_g}), \quad \tau_i\in \bm{P}_n$ and $\vec{s}=(s_0,\cdots,s_{N_g}), \quad s_i\in \bm{P}_n$ are Pauli paths.
$\widetilde{f}(\vec{s},\bm{\theta},O,\rho)$ denotes the contribution of a Pauli path $\vec{s}$:
\begin{equation}\label{eq:contribution_function_noise}
  \widetilde{f}(\vec{s},\bm{\theta},O,\rho) = \sum_{\vec{\tau}} \widetilde{f}^{(\vec{\tau})}(\vec{s},\bm{\theta},O,\rho),
\end{equation}
and $\widetilde{f}^{(\vec{\tau})}(\vec{s},\bm{\theta},O,\rho)$ is the contribution of $\vec{\tau}$ in $\widetilde{f}^{(\vec{\tau})}(\vec{s},\bm{\theta},O,\rho)$:
\begin{equation}\label{eq:contribution_function_noise_tau}
   \widetilde{f}^{(\vec{\tau})}(\vec{s},\bm{\theta},O,\rho)= \tr{O s_{N_g}} \tr{s_0 \rho} \prod_{i=1}^{{N_g}} \tr{\tau_i U_i(\theta_i) s_{i-1} U_i^\dagger(\theta_i)} \tr{s_i \mathcal{N}_i(\tau_i)}
\end{equation}

\subsection{Pauli noise}
For the case of Pauli type noises, which is a common type of noise in quantum circuits and can be described by a quantum channel $\mathcal{N}$:
\begin{equation}
  \mathcal{N}(\rho)=\sum_{i} p_i \sigma_i \rho \sigma_i^\dagger,
\end{equation}
where $\sigma_i$ is a Pauli operator, $p_i$ is the corresponding probability and $\sum_{i} p_i=1$.

Because of the (anti-)commuting property of the Pauli operators, the Pauli noise channel $\mathcal{N}$ acting on the Pauli operator $s$ can be expressed as:
\begin{equation}\label{eq:pauli_noise_channel}
  \mathcal{N}(s)=\sum_{i} p_i \sigma_i s \sigma_i^\dagger=\sum_{i} (-1)^{\mathbf{1}_{ac}(s,\sigma_i)} p_i s,
\end{equation}
where $\mathbf{1}_{ac}(s,\sigma_i)$ is the indicator function that equals to $1$ if $s$ and $\sigma_i$ anti-commute, otherwise it equals to $0$.
Thus there is $\mathcal{N}(s)\propto s$ for Pauli noise channel $\mathcal{N}$, and because of $\sum_i p_i=1$, we have $\abs{\sum_{i} (-1)^{\mathbf{1}_{ac}(s,\sigma_i)} p_i}\leq 1$.
Thus the Pauli noise channel $\mathcal{N}$ satisfies the PCS1 condition in Definition~\ref{def:PCS1}.

Because of $\mathcal{N}(s)\propto s$ for Pauli noise, we have $\tr{s_j \mathcal{N}(\tau)}\neq 0$ only when $\tau=s_j$. 
Therefore $\mathbb{P}(\tau|\vec{s}) = \delta_{\tau,s_j}$ and we have:
\begin{equation}\label{eq:pauli_noise_effect_factor}
  \widetilde{f}(\vec{s},\bm{\theta},O,\rho)= \tr{s_j \mathcal{N}(s_j)} f^j_{s_j}(\vec{s},\bm{\theta},O,\rho)=g_{\mathcal{N}}(\vec{s})f(\vec{s},\bm{\theta},O,\rho),
\end{equation}
where $g_{\mathcal{N}}(\vec{s})$ is the noise effect factor of $\mathcal{N}$ on the Pauli path $\vec{s}$ with $\abs{g_{\mathcal{N}}(\vec{s})}\leq 1$, in this case $g_{\mathcal{N}}(\vec{s})=\tr{s_i \mathcal{N}(s_i)}$. 

If except for $\mathcal{N}$ in the $i$-layer, there is other noise channel $\mathcal{N'}$ in the $j$-layer, similarly, we can express the noisy expectation value $\langle \widetilde{O} \rangle$ as:
\begin{equation}
  \begin{aligned}
    \langle \widetilde{O} \rangle &= \sum_{s_{N_g}} \tr{\widetilde{O} s_{N_g}} \tr{s_{N_g} \widetilde{\mathcal{C}}(\bm{\theta})(\rho)}\\
    &= \sum_{s_{N_g},s_{{N_g}-1},\cdots,s_0} \tr{s_i \mathcal{N}(s_i)}\tr{s_j \mathcal{N}(s_j)} \tr{O s_{N_g}} \tr{s_0 \rho} \prod_{i=1}^{{N_g}} \tr{s_i U_i(\theta_i) s_{i-1} U_i^\dagger(\theta_i)} \\
    &=\sum_{\vec{s}} g_{\mathcal{N}}g_{\mathcal{N'}}(\vec{s})f(\vec{s},\bm{\theta},O,\rho)\\
    &=\sum_{\vec{s}} \widetilde{f}(\vec{s},\bm{\theta},O,\rho).
  \end{aligned}
\end{equation}
Thus we can see that when there are multiple noise channels in the quantum circuit, the noise effect factor is the product of the noise effect factors of each noise channel, i.e.:
\begin{equation}\label{eq:noise_effect_factor_product}
  g_{\mathcal{N}\cup \mathcal{N'}}(\vec{s})=g_{\mathcal{N}}(\vec{s})g_{\mathcal{N'}}(\vec{s}).
\end{equation}

\subsection{Amplitude damping noise}
For amplitude damping channel, this is other common type of noise but is non-unital, which can be described by a quantum channel $\mathcal{N}^{(amp)}$:
\begin{equation}
  \mathcal{N}^{(amp)}(\rho)= K_0 \rho K_0^\dagger + K_1 \rho K_1^\dagger,
\end{equation}
where $K_0=\begin{pmatrix}
  1 & 0 \\
  0 & \sqrt{1-\gamma}
\end{pmatrix}$ and $K_1=\begin{pmatrix}
  0 & \sqrt{\gamma} \\
  0 & 0
\end{pmatrix}$, $\gamma$ is the probability of the amplitude damping channel.
The amplitude damping channel $\mathcal{N}^{(amp)}$ acting on a density matrix $\rho=\begin{pmatrix}
  \rho_{00} & \rho_{01} \\
  \rho_{10} & \rho_{11}
\end{pmatrix}$ can be expressed as:
\begin{equation}
  \mathcal{N}^{(amp)}(\rho)= \begin{pmatrix}
    \rho_{00}+\gamma\rho_{11} & \sqrt{1-\gamma}\rho_{01} \\
    \sqrt{1-\gamma}\rho_{10} & (1-\gamma)\rho_{11}
  \end{pmatrix} .
\end{equation}

The Pauli transform matrix $\left(\mathcal{S}_{\mathcal{N}^{(amp)}}\right)_{i,j}=\tr{\mathcal{N}^{(amp)}(\sigma_i) \sigma_j}$ of the amplitude damping channel $\mathcal{N}$ can be expressed as:
\begin{equation}\label{eq:amplitude_damping_ptm}
  \mathcal{S}_{\mathcal{N}^{(amp)}}=\begin{pmatrix}
    1 & 0 & 0 & \gamma \\
    0 & \sqrt{1-\gamma} & 0 & 0 \\
    0 & 0 & \sqrt{1-\gamma} & 0 \\
    0 & 0 & 0 & (1-\gamma)
  \end{pmatrix},
\end{equation}
where $\sigma_0=\mathbb{I}$, $\sigma_1=X$, $\sigma_2=Y$, and $\sigma_3=Z$.
The amplitude damping channel $\mathcal{N}^{(amp)}$ satisfies the PCS1 condition in Definition~\ref{def:PCS1} because the 1-norm of each column is no more than $1$.

\subsection{Thermal Relaxation process}
The thermal relaxation process is a common type of noise in quantum circuits, which can be described by a quantum channel $\mathcal{N}^{(thermal)}$.
The thermal relaxation process $\mathcal{N}^{(thermal)}$ acting on a density matrix $\rho=\begin{pmatrix}
  1-\rho_{11} & \rho_{01} \\
  \bar{\rho}_{01} & \rho_{11}
\end{pmatrix}$ can be expressed as:
\begin{equation}
  \mathcal{N}^{(thermal)}(\rho)= \begin{bmatrix}
    1-\rho_{11}e^{-\frac{t}{T_1}} & \rho_{01}e^{-\frac{t}{2T_1}-\frac{t}{T_\phi}} \\
    \bar{\rho}_{01}e^{-\frac{t}{2T_1}-\frac{t}{T_\phi}} & \rho_{11}e^{-\frac{t}{T_1}}
  \end{bmatrix},
\end{equation}
where $T_1$ and $T_2$ denote the energy relaxation time and the dephasing time, respectively, satisfying $T_2\leq 2T_1$.
The parameter $t\geq 0$ represents the duration of the thermal relaxation process, and there is a relationship $\frac{1}{T_\phi}=\frac{1}{T_2}-\frac{1}{2T_1}$.

The Kraus operators of the thermal relaxation process $\mathcal{N}^{(thermal)}$ can be expressed as:
\begin{equation}
  \mathcal{N}^{(thermal)}(\rho)= K_0 \rho K_0^\dagger + K_1 \rho K_1^\dagger + K_2 \rho K_2^\dagger,
\end{equation}
where $K_0=\begin{pmatrix}
  1 & 0 \\
  0 & \sqrt{1-\lambda-\gamma}
\end{pmatrix}$, $K_1=\begin{pmatrix}
  0 & \sqrt{\gamma} \\
  0 & 0
\end{pmatrix}$ and $K_2=\begin{pmatrix}
  0 & 0 \\
  0 & \sqrt{\lambda}
\end{pmatrix}$.

The thermal parameters $\gamma$ and $\lambda$ satisfy the following relationship:
\begin{equation}
  \gamma=1-e^{-\frac{t}{T_1}}, \quad \lambda=e^{-\frac{t}{T_1}}-e^{-\frac{t}{2T_2}}.
\end{equation}

The Pauli transform matrix $\left(\mathcal{S}_{\mathcal{N}^{(thermal)}}\right)_{i,j}=\tr{\mathcal{N}^{(thermal)}(\sigma_i) \sigma_j}$ of the thermal relaxation process $\mathcal{N}^{(thermal)}$ can be expressed as:
\begin{equation}
  \mathcal{S}_{\mathcal{N}^{(thermal)}}=\begin{pmatrix}
    1 & 0 & 0 & \gamma \\
    0 & \sqrt{1-\lambda-\gamma} & 0 & 0 \\
    0 & 0 & \sqrt{1-\lambda-\gamma} & 0 \\
    0 & 0 & 0 & 1-\gamma
  \end{pmatrix},
\end{equation}
where $\sigma_0=\mathbb{I}$, $\sigma_1=X$, $\sigma_2=Y$, and $\sigma_3=Z$.
The thermal relaxation process $\mathcal{N}^{(thermal)}$ satisfies the PCS1 condition in Definition~\ref{def:PCS1} because the 1-norm of each column is no more than $1$.

\subsection{Measurement-based feedback control}
Except for the noise channels, there are other quantum channels that usually appear in quantum circuits, this includes Measurement-based feedback control.
The measurement-based feedback control is usually implemented by measuring the qubits and applying a unitary operation on the remaining qubits based on the measurement results, as shown in Fig~\ref{fig:measurement_based_feedback_control}.

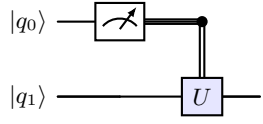
\begin{figure}[htbp]
  \begin{quantikz}
    \lstick{$\ket{q_0}$}  & \meter{} & \cwbend{1} \\
    \lstick{$\ket{q_1}$}  & \qw & \gate[style={fill=blue!10}]{U} & \qw 
  \end{quantikz}
  \caption{Measurement-based feedback control: A classical conditional operation triggered by measurement outcome, if the measurement result is $0$, the operation is identity, otherwise the operation is $U$ gate.}\label{fig:measurement_based_feedback_control}
\end{figure}

In our discussion, we assume that the measurement-based feedback control appears in the quantum circuit, and the classical conditional operation is a Pauli operator, i.e., if the measurement result is $0$, the operation is identity, otherwise the operation is a Pauli operator. 
Without loss of generality, the measured qubits are reset to $\frac{\mathbb{I}}{2}$.
For this case, the measurement-based feedback control can be expressed as:
\begin{equation}
  \mathcal{E}(\rho)=\frac{\mathbb{I}}{2} \otimes \left[\bra{0}\otimes \mathbb{I} \cdot \rho \cdot \ket{0}\otimes \mathbb{I} + \bra{1}\otimes P \cdot \rho \cdot \ket{1}\otimes P \right],
\end{equation}
where $\mathbb{I}$ is the identity operator on the remaining qubits and $P$ is a Pauli operator. For the measurement-based feedback control Pauli gate, we have the following lemma.
\begin{lemma}\label{lem:measurement_based_feedback_control}
  For the measurement-based feedback control, if the classical conditional operation is a Pauli operator. Then the the measurement-based feedback control process is a PCS1 quantum channel.
\end{lemma}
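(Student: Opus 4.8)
The plan is to compute the Pauli transfer matrix $\mathcal{S}_{\mathcal{E}}$ of the displayed channel $\mathcal{E}$ column by column and show that every column has $\ell_1$-norm at most $1$, which is exactly Definition~\ref{def:PCS1}. I would write an arbitrary normalized Pauli basis element as $\sigma = A \otimes B$, where $A\in\{\mathbb{I},X,Y,Z\}$ (up to the $1/\sqrt2$ normalization) acts on the measured qubit and $B$ is a normalized Pauli string on the remaining qubits, with the correction $P$ regarded as a Pauli on those qubits (padded by identities on spectators). Using $(\langle b|\otimes\mathbb{I})(A\otimes B)(|b\rangle\otimes\mathbb{I}) = \langle b|A|b\rangle\,B$ together with $PBP = \epsilon_{P,B}\,B$, where $\epsilon_{P,B}=+1$ if $[P,B]=0$ and $-1$ if $\{P,B\}=0$, one obtains
\begin{equation}
\mathcal{E}(A\otimes B)=\frac{\mathbb{I}}{2}\otimes\big(\langle 0|A|0\rangle+\epsilon_{P,B}\,\langle 1|A|1\rangle\big)B .
\end{equation}

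Next I would run the four-case analysis over $A$. Since $\langle b|X|b\rangle=\langle b|Y|b\rangle=0$, the image vanishes for $A\in\{X,Y\}$. For $A=\mathbb{I}$ the prefactor is $1+\epsilon_{P,B}$, so $\mathcal{E}$ sends the basis element to (a multiple of) the same operator $\mathbb{I}\otimes B$ when $[P,B]=0$ and to $0$ when $\{P,B\}=0$; for $A=Z$ the prefactor is $1-\epsilon_{P,B}$, so $\mathcal{E}$ sends it to $\mathbb{I}\otimes B$ when $\{P,B\}=0$ and to $0$ when $[P,B]=0$. Tracking the $1/\sqrt2$ normalization on the measured qubit, in every case $\mathcal{E}$ maps a normalized Pauli basis element either to $0$ or to a single normalized Pauli basis element with coefficient of modulus exactly $1$. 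Hence each column of $\mathcal{S}_{\mathcal{E}}$ has at most one nonzero entry, of modulus $1$, so $\sum_i\lvert(\mathcal{S}_{\mathcal{E}})_{i,j}\rvert\le 1$ for all $j$, which is the PCS1 condition.

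For completeness I would append a remark covering a general MMFF that measures several qubits and applies outcome-dependent Pauli corrections: such a channel factors as a composition of single-measured-qubit channels of the above type, and PCS1 is stable under composition, since $\mathcal{S}_{\mathcal{E}_2\circ\mathcal{E}_1}=\mathcal{S}_{\mathcal{E}_1}\mathcal{S}_{\mathcal{E}_2}$ and the maximal column $\ell_1$-norm is exactly the $\ell_1\to\ell_1$ operator norm, which is submultiplicative and therefore remains bounded by $1$.

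I do not expect a genuine obstacle: the claim follows from a short direct computation. The only point requiring care is the bookkeeping of tensor factors—which qubit is measured, which carry the Pauli correction $P$, and which are spectators—together with the normalization convention built into $\mathcal{S}_{\mathcal{E}}$, so that "the image is a single Pauli operator" is converted correctly into "the column has a single entry of absolute value one" rather than some other constant.
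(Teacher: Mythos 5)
Your proof is correct and follows essentially the same computation as the paper's: a case split on the measured-qubit Pauli component ($X,Y$ give zero; $\mathbb{I}$ and $Z$ survive according to whether $P$ commutes or anticommutes with the remaining-qubit factor $B$), showing the Pauli transfer matrix is a partial permutation with entries of modulus one. The only point worth making explicit is that your case analysis directly controls the \emph{rows} (each input has at most one image), whereas the PCS1 definition bounds the \emph{columns}; this follows here because for a fixed output $\tfrac{\mathbb{I}}{\sqrt{2}}\otimes B$ exactly one of the two candidate inputs $\tfrac{\mathbb{I}}{\sqrt{2}}\otimes B$ or $\tfrac{Z}{\sqrt{2}}\otimes B$ contributes (depending on whether $[P,B]=0$ or $\{P,B\}=0$), which is the direction the paper computes directly by evaluating $\tr{\mathcal{E}(\sigma)\,\tfrac{\mathbb{I}}{\sqrt{2}}\otimes s}$ for fixed $s$.
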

\begin{proof}
  For the measurement-based feedback control, the Pauli operator $P$ is applied to the remaining qubits based on the measurement result, and the process can be expressed as:
  \begin{equation}
    \mathcal{E}(\rho)=\frac{\mathbb{I}}{2} \otimes \left[\bra{0}\otimes \mathbb{I} \cdot \rho \cdot \ket{0}\otimes \mathbb{I} + \bra{1}\otimes P \cdot \rho \cdot \ket{1}\otimes P \right].
  \end{equation}
  If we show that the Pauli transform matrix $\left(\mathcal{S}_\mathcal{E}\right)_{i,j}=\tr{\mathcal{E}(\sigma_i) \sigma_j}$ has at most one non-zero entry in each column, i.e., for a given Pauli operator $s$, there is at most one Pauli operator $\sigma$ that satisfies the equation $\tr{\mathcal{E}(\sigma) s}\neq 0$.
  And if the non-zero entry is $\pm 1$, then the Pauli transform matrix $\left(\mathcal{S}_\mathcal{E}\right)_{i,j}$ satisfies the PCS1 condition in Definition~\ref{def:PCS1}.

  Because $\sigma$ is a Pauli operator, it can be decomposed as $\sigma = \sigma_1 \otimes \sigma_2$, where $\sigma_1$ acts on the measured qubit and $\sigma_2$ acts on the remaining qubits.
  On the other hand, the first qubit of $\mathcal{E}(\rho)$ is always $\frac{\mathbb{I}}{2}$, we can replace $s$ by $\frac{\mathbb{I}}{\sqrt{2}} \otimes s$.
  Therefore, we have:
  \begin{equation}
    \begin{aligned}
      \tr{\mathcal{E}(\sigma) \cdot \frac{\mathbb{I}}{\sqrt{2}} \otimes s} &= \frac{1}{\sqrt{2}} \left[\tr{\bra{0}\otimes \mathbb{I} \cdot \sigma \cdot \ket{0}\otimes \mathbb{I} s} + \tr{\bra{1}\otimes P \cdot \sigma \cdot \ket{1}\otimes P s}\right]\\
      &=\frac{1}{\sqrt{2}} \left[\tr{\bra{0}\sigma_1\ket{0}\otimes \sigma_2 s} + \tr{\bra{1}\sigma_1\ket{1}\otimes P\sigma_2 P \cdot s} \right]\\
      &=\begin{cases}
        \frac{1}{2}(\tr{\sigma_2 s}+\tr{P\sigma_2 P \cdot s}), & \sigma_1=\frac{\mathbb{I}}{\sqrt{2}} \\
        \frac{1}{2}(\tr{\sigma_2 s}-\tr{P\sigma_2 P \cdot s}), & \sigma_1=\frac{Z}{\sqrt{2}}\\
        0, & \sigma_1\neq \frac{\mathbb{I}}{\sqrt{2}},\frac{Z}{\sqrt{2}}. \\
      \end{cases}
    \end{aligned}
  \end{equation}
  For the case of $[ P,s]=0 $, we have $\tr{P\sigma_2 P \cdot s}=\tr{\sigma_2 s}$, it means that $\sigma=\mathbb{I}\otimes s$ otherwise $\tr{\mathcal{E}(\sigma) s}=0$. While for the case of $\{ P,s\}=0 $, we have $\tr{P\sigma_2 P \cdot s}=-\tr{\sigma_2 s}$, it means that $\sigma=Z\otimes s$ otherwise $\tr{\mathcal{E}(\sigma) s}=0$.
  Therefore, we can see that for a given Pauli operator $s$, there is at most one Pauli operator $\sigma$ that satisfies the equation $\tr{\mathcal{E}(\sigma) s}\neq 0$, and the non-zero entry is $\pm 1$. 
  Therefore, the Pauli transform matrix $\left(\mathcal{S}_\mathcal{E}\right)_{i,j}$ satisfies the PCS1 condition in Definition~\ref{def:PCS1}.
\end{proof}

\section{2-fold channel of circuits}\label{sec:2-design}
Let $R_P(\theta)=\exp{-i \frac{\theta}{2} P}=\cos(\frac{\theta}{2})\mathbb{I}-i\sin(\frac{\theta}{2})P$, it is not hard to see that the set
$\mathcal{G}_{P} := {\{ R_P(\theta) \}}_{\theta \in [0, 2\pi]}$
forms a group, which is a subgroup of the qubits unitary group
$\mathbb{U}(2^n)$.
Similar to unitary $t$-design, here we consider the $t$-design over the group $\mathcal{G}_{P}$, which we call
the \emph{quantum rotation $t$-design}.

\begin{definition}
  A set of  unitary matrices ${\left\{A_i\right\}}_{i=1}^K$ is called a \emph{quantum rotation $t$-design} with respect to
  $R_P(\theta)$, if
  \begin{align}\label{def:two_design}
    \frac{1}{K} \sum_{i=1}^K {\left(A_i \otimes A_i^{\dagger} \right)}^{\otimes t} =
    \frac{1}{2\pi} \int_{0}^{2\pi} {\left(R_P(\theta)
    \otimes R_P(-\theta) \right)}^{\otimes t} d\theta.
  \end{align}
\end{definition}

We now prove that the following gate set forms a quantum rotation $2$-design.
\begin{theorem}\label{thm:two_design}
  ${\left\{R_P(\theta)\right\}}_{\theta=0, \pi/2, \pi, 3\pi/2}$ forms a
  quantum rotation $2$-design with respect to
  ${\{ R_P(\theta) \}}_{\theta \in [0, 2\pi]}$.
\end{theorem}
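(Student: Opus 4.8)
The plan is to reduce the claimed operator identity to an elementary statement about averaging pure exponentials, by diagonalizing $P$. Since $P\in\{\mathbb{I},X,Y,Z\}^{\otimes n}$ is Hermitian with $P^2=\mathbb{I}$, its eigenvalues are $\pm 1$; I fix an orthonormal eigenbasis $\{\ket{v_\alpha}\}$ with $P\ket{v_\alpha}=\lambda_\alpha\ket{v_\alpha}$, $\lambda_\alpha\in\{+1,-1\}$ (the case $P=\mathbb{I}$ is covered, with all $\lambda_\alpha=+1$). Then $R_P(\theta)=\exp(-i\theta P/2)$ is diagonal in this basis, $R_P(\theta)\ket{v_\alpha}=e^{-i\lambda_\alpha\theta/2}\ket{v_\alpha}$, and because $P^\dagger=P$ we have $R_P(\theta)^\dagger=R_P(-\theta)$, so the integrand on the right of Eq.~\eqref{def:two_design} and the summand on the left (with $A_i=R_P(\theta_i)$, $A_i^\dagger=R_P(-\theta_i)$) are literally the same family of operators. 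With $t=2$, both sides are therefore simultaneously diagonal in the product basis $\{\ket{v_\alpha}\ket{v_\beta}\ket{v_\gamma}\ket{v_\delta}\}$, so it suffices to match their diagonal entries one by one.

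The diagonal entry of $\left(R_P(\theta)\otimes R_P(-\theta)\right)^{\otimes 2}$ on $\ket{v_\alpha}\ket{v_\beta}\ket{v_\gamma}\ket{v_\delta}$ is $e^{-i\lambda_\alpha\theta/2}\,e^{i\lambda_\beta\theta/2}\,e^{-i\lambda_\gamma\theta/2}\,e^{i\lambda_\delta\theta/2}=e^{ik\theta}$ with $k=\tfrac12(\lambda_\beta-\lambda_\alpha+\lambda_\delta-\lambda_\gamma)$. Since each $\lambda\in\{\pm1\}$, the integer $\lambda_\beta-\lambda_\alpha+\lambda_\delta-\lambda_\gamma$ is even and lies in $[-4,4]$, so $k\in\{-2,-1,0,1,2\}$. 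Thus every diagonal entry is a pure exponential $e^{ik\theta}$ of integer frequency at most $2$ in absolute value, and the theorem will follow as soon as the two averaging operations in Eq.~\eqref{def:two_design} are shown to agree on each such $e^{ik\theta}$.

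For the continuous side, $\frac{1}{2\pi}\int_0^{2\pi}e^{ik\theta}\,d\theta=\delta_{k,0}$. For the discrete side, $\frac14\sum_{j=0}^{3}e^{ik\pi j/2}=\frac14\sum_{j=0}^{3}i^{jk}$, a finite geometric sum equal to $1$ when $k\equiv 0\pmod 4$ and to $\tfrac14\cdot\frac{i^{4k}-1}{i^{k}-1}=0$ otherwise; for $|k|\le 2$ the case $k\equiv 0\pmod 4$ is exactly $k=0$, so the discrete average also equals $\delta_{k,0}$. Hence both averages return the same diagonal entry on every basis element, the two diagonal operators in Eq.~\eqref{def:two_design} coincide, and the proof is complete. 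The only subtlety worth flagging is the bookkeeping in the second step: one must confirm that the four tensor factors really contribute exponents summing to $|k|\le 2$, since the four-point average kills frequencies $1,2,3$ but not multiples of $4$, so it is precisely this frequency bound (a consequence of using $t=2$) that makes four Clifford angles sufficient.
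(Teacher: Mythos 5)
Your proof is correct, and it takes a genuinely different route from the paper's. The paper expands $R_P(\theta)=\cos(\theta/2)\,\mathbb{I}-i\sin(\theta/2)\,P$ and writes the four-fold tensor product as a sum over $\{\mathbb{I},P\}^{\otimes 4}$ weighted by half-angle monomials $\cos^{a}(\theta/2)\sin^{b}(\theta/2)$ with $a+b=4$; it then computes the continuous moments of these monomials explicitly and checks, term by term, that the four-angle average reproduces them, both sides collapsing to $\tfrac38 I^{\otimes4}+\tfrac38 P^{\otimes4}+\tfrac18\sum_{\sum_j i_j=2} i^{-i_1-i_2+i_3+i_4}\bigotimes_j P^{i_j}$. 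You instead diagonalize $P$, observe that every diagonal entry of the four-fold product is a single Fourier mode $e^{ik\theta}$ with integer $|k|\le 2$ (your parity argument that $k=\tfrac12(\lambda_\beta-\lambda_\alpha+\lambda_\delta-\lambda_\gamma)$ is an integer is correct, as is the use of $R_P(\theta)^\dagger=R_P(-\theta)$ to identify the summand and integrand families), and then compare the two averaging operations mode by mode. The two computations are equivalent in substance --- the degree-four half-angle monomials span exactly the frequencies $|k|\le 2$ --- but your version isolates the structural reason the discrete set works: the four-point average annihilates every frequency that is not a multiple of $4$, and the only such frequency present for $t=2$ is $k=0$. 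That makes transparent both why $t=2$ is the binding constraint and how the count of required angles would scale for higher-order designs, whereas the paper's explicit trigonometric moment computation is more self-contained arithmetic and directly exhibits the common operator to which both sides evaluate. No gaps.
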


\begin{proof}
  Utilizing the relations
  \begin{equation}
    \begin{aligned}
    &\frac{1}{2 \pi} \int_0^{2 \pi} \cos ^4 \frac{\theta}{2} {\rm d}\theta
      = \frac{1}{2 \pi} \int_0^{2 \pi} \sin ^4 \frac{\theta}{2} {\rm d}\theta
      = \frac{3}{8},\\
    & \frac{1}{2 \pi} \int_0^{2 \pi} \cos \frac{\theta}{2}
      \sin ^3 \frac{\theta}{2} {\rm d}\theta
      = \frac{1}{2 \pi} \int_0^{2 \pi} \cos ^3 \frac{\theta}{2}
      \sin \frac{\theta}{2} {\rm d}\theta=0, \\
    & \frac{1}{2 \pi} \int_0^{2 \pi} \cos ^2 \frac{\theta}{2}
      \sin ^2 \frac{\theta}{2} {\rm d}\theta=\frac{1}{8},
    \end{aligned}
  \end{equation}
  we have
  \begin{equation}
    \begin{aligned}
      & \frac{1}{2\pi}\int_{0}^{2\pi} {R_P(\theta)}^{\otimes 2} \otimes
      {R_P(-\theta)}^{\otimes 2} {\rm d}\theta \\
      & = \sum_{i_1,\ldots, i_4=0}^1 \frac{1}{2 \pi} \int_0^{2 \pi} i^{-i_1-i_2+i_3+i_4}
        {\Bigl(\cos \frac{\theta}{2}\Bigr)}^{4-\sum_j i_j} {\Bigl(\sin \frac{\theta}{2}\Bigr)}^{\sum_j i_j}
      \times \biggl(\bigotimes_{j=1}^{4} P^{i_{j}} \biggr)\,
      {\rm d}\theta\\
      & = \frac{3}{8} I^{\otimes 4} + \frac{3}{8} P^{\otimes 4} +
        \frac{1}{8} \sum_{\substack{i_1, \ldots, i_4=0 \\
      i_1+\cdots+i_4=2}}^1 i^{-i_1-i_2+i_3+i_4}
      \bigotimes_{j=1}^{4} P^{i_{j}}.
    \end{aligned}
  \end{equation}

  Meanwhile, it can be verified that
  \begin{equation}
    \begin{aligned}
      & \frac{1}{4} \sum_{k=0}^3 {R_P \Bigl( \frac{k\pi}{2}
      \Bigr)}^{\otimes 2} \otimes {R_P \Bigl(\frac{-k\pi}{2}
      \Bigr)}^{\otimes 2} \\
      & = \frac{1}{4} \left(1+\frac{1}{4}+\frac{1}{4}\right)I^{\otimes 4}+
        \frac{1}{4} \left(1+\frac{1}{4}+\frac{1}{4}\right)P^{\otimes 4}+ \frac{1}{4} \left(\frac{1}{4}+\frac{1}{4}\right)
     \sum_{\substack{i_1, \ldots, i_4=0 \\
      i_1+\cdots+i_4=2}}^1 i^{-i_1-i_2+i_3+i_4}
      \bigotimes_{j=1}^{4} P^{i_{j}}\\
      & = \frac{3}{8} I^{\otimes 4} + \frac{3}{8} P^{\otimes 4} +
        \frac{1}{8} \sum_{\substack{i_1, \ldots, i_4=0 \\
      i_1+\cdots+i_4=2}}^1 i^{-i_1-i_2+i_3+i_4}
      \bigotimes_{j=1}^{4} P^{i_{j}},
    \end{aligned}
  \end{equation}
  which concludes the proof.
\end{proof}

Thus for arbitrary operators $A, B, C, D$, we have the following corollary:
\begin{corollary}\label{cor:two_design}
  For any $n$-qubit operators $A, B, C, D$, the following equation holds:
  \begin{equation}\label{eq:two_design_cor}
    \mathbb{E}_{\theta} \tr{A R_P(\theta) B R_P(-\theta)}\tr{C R_P(\theta) D R_P(-\theta)}=
    \frac{1}{4}\sum_{\theta \in \{0,\frac{\pi}{2},\pi,\frac{3\pi}{2}\}}\tr{A R_P(\theta) B R_P(-\theta)}\tr{C R_P(\theta) D R_P(-\theta)}.
  \end{equation}
  Moreover, we have the following equation:
  \begin{equation}\label{eq:two_design_cor_2}
    \mathbb{E}_{\theta} R_P(\theta)^{\otimes 2}(\cdot)\left(R_P(\theta)^{\dagger}\right)^{\otimes 2} = \frac{1}{4}\sum_{\theta \in \{0,\frac{\pi}{2},\pi,\frac{3\pi}{2}\}} R_P(\theta)^{\otimes 2}(\cdot)\left(R_P(\theta)^{\dagger}\right)^{\otimes 2}.
  \end{equation}
\end{corollary}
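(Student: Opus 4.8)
The plan is to deduce both identities from the operator equality established in Theorem~\ref{thm:two_design}, i.e.\ the $t=2$ instance of \eqref{def:two_design}, by contracting that equality against suitable fixed ``test operators.'' I would prove \eqref{eq:two_design_cor} first and then bootstrap \eqref{eq:two_design_cor_2} from it.

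For \eqref{eq:two_design_cor}, write $\mathcal{H}=(\mathbb{C}^2)^{\otimes n}$ and set $G(\theta)=\tr{A R_P(\theta) B R_P(-\theta)}\,\tr{C R_P(\theta) D R_P(-\theta)}$. The key observation is that a single factor $\tr{A R_P(\theta) B R_P(-\theta)}=\tr{A R_P(\theta) B R_P(\theta)^\dagger}$ can be rewritten, via the standard swap/vectorization identity $\tr{X U Y U^\dagger}=\tr{M_{X,Y}\,(U\otimes U^\dagger)}$ — a trace over the doubled space $\mathcal{H}\otimes\mathcal{H}$, with $M_{X,Y}$ a fixed operator assembled from $X$, $Y$ and the swap operator and having no $\theta$-dependence — as a single trace against $R_P(\theta)\otimes R_P(-\theta)$. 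Tensoring two such copies gives $G(\theta)=\tr{M\,(R_P(\theta)\otimes R_P(-\theta))^{\otimes 2}}$ for the fixed operator $M=M_{A,B}\otimes M_{C,D}$ on $\mathcal{H}^{\otimes 4}$ (after the obvious reordering of tensor legs). Now average over $\theta$ uniform on $[0,2\pi)$, move the average inside the trace onto $(R_P(\theta)\otimes R_P(-\theta))^{\otimes 2}$, invoke Theorem~\ref{thm:two_design} to replace $\tfrac{1}{2\pi}\int_0^{2\pi}(\cdot)\,d\theta$ by $\tfrac14\sum_{k=0}^{3}(\cdot)\big|_{\theta=k\pi/2}$, and pull the now-finite average back out of the trace. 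This yields $\mathbb{E}_\theta G(\theta)=\tfrac14\sum_{\theta\in\{0,\pi/2,\pi,3\pi/2\}}G(\theta)$, which is \eqref{eq:two_design_cor}.

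For \eqref{eq:two_design_cor_2}, recall that two linear maps $\Psi_1,\Psi_2$ on the space of operators on $\mathcal{H}^{\otimes 2}$ coincide iff $\tr{M\,\Psi_1(N)}=\tr{M\,\Psi_2(N)}$ for all operators $M,N$ on $\mathcal{H}^{\otimes 2}$ (nondegeneracy of the Hilbert--Schmidt pairing). Apply this with $\Psi_1=\mathbb{E}_\theta\,R_P(\theta)^{\otimes 2}(\cdot)(R_P(\theta)^\dagger)^{\otimes 2}$ and $\Psi_2=\tfrac14\sum_{\theta\in\{0,\pi/2,\pi,3\pi/2\}}R_P(\theta)^{\otimes 2}(\cdot)(R_P(\theta)^\dagger)^{\otimes 2}$. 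By bilinearity it suffices to check the equality for $M=M_1\otimes M_2$ and $N=N_1\otimes N_2$ factorized across the two copies of $\mathcal{H}$; for such choices $\tr{M\,R_P(\theta)^{\otimes 2}N(R_P(\theta)^\dagger)^{\otimes 2}}$ factorizes as $\tr{M_1 R_P(\theta) N_1 R_P(-\theta)}\,\tr{M_2 R_P(\theta) N_2 R_P(-\theta)}$, which is exactly the quantity handled by \eqref{eq:two_design_cor} with $(A,B,C,D)=(M_1,N_1,M_2,N_2)$. Summing over the tensor decompositions of $M$ and $N$ gives \eqref{eq:two_design_cor_2}.

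The only step that is more than bookkeeping is the swap/vectorization rewriting of $\tr{X U Y U^\dagger}$, and the subtlety to watch is keeping the factor in the form $U\otimes U^\dagger$ (rather than $U\otimes\bar U$) so that it matches the $R_P(\theta)\otimes R_P(-\theta)$ appearing in Theorem~\ref{thm:two_design}; the swap operator is precisely what converts the ``in series'' arrangement of $R_P(\theta),R_P(-\theta)$ around $B$ into the ``in parallel'' tensor form. An alternative route that avoids this entirely: expand $R_P(\pm\theta)=\cos(\theta/2)\mathbb{I}\mp i\sin(\theta/2)P$, so that $G(\theta)$ becomes a fixed linear combination, with operator-trace coefficients independent of $\theta$, of the degree-four trigonometric monomials $\cos^{4-j}(\theta/2)\sin^{j}(\theta/2)$ for $j=0,\dots,4$; since the uniform average over $[0,2\pi)$ and the four-point average over $\{0,\pi/2,\pi,3\pi/2\}$ agree on each such monomial — these being exactly the elementary identities already verified inside the proof of Theorem~\ref{thm:two_design} — the two averages agree on $G$, and \eqref{eq:two_design_cor_2} then follows from \eqref{eq:two_design_cor} as above.
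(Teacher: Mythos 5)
Your proposal is correct and takes essentially the same route as the paper: both identities are deduced from Theorem~\ref{thm:two_design} by exploiting linearity in the $\theta$-dependent factor $R_P(\theta)\otimes R_P(-\theta)\otimes R_P(\theta)\otimes R_P(-\theta)$, the paper carrying out the tensor-leg bookkeeping via explicit matrix-element sums where you use the swap/vectorization identity and nondegeneracy of the Hilbert--Schmidt pairing (and your reduction of \eqref{eq:two_design_cor_2} to \eqref{eq:two_design_cor} on product operators is valid). The alternative trigonometric-monomial argument you sketch at the end is also sound, resting on the same elementary integrals already verified inside the proof of Theorem~\ref{thm:two_design}.
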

\begin{proof}
  The proof is straightforward by using the definition of the quantum rotation $2$-design in Eq.~\eqref{def:two_design} and the fact of Thm.~\ref{thm:two_design}, we have:
  \begin{equation}
    \mathbb{E}_{\theta} R_P(\theta) \otimes R_P(-\theta) \otimes R_P(\theta) \otimes R_P(-\theta)=\frac{1}{4}\sum_{\theta \in \{0,\frac{\pi}{2},\pi,\frac{3\pi}{2}\}}R_P(\theta) \otimes R_P(-\theta) \otimes R_P(\theta) \otimes R_P(-\theta).
  \end{equation}
  The left-hand side of Eq.~\eqref{eq:two_design_cor} can be expressed as:
  \begin{equation}
    \begin{aligned}
      &\mathbb{E}_{\theta} \tr{A R_P(\theta) B R_P(-\theta)}\tr{C R_P(\theta) D R_P(-\theta)}\\
      =& \mathbb{E}_{\theta} \left(\sum_{i,j} \bra{i}A R_P(\theta) B \ket{j} \bra{j} R_P(-\theta)\ket{i} \right)
      \left(\sum_{k,l} \bra{k}C R_P(\theta) D\ket{l}\bra{l} R_P(-\theta)\ket{k}\right)\\
      =& \mathbb{E}_{\theta} \left(\sum_{i,j} \bra{i}\otimes\bra{j} \cdot (A R_P(\theta) B) \otimes R_P(-\theta)\cdot \ket{j}\otimes\ket{i} \right)
      \left(\sum_{k,l} \bra{k}\otimes\bra{l} \cdot(C R_P(\theta) D) \otimes R_P(-\theta) \cdot\ket{l}\otimes\ket{k}\right)\\
      =& \mathbb{E}_{\theta} \left(\sum_{i,j,k,l} \bra{i}\otimes\bra{j}\otimes\bra{k}\otimes\bra{l} \cdot (A R_P(\theta) B) \otimes R_P(-\theta) \otimes (C R_P(\theta) D) \otimes R_P(-\theta) \cdot\ket{j}\otimes\ket{i}\otimes\ket{l}\otimes\ket{k}\right)\\
      =& \mathbb{E}_{\theta} \left(\sum_{i,j,k,l} \bra{i}A\otimes \bra{j}\otimes\bra{k} C\otimes\bra{l} \cdot R_P(\theta) \otimes R_P(-\theta) \otimes R_P(\theta)  \otimes R_P(-\theta) \cdot B\ket{j}\otimes\ket{i}\otimes D\ket{l}\otimes\ket{k}\right)\\
      =& \sum_{i,j,k,l} \bra{i}A\otimes \bra{j}\otimes\bra{k}C\otimes\bra{l} \cdot \mathbb{E}_{\theta} \left(   R_P(\theta) \otimes R_P(-\theta) \otimes R_P(\theta)  \otimes R_P(-\theta)\right) \cdot B\ket{j}\otimes\ket{i}\otimes D\ket{l}\otimes\ket{k}\\
      =& \sum_{i,j,k,l} \bra{i}A\otimes \bra{i}\otimes\bra{k} \cdot \frac{1}{4}\sum_{\theta \in \{0,\frac{\pi}{2},\pi,\frac{3\pi}{2}\}} \left(  C\otimes\bra{k} R_P(\theta) \otimes R_P(-\theta) \otimes R_P(\theta)  \otimes R_P(-\theta)\right) \cdot B\ket{j}\otimes\ket{i}\otimes D\ket{l}\otimes\ket{k}\\
      =&\frac{1}{4}\sum_{\theta \in \{0,\frac{\pi}{2},\pi,\frac{3\pi}{2}\}}\left(\sum_{i,j} \bra{i}A R_P(\theta) B \ket{j} \bra{j} R_P(-\theta)\ket{i} \right)
      \left(\sum_{k,l} \bra{k}C R_P(\theta) D\ket{l}\bra{l} R_P(-\theta)\ket{k}\right)\\
      =&\frac{1}{4}\sum_{\theta \in \{0,\frac{\pi}{2},\pi,\frac{3\pi}{2}\}}\tr{A R_P(\theta) B R_P(-\theta)}\tr{C R_P(\theta) D R_P(-\theta)}.
    \end{aligned}
  \end{equation}

  For the Eq.~\eqref{eq:two_design_cor_2}, we have for any $i,j,k,l,a,b,c,d$:
  \begin{equation}
    \begin{aligned}
      &\bra{i}\otimes \bra{j} \left[\mathbb{E}_{\theta} R_P(\theta)^{\otimes 2}(\ket{a}\bra{b} \otimes \ket{c} \bra{d} )\left(R_P(\theta)^{\dagger}\right)^{\otimes 2} \right] \ket{k}\otimes\ket{l}\\
      =& \mathbb{E}_{\theta} \left[ \bra{i} R_P(\theta) \ket{a} \times \bra{j} R_P(-\theta) \ket{c} \times \bra{b} R_P(\theta) \ket{k} \times \bra{d} R_P(-\theta) \ket{l} \right]\\
      =& \mathbb{E}_{\theta} \left[ (\bra{i}\otimes \bra{j} \otimes \bra{b} \otimes \bra{d}) \left(R_P(\theta) \otimes R_P(-\theta) \otimes R_P(\theta) \otimes R_P(-\theta)  \right) (\ket{a} \otimes \ket{c} \otimes \ket{k} \otimes \ket{l}) \right]\\
      =& \frac{1}{4}\sum_{\theta \in \{0,\frac{\pi}{2},\pi,\frac{3\pi}{2}\}} \left[ (\bra{i}\otimes \bra{j} \otimes \bra{b} \otimes \bra{d}) \left(R_P(\theta) \otimes R_P(-\theta) \otimes R_P(\theta) \otimes R_P(-\theta)  \right) (\ket{a} \otimes \ket{c} \otimes \ket{k} \otimes \ket{l}) \right]\\
      =& \frac{1}{4}\sum_{\theta \in \{0,\frac{\pi}{2},\pi,\frac{3\pi}{2}\}} \left[ \bra{i} R_P(\theta) \ket{a} \times \bra{j} R_P(-\theta) \ket{c} \times \bra{b} R_P(\theta) \ket{k} \times \bra{d} R_P(-\theta) \ket{l} \right]\\
      =& \bra{i}\otimes \bra{j} \left[\frac{1}{4}\sum_{\theta \in \{0,\frac{\pi}{2},\pi,\frac{3\pi}{2}\}} R_P(\theta)^{\otimes 2}(\ket{a}\bra{b} \otimes \ket{c} \bra{d} )\left(R_P(\theta)^{\dagger}\right)^{\otimes 2} \right] \ket{k}\otimes\ket{l}.
    \end{aligned}
  \end{equation}
  By $(\ket{a}\bra{b} \otimes \ket{c} \bra{d} )$ and $\bra{i}\otimes \bra{j}, \ket{k}\otimes\ket{l}$ being arbitrary, we have the Eq.~\eqref{eq:two_design_cor_2}.
\end{proof}

By iteratively applying the Eq.~\eqref{eq:two_design_cor_2} in circuit $\mathcal{C}(\bm{\theta})$, we have the following lemma for the 2-fold channel:
\begin{lemma}
  The $2$-fold channel of $\mathcal{C}(\bm{\theta})$ equals to 
  \begin{small}
  \begin{equation}
      \mathcal{E}_2(\mathcal{C}(\bm{\theta})) = \frac{1}{4^{N_g}}\sum_{\bm{\theta} \in \{0,\frac{\pi}{2},\pi,\frac{3\pi}{2}\}^{N_g}} \mathcal{C}(\bm{\theta})^{\otimes 2}(\cdot)\left(\mathcal{C}(\bm{\theta})^{\dagger}\right)^{\otimes 2}.
    \end{equation}
  \end{small}
\end{lemma}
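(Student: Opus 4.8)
The plan is to reduce the statement to the single-gate identity already established in Corollary~\ref{cor:two_design}, Eq.~\eqref{eq:two_design_cor_2}, and then iterate gate by gate. First I would rewrite the two-fold channel of the whole circuit as a composition of single-gate two-fold channels. Writing each $U_i(\theta_i)=R_{P_i}(\theta_i)C_i$ with $R_{P_i}(\theta_i)=e^{-i\theta_i P_i/2}$ and $C_i$ the parameter-free Clifford, one has
\[
  \mathcal{C}(\bm{\theta})^{\otimes 2}(\cdot)\big(\mathcal{C}(\bm{\theta})^{\dagger}\big)^{\otimes 2}
  = \mathcal{R}_{N_g}\circ\mathcal{C}_{N_g}\circ\cdots\circ\mathcal{R}_1\circ\mathcal{C}_1 ,
\]
where $\mathcal{R}_i(\cdot):=R_{P_i}(\theta_i)^{\otimes 2}(\cdot)\big(R_{P_i}(\theta_i)^{\dagger}\big)^{\otimes 2}$ depends only on $\theta_i$ and $\mathcal{C}_i(\cdot):=C_i^{\otimes 2}(\cdot)(C_i^{\dagger})^{\otimes 2}$ carries no parameter dependence.

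Next, since the components of $\bm{\theta}$ are mutually independent under the uniform product measure, $\mathbb{E}_{\bm{\theta}}=\mathbb{E}_{\theta_{N_g}}\cdots\mathbb{E}_{\theta_1}$, and because superoperator composition is multilinear, the innermost expectation over $\theta_1$ acts only on the factor $\mathcal{R}_1$: it may be pulled through the surrounding fixed maps to give $\mathbb{E}_{\theta_1}\mathcal{R}_1$. Applying Eq.~\eqref{eq:two_design_cor_2} with $P=P_1$ replaces $\mathbb{E}_{\theta_1}\mathcal{R}_1$ by $\tfrac14\sum_{\theta_1\in\{0,\pi/2,\pi,3\pi/2\}}\mathcal{R}_1$. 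Repeating the same move for $\theta_2,\dots,\theta_{N_g}$ in turn—each time isolating the single factor $\mathcal{R}_i$, invoking Eq.~\eqref{eq:two_design_cor_2}, and turning one continuous average into a four-point average—and then reassembling the composition yields exactly $\tfrac{1}{4^{N_g}}\sum_{\bm{\theta}\in\{0,\pi/2,\pi,3\pi/2\}^{N_g}}\mathcal{C}(\bm{\theta})^{\otimes 2}(\cdot)(\mathcal{C}(\bm{\theta})^{\dagger})^{\otimes 2}$, which is the claim.

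The only point that needs to be spelled out carefully is the interchange of the integral over a single angle with the composition of the remaining (fixed, or already discretized) channels; this is justified by linearity of the integral together with the multilinearity and continuity of composition of bounded maps on a finite-dimensional operator space, so no convergence issue arises. I expect this bookkeeping—rather than any genuine analytic difficulty—to be the main thing to make precise, since all the substantive content is already carried by Corollary~\ref{cor:two_design}.
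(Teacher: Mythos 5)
Your proposal is correct and matches the paper's own proof: the paper likewise factors each $U_i(\theta_i)$ into $R_{P_i}(\theta_i)C_i$, isolates one rotation's two-fold channel at a time, and applies Eq.~\eqref{eq:two_design_cor_2} iteratively to convert each continuous average into the four-point sum (the paper peels from $\theta_{N_g}$ inward rather than from $\theta_1$ outward, which is immaterial). No gaps.
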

\begin{proof}
  The proof is straightforward by using Corollary~\ref{cor:two_design}, we have:
  \begin{equation}
    \begin{aligned}
      &\mathcal{E}_2(\mathcal{C}(\bm{\theta}))\\
      =& \mathbb{E}_{\bm{\theta}} \mathcal{C}(\bm{\theta})^{\otimes 2}(\cdot)\left(\mathcal{C}(\bm{\theta})^{\dagger}\right)^{\otimes 2}\\
      =& \mathbb{E}_{\bm{\theta}} \left[ \left( U_{N_g}(\theta_{N_g})  \cdots {U}_1(\theta_1) \right)^{\otimes 2}(\cdot)\left( {U}_1(\theta_1)^\dagger \cdots U_{N_g}(\theta_{N_g})^\dagger \right)^{\otimes 2}\right]\\
      =& \mathbb{E}_{\bm{\theta}} \left[ \left( \exp{-i \frac{\theta_{N_g}}{2} P_{N_g}}C_{N_g}  \cdots {U}_1(\theta_1) \right)^{\otimes 2}(\cdot)\left( {U}_1(\theta_1)^\dagger \cdots C_{N_g}^\dagger \exp{i \frac{\theta_{N_g}}{2} P_{N_g}} \right)^{\otimes 2}\right]\\
      =& \mathbb{E}_{\bm{\theta}} \left[ \left( R_{P_{N_g}}(\theta_{N_g}) C_{N_g}  \cdots {U}_1(\theta_1) \right)^{\otimes 2}(\cdot)\left( {U}_1(\theta_1)^\dagger \cdots C_{N_g}^\dagger R_{P_{N_g}}(\theta_{N_g})^\dagger \right)^{\otimes 2}\right]\\
      =& \mathbb{E}_{\theta_{N_g}} R_{P_{N_g}}(\theta_{N_g})^{\otimes 2} \left[\mathbb{E}_{\bm{\theta} \setminus \theta_{N_g}} \left[ \left( C_{N_g} \cdots {U}_1(\theta_1) \right)^{\otimes 2}(\cdot)\left( {U}_1(\theta_1)^\dagger \cdots C_{N_g}^\dagger \right)^{\otimes 2}\right]\right]  \left( R_{P_{N_g}}(\theta_{N_g})^\dagger \right)^{\otimes 2}\\
      =& \frac{1}{4}\sum_{\theta_{N_g} \in \{0,\frac{\pi}{2},\pi,\frac{3\pi}{2}\}} R_{P_{N_g}}(\theta_{N_g})^{\otimes 2} \left[\mathbb{E}_{\bm{\theta} \setminus \theta_{N_g}} \left[ \left( C_{N_g} \cdots {U}_1(\theta_1) \right)^{\otimes 2}(\cdot)\left( {U}_1(\theta_1)^\dagger \cdots C_{N_g}^\dagger \right)^{\otimes 2}\right]\right]  \left( R_{P_{N_g}}(\theta_{N_g})^\dagger \right)^{\otimes 2}\\
      & \vdots \\
      =& \frac{1}{4^{N_g}}\sum_{\bm{\theta} \in \{0,\frac{\pi}{2},\pi,\frac{3\pi}{2}\}^{N_g}} \left[ \left( U_{N_g}(\theta_{N_g})  \cdots {U}_1(\theta_1) \right)^{\otimes 2}(\cdot)\left( {U}_1(\theta_1)^\dagger \cdots U_{N_g}(\theta_{N_g})^\dagger \right)^{\otimes 2}\right]\\
      =& \frac{1}{4^{N_g}}\sum_{\bm{\theta} \in \{0,\frac{\pi}{2},\pi,\frac{3\pi}{2}\}^{N_g}} \mathcal{C}(\bm{\theta})^{\otimes 2}(\cdot)\left(\mathcal{C}(\bm{\theta})^{\dagger}\right)^{\otimes 2},
    \end{aligned}
  \end{equation}
  where the sixth equality is followed by the Eq.~\eqref{eq:two_design_cor_2}. This finishes the proof.
\end{proof}

For the contribution of Pauli path $f(\vec{s},\bm{\theta},O,\rho)$, defined Eq.~\eqref{eq:pauli_path_integral_noiseless}, and the noisy contribution $\widetilde{f}(\vec{s},\bm{\theta},O,\rho)$, expressed in Eq.~\eqref{eq:pauli_path_integral_noisy}, we have the following lemma:
\begin{lemma}\label{lem:cross_term}
Let $\vec{s}$ and $\vec{s'}$ be two Pauli paths. Then, their contributions satisfy the following equation:
  \begin{align}
      \mathbb{E}_{\theta} f(\vec{s},\bm{\theta},O,\rho)f(\vec{s'},\bm{\theta},O,\rho)= \frac{1}{4^{N_g}}\sum_{\bm{\theta} \in \{0,\frac{\pi}{2},\pi,\frac{3\pi}{2}\}^{N_g}}f(\vec{s},\bm{\theta},O,\rho)f(\vec{s'},\bm{\theta},O,\rho), \\
    \mathbb{E}_{\theta} \widetilde{f}(\vec{s},\bm{\theta},O,\rho)\widetilde{f}(\vec{s'},\bm{\theta},O,\rho)= \frac{1}{4^{N_g}}\sum_{\bm{\theta} \in \{0,\frac{\pi}{2},\pi,\frac{3\pi}{2}\}^{N_g}}\widetilde{f}(\vec{s},\bm{\theta},O,\rho)\widetilde{f}(\vec{s'},\bm{\theta},O,\rho), \\
    \mathbb{E}_{\theta} f(\vec{s},\bm{\theta},O,\rho)\widetilde{f}(\vec{s'},\bm{\theta},O,\rho)= \frac{1}{4^{N_g}}\sum_{\bm{\theta} \in \{0,\frac{\pi}{2},\pi,\frac{3\pi}{2}\}^{N_g}}f(\vec{s},\bm{\theta},O,\rho)\widetilde{f}(\vec{s'},\bm{\theta},O,\rho),
  \end{align}
  where $\bm{\theta}=\{\theta_1,\ldots,\theta_{N_g}\}$ is the set of rotation angles and ${N_g}$ is number of rotation gates.
\end{lemma}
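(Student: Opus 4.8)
The plan is to reduce all three identities to Corollary~\ref{cor:two_design}, specifically Eq.~\eqref{eq:two_design_cor}, applied one gate at a time, exploiting that both the uniform measure on $[0,2\pi)^{N_g}$ and the uniform measure on $\{0,\tfrac{\pi}{2},\pi,\tfrac{3\pi}{2}\}^{N_g}$ are product measures over the $N_g$ angle coordinates. The single‑angle statement Eq.~\eqref{eq:two_design_cor} already handles exactly the degree‑$\le 2$ trigonometric dependence that a product of two gate factors produces, so everything comes down to organizing the per‑gate reduction and the product structure correctly.

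First I would rewrite the noiseless contribution in a per‑gate form. Since $U_i(\theta_i)=R_{P_i}(\theta_i)C_i$ with $R_{P_i}(\theta_i)=e^{-i\theta_i P_i/2}$, setting $Q_i:=C_i s_{i-1}C_i^\dagger$ gives $\tr{s_i U_i(\theta_i)s_{i-1}U_i^\dagger(\theta_i)}=\tr{s_i R_{P_i}(\theta_i)Q_i R_{P_i}(-\theta_i)}$, so
\begin{equation}
f(\vec s,\bm\theta,O,\rho)=\tr{O s_{N_g}}\tr{s_0\rho}\prod_{i=1}^{N_g}\tr{s_i R_{P_i}(\theta_i)Q_i R_{P_i}(-\theta_i)},
\end{equation}
and likewise for $\vec s\,'$ with $Q_i':=C_i s_{i-1}'C_i^\dagger$. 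The key point is that $Q_i,Q_i'$ and the Pauli operators $P_i$ depend only on the circuit and the paths, not on $\bm\theta$. Multiplying the two contributions, the scalar $\tr{O s_{N_g}}\tr{s_0\rho}\tr{O s_{N_g}'}\tr{s_0'\rho}$ is $\bm\theta$‑independent and pulls out, leaving $\prod_i \tr{s_i R_{P_i}(\theta_i)Q_i R_{P_i}(-\theta_i)}\tr{s_i' R_{P_i}(\theta_i)Q_i' R_{P_i}(-\theta_i)}$; since the $\theta_i$ are independent under either measure, the average factorizes over $i$, and each factor is precisely the left‑hand side of Eq.~\eqref{eq:two_design_cor} with $(A,B,C,D,P,\theta)=(s_i,Q_i,s_i',Q_i',P_i,\theta_i)$. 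Applying that identity to every factor replaces each continuous average by $\tfrac14\sum_{\theta_i}$, and re‑expanding the resulting product of four‑term sums—legitimate because the discrete average is itself the product measure $\bigl(\tfrac14\sum_{\theta_i\in\{0,\pi/2,\pi,3\pi/2\}}\bigr)^{\otimes N_g}$—recombines into $\tfrac{1}{4^{N_g}}\sum_{\bm\theta\in\{0,\pi/2,\pi,3\pi/2\}^{N_g}}$, giving the first identity.

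For the two identities involving $\widetilde f$, I would start from the Pauli‑path representation of the noisy contribution read off from Eq.~\eqref{eq:pauli_path_integral_noisy},
\begin{equation}
\widetilde f(\vec s,\bm\theta,O,\rho)=\sum_{\vec\tau}\tr{O s_{N_g}}\tr{s_0\rho}\prod_{i=1}^{N_g}\tr{\tau_i R_{P_i}(\theta_i)Q_i R_{P_i}(-\theta_i)}\tr{s_i\mathcal N_i(\tau_i)},
\end{equation}
where again $Q_i=C_i s_{i-1}C_i^\dagger$ is independent of both $\bm\theta$ and $\vec\tau$, and the noise coefficients $\tr{s_i\mathcal N_i(\tau_i)}$ are $\bm\theta$‑independent. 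The sum over $\vec\tau$ (each $\tau_i$ ranging over a finite Pauli set) is finite, hence commutes with either average. Therefore $\widetilde f(\vec s,\bm\theta,O,\rho)\widetilde f(\vec s\,',\bm\theta,O,\rho)$ is a finite sum over pairs $(\vec\tau,\vec\tau\,')$ of a $\bm\theta$‑independent prefactor times $\prod_i \tr{\tau_i R_{P_i}(\theta_i)Q_i R_{P_i}(-\theta_i)}\tr{\tau_i' R_{P_i}(\theta_i)Q_i' R_{P_i}(-\theta_i)}$, and the mixed product $f(\vec s)\widetilde f(\vec s\,')$ has the same shape with $s_i$ in place of $\tau_i$ in one slot and no $\mathcal N$‑factor there. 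In each case the argument of the previous paragraph applies verbatim to every term of the finite sum: factorize over $i$, invoke Eq.~\eqref{eq:two_design_cor}, re‑expand, and sum the finitely many terms back up.

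The only genuine care needed is bookkeeping: checking that (i) every quantity that is not a function of the current $\theta_i$—the trace prefactors, the Clifford‑conjugated operators $Q_i,Q_i'$, and the noise coefficients $\tr{s_i\mathcal N_i(\tau_i)}$—is honestly $\bm\theta$‑independent and thus passes through the average; (ii) both measures factor over the $N_g$ coordinates, so the average commutes with the product over gates and the discrete sum can be reconstituted after applying the single‑angle identity; and (iii) the $\vec\tau$‑sums are finite and commute with the averages. Once these points are secured the result is immediate from Eq.~\eqref{eq:two_design_cor}, and I do not anticipate any substantive obstacle beyond this organization.
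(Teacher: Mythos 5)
Your proposal is correct and follows essentially the same route as the paper: both reduce the claim to the single-angle identity of Corollary~\ref{cor:two_design} by factorizing the product of contributions over the independent angle coordinates and applying Eq.~\eqref{eq:two_design_cor} gate by gate. The only difference is that you spell out the finite sums over $\vec\tau,\vec\tau\,'$ commuting with the averages in the noisy cases, which the paper dispatches with a one-line remark that the $\widetilde f$ case is analogous.
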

\begin{proof}
  The proof is straightforward by using Corollary~\ref{cor:two_design}, we have:
  \begin{equation}
    \begin{aligned}
      & \mathbb{E}_{\bm{\theta}}f(\vec{s},\bm{\theta},O,\rho)f(\vec{s'},\bm{\theta},O,\rho)\\
      =& \tr{O s_{N_g}}\tr{O s'_{N_g}} \tr{s_0 \rho}\tr{s'_0 \rho} \prod_{i=1}^{{N_g}}\mathbb{E}_{\theta_i} \tr{s_i U_i(\theta_i) s_{i-1} U_i^\dagger(\theta_i)}\tr{s'_i U_i(\theta_i) s'_{i-1} U_i^\dagger(\theta_i)}
    \end{aligned}
  \end{equation}
  
  For terms $\mathbb{E}_{\theta_i} \tr{s_i U_i(\theta_i) s_{i-1} U_i^\dagger(\theta_i)}\tr{s'_i U_i(\theta_i) s'_{i-1} U_i^\dagger(\theta_i)}$, using Eq.~\eqref{eq:two_design_cor}, we have:
  \begin{equation}
    \mathbb{E}_{\theta_i} \tr{s_i U_i(\theta_i) s_{i-1} U_i^\dagger(\theta_i)}\tr{s'_i U_i(\theta_i) s'_{i-1} U_i^\dagger(\theta_i)}= \frac{1}{4}\sum_{\theta_i \in \{0,\frac{\pi}{2},\pi,\frac{3\pi}{2}\}} \tr{s_i U_i(\theta_i) s_{i-1} U_i^\dagger(\theta_i)}\tr{s'_i U_i(\theta_i) s'_{i-1} U_i^\dagger(\theta_i)}.
  \end{equation}
  Therefore, we have:
  \begin{equation}
    \begin{aligned}
      &\mathbb{E}_{\bm{\theta}}f(\vec{s},\bm{\theta},O,\rho)f(\vec{s'},\bm{\theta},O,\rho)\\
      =& \tr{O s_{N_g}}\tr{O s'_{N_g}} \tr{s_0 \rho}\tr{s'_0 \rho} \prod_{i=1}^{{N_g}}\frac{1}{4}\sum_{\theta_i \in \{0,\frac{\pi}{2},\pi,\frac{3\pi}{2}\}} \tr{s_i U_i(\theta_i) s_{i-1} U_i^\dagger(\theta_i)}\tr{s'_i U_i(\theta_i) s'_{i-1} U_i^\dagger(\theta_i)}\\
      =& \frac{1}{4^{N_g}}\sum_{\theta_1,\ldots,\theta_{N_g} \in \{0,\frac{\pi}{2},\pi,\frac{3\pi}{2}\}}f(\vec{s},\bm{\theta},O,\rho)f(\vec{s'},\bm{\theta},O,\rho).
    \end{aligned}
  \end{equation}

The expression for $\widetilde{f}(\vec{s},\bm{\theta},O,\rho)$ is analogous, with the only difference being that $\tr{s_i U_i(\theta_i) s_{i-1} U_i^\dagger(\theta_i)}$ is replaced by $\tr{\tau_i U_i(\theta_i) s_{i-1} U_i^\dagger(\theta_i)}$.
\end{proof}

\section{Noise robustness}\label{sec:noise}

\subsection{Estimation of noise effect}

To evaluate the noise effect on a quantum circuit structure, we can use the mean squared error (MSE) of the noisy expectation value $\langle \widetilde{O} \rangle$ with respect to the parameter $\bm{\theta}$ in the parameterized quantum circuit $\widetilde{\mathcal{C}}(\bm{\theta})$. The MSE is defined as:
\begin{equation}
  \mathrm{MSE}(\langle O \rangle) = \mathbb{E}_{\bm{\theta}} \left( \langle O \rangle_{\bm{\theta}} - \langle \widetilde{O} \rangle_{\bm{\theta}} \right)^2,
\end{equation}
where $\langle O \rangle_{\bm{\theta}}$ is the expectation value of the observable $O$ with respect to rotation angles $\bm{\theta}$ in noiseless case, and $\langle \widetilde{O} \rangle_{\bm{\theta}}$ is the expectation value of the observable $O$ with respect to rotation angles $\bm{\theta}$ in noisy case. 

Using Eq.~\eqref{eq:pauli_path_integral_noiseless} and Eq.~\eqref{eq:pauli_path_integral_noisy}, the MSE can be expressed as:
\begin{equation}\label{ap:eq:mse_noise_robustness_derivation}
  \begin{aligned}
    \mathrm{MSE}(\langle O \rangle) &= \mathbb{E}_{\bm{\theta}} \left( \langle O \rangle_{\bm{\theta}} - \langle \widetilde{O} \rangle_{\bm{\theta}} \right)^2\\
    &= \mathbb{E}_{\bm{\theta}} \left( \sum_{\vec{s}} f(\vec{s},\bm{\theta},O,\rho) - \widetilde{f}(\vec{s},\bm{\theta},O,\rho) \right)^2\\
    &= \mathbb{E}_{\bm{\theta}} \left( \sum_{\vec{s}} f(\vec{s},\bm{\theta},O,\rho) - \widetilde{f}(\vec{s},\bm{\theta},O,\rho) \right)\left( \sum_{\vec{s'}} f(\vec{s'},\bm{\theta},O,\rho) - \widetilde{f}(\vec{s'},\bm{\theta},O,\rho) \right)\\
    &= \mathbb{E}_{\bm{\theta}} \left( \sum_{\vec{s},\vec{s'}} f(\vec{s},\bm{\theta},O,\rho) f(\vec{s'},\bm{\theta},O,\rho) +  \widetilde{f}(\vec{s},\bm{\theta},O,\rho) \widetilde{f}(\vec{s'},\bm{\theta},O,\rho) - \widetilde{f}(\vec{s},\bm{\theta},O,\rho)f(\vec{s'},\bm{\theta},O,\rho) - f(\vec{s},\bm{\theta},O,\rho)\widetilde{f}(\vec{s'},\bm{\theta},O,\rho)\right)\\
    &= \frac{1}{4^{N_g}}\sum_{\bm{\theta} \in \{0,\frac{\pi}{2},\pi,\frac{3\pi}{2}\}^{N_g}} \left( \sum_{\vec{s}} f(\vec{s},\bm{\theta},O,\rho) - \widetilde{f}(\vec{s},\bm{\theta},O,\rho) \right)\left( \sum_{\vec{s'}} f(\vec{s'},\bm{\theta},O,\rho) - \widetilde{f}(\vec{s'},\bm{\theta},O,\rho) \right)\\
    &= \frac{1}{4^{N_g}}\sum_{\bm{\theta} \in \{0,\frac{\pi}{2},\pi,\frac{3\pi}{2}\}^{N_g}} \left( \langle O \rangle_{\bm{\theta}} - \langle \widetilde{O} \rangle_{\bm{\theta}} \right)^2\\
    &= \frac{1}{4^{N_g}}\sum_{\bm{\theta} \in \{0,\frac{\pi}{2},\pi,\frac{3\pi}{2}\}^{N_g}}\left( \sum_{\vec{s}} f(\vec{s},\bm{\theta},O,\rho) - \widetilde{f}(\vec{s},\bm{\theta},O,\rho) \right)^2,
  \end{aligned}
\end{equation}
where the fifth equality is obtained by using Lemma~\ref{lem:cross_term} and the expectation in the last equality is taken over $\bm{\theta}$ uniformly distributed in the set $\{0, \frac{\pi}{2}, \pi, \frac{3\pi}{2}\}^{N_g}$.

Thus, rewriting the MSE, we have:
\begin{equation}\label{ap:eq:mse_noise_robustness}
  \mathrm{MSE}(\langle O \rangle) = \mathbb{E}^*_{\bm{\theta}} h_{\mathrm{robustness}}(\bm{\theta}),
\end{equation}
where $\mathbb{E}^*_{\bm{\theta}}$ represents the expectation value that $\bm{\theta}$ takes values in $\{0, \frac{\pi}{2}, \pi, \frac{3\pi}{2}\}^{N_g}$ with uniform distribution, $h_{\mathrm{robustness}}(\bm{\theta}) = \left( \sum_{\vec{s}} f(\vec{s},\bm{\theta},O,\rho) - \widetilde{f}(\vec{s},\bm{\theta},O,\rho) \right)^2$ and its absolute value is bounded by $h_{\mathrm{robustness}}(\bm{\theta}) \leq 4\norm{O}_\infty^2$.

\subsection{Estimation of effect for variable noise}
\label{sec:varnoise}

In noisy quantum circuit architectures, to determine the locations most sensitive to a given level of noise, one may assess the impact of local noise on the circuit output by analyzing the gradient of the MSE of the noisy expectation value $\langle \widetilde{O} \rangle$ with respect to the noise strength. A higher gradient value implies that noise at the corresponding location exerts a greater influence on the circuit's output.

\subsubsection{Single-qubit depolarizing noise}
We first consider the case of single-qubit depolarizing noise.
For convenient, we assume that in the quantum circuit $\mathcal{C} = U_{N_g}(\theta_{N_g}) \cdots U_1(\theta_1)$, there is a single-qubit depolarizing noise $\mathcal{N}_i$ after each gate $U_i(\theta_i)$.
If there are multiple noises after a gate, we can add additional identity gates to separate them.
The single-qubit depolarizing noise $\mathcal{N}_{i}$ with strength $\lambda_{i}$ can be expressed as:
\begin{equation}
  \mathcal{N}_{i}(\rho) = (1-\lambda_{i})\rho + \lambda_{i}\tr{\rho} \frac{\mathbb{I}}{2},
\end{equation}
which can be viewed as a equal probability mixture of randomly applying the Pauli operators $X$, $Y$, and $Z$ with probability $\frac{\lambda_i}{4}$.

For $l$-th layer, the gradient of the MSE with respect to the noise strength $\lambda_{l}$ can be expressed as:
\begin{equation}
  \begin{aligned}
    \frac{\partial \mathrm{MSE}(\langle O \rangle)}{\partial \lambda_{l}} =& \frac{\partial}{\partial \lambda_{l}} \left[\frac{1}{4^{N_g}}\sum_{\bm{\theta} \in \{0,\frac{\pi}{2},\pi,\frac{3\pi}{2}\}^{N_g}} \left( \langle O \rangle_{\bm{\theta}} - \langle \widetilde{O} \rangle_{\bm{\theta}} \right)^2\right]\\
    =& -2 \frac{1}{4^{N_g}}\sum_{\bm{\theta} \in \{0,\frac{\pi}{2},\pi,\frac{3\pi}{2}\}^{N_g}} \left( \langle O \rangle_{\bm{\theta}} - \langle \widetilde{O} \rangle_{\bm{\theta}} \right) \frac{\partial}{\partial \lambda_{l}} \langle \widetilde{O} \rangle_{\bm{\theta}}\\
    =& -2 \frac{1}{4^{N_g}}\sum_{\bm{\theta} \in \{0,\frac{\pi}{2},\pi,\frac{3\pi}{2}\}^{N_g}} \left(  \sum_{\vec{s}} f(\vec{s},\bm{\theta},O,\rho) - \widetilde{f}(\vec{s},\bm{\theta},O,\rho) \right) \frac{\partial}{\partial \lambda_{l}} \sum_{\vec{s'}} \widetilde{f}(\vec{s'},\bm{\theta},O,\rho)\\
    =& -2 \frac{1}{4^{N_g}}\sum_{\bm{\theta} \in \{0,\frac{\pi}{2},\pi,\frac{3\pi}{2}\}^{N_g}} \left(  \sum_{\vec{s}} (1-g_{\mathcal{N}}(\vec{s}))f(\vec{s},\bm{\theta},O,\rho) \right) \frac{\partial}{\partial \lambda_{l}} \sum_{\vec{s'}} g_{\mathcal{N}}(\vec{s'})f(\vec{s'},\bm{\theta},O,\rho)\\
    =& -2 \frac{1}{4^{N_g}}\sum_{\bm{\theta} \in \{0,\frac{\pi}{2},\pi,\frac{3\pi}{2}\}^{N_g}}  \sum_{\vec{s},\vec{s'}} \left( \frac{\partial}{\partial \lambda_{l}} g_{\mathcal{N}}(\vec{s'}) \right)(1-g_{\mathcal{N}}(\vec{s}))f(\vec{s},\bm{\theta},O,\rho)  f(\vec{s'},\bm{\theta},O,\rho)\\
    =& -2 \sum_{\vec{s},\vec{s'}} \left( \frac{\partial}{\partial \lambda_{l}} g_{\mathcal{N}}(\vec{s'}) \right)(1-g_{\mathcal{N}}(\vec{s})) \frac{1}{4^{N_g}}\sum_{\bm{\theta} \in \{0,\frac{\pi}{2},\pi,\frac{3\pi}{2}\}^{N_g}} f(\vec{s},\bm{\theta},O,\rho)  f(\vec{s'},\bm{\theta},O,\rho)\\
    =& -2 \sum_{\vec{s},\vec{s'}} \left( \frac{\partial}{\partial \lambda_{l}} g_{\mathcal{N}}(\vec{s'}) \right)(1-g_{\mathcal{N}}(\vec{s})) \frac{1}{4^{N_g}}\sum_{\bm{\theta} \in \{0,\frac{\pi}{2},\pi,\frac{3\pi}{2}\}^{N_g}} f(\vec{s},\bm{\theta},O,\rho)  f(\vec{s'},\bm{\theta},O,\rho)\\
    =& -\frac{2}{4^{N_g}}\sum_{\bm{\theta} \in \{0,\frac{\pi}{2},\pi,\frac{3\pi}{2}\}^{N_g}} \sum_{\vec{s},\vec{s'}} \left( \frac{\partial}{\partial \lambda_{l}} g_{\mathcal{N}}(\vec{s'}) \right)(1-g_{\mathcal{N}}(\vec{s})) f(\vec{s},\bm{\theta},O,\rho)  f(\vec{s'},\bm{\theta},O,\rho),
  \end{aligned}
\end{equation}
where $g_{\mathcal{N}}(\vec{s})$ is the noise effect factor of all noises in the circuit, appearing in Eq.~\eqref{eq:pauli_noise_effect_factor}, and the seventh equality is obtained by using Lemma~\ref{lem:cross_term}.

By Eq.~\eqref{eq:noise_effect_factor_product}, we can express the noise effect factor $g_{\mathcal{N}}(\vec{s})$ as $g_{\mathcal{N}}(\vec{s})=\prod_{i} g_{\mathcal{N}_{i}}(\vec{s})$, where $g_{\mathcal{N}_{i}}(\vec{s})$ is the noise effect factor of the $i$-th layer. The noise effect factor $g_{\mathcal{N}_{i}}(\vec{s})$ can be expressed as:
\begin{equation}
  \begin{aligned}
    g_{\mathcal{N}_{i}}(\vec{s}) &= \tr{s_i|_{\mathcal{N}_{i}} \mathcal{N}_{i,j}(s_i|_{\mathcal{N}_{i}})}\\
    &= \tr{s_i|_{\mathcal{N}_{i}} \left( (1-\lambda_{i})s_i|_{\mathcal{N}_{i}} + \lambda_{i}\tr{s_i|_{\mathcal{N}_{i}}} \frac{\mathbb{I}}{2} \right)}\\
    &= 1-\lambda_{i} + \lambda_{i}\mathbf{1}_{(s_i|_{\mathcal{N}_{i}}=\mathbb{I}/\sqrt{2})}(\vec{s}),
  \end{aligned}
\end{equation}
where $s_i|_{\mathcal{N}_{i}}$ is the restriction of the Pauli operator $s_i$ to the subsystem acted on by the noise $\mathcal{N}_{i}$, and $\mathbf{1}_{(s_i|_{\mathcal{N}_{i}}=\mathbb{I}/\sqrt{2})}(\vec{s})$ is an indicator function that equals 1 if $s_i|_{\mathcal{N}_{i}}=\mathbb{I}/\sqrt{2}$ and 0 otherwise.
Thus, we can express the gradient of the MSE with respect to the noise strength $\lambda_{l}$ as:
\begin{equation}
  \begin{aligned}
    \frac{\partial \mathrm{MSE}(\langle O \rangle)}{\partial \lambda_{l}} =& -\frac{2}{4^{N_g}}\sum_{\bm{\theta} \in \{0,\frac{\pi}{2},\pi,\frac{3\pi}{2}\}^{N_g}} \sum_{\vec{s},\vec{s'}} \left( \frac{\partial}{\partial \lambda_{l}} g_{\mathcal{N}}(\vec{s'}) \right)(1-g_{\mathcal{N}}(\vec{s})) f(\vec{s},\bm{\theta},O,\rho)  f(\vec{s'},\bm{\theta},O,\rho)\\
    =& -\frac{2}{4^{N_g}}\sum_{\bm{\theta} \in \{0,\frac{\pi}{2},\pi,\frac{3\pi}{2}\}^{N_g}} \sum_{\vec{s},\vec{s'}} \left( \mathbf{1}_{(s_l|_{\mathcal{N}_{l}}=\mathbb{I}/\sqrt{2})}(\vec{s'})-1 \right)g_{\mathcal{N}\setminus \mathcal{N}_{l}}(\vec{s'})(1-g_{\mathcal{N}}(\vec{s})) f(\vec{s},\bm{\theta},O,\rho)  f(\vec{s'},\bm{\theta},O,\rho)\\
    =& -\frac{2}{4^{N_g}}\sum_{\bm{\theta} \in \{0,\frac{\pi}{2},\pi,\frac{3\pi}{2}\}^{N_g}} \left(\sum_{\vec{s}} (1-g_{\mathcal{N}}(\vec{s}))f(\vec{s},\bm{\theta},O,\rho) \right) \left( \sum_{\vec{s'}}  \left( \mathbf{1}_{(s_l|_{\mathcal{N}_{l}}=\mathbb{I}/\sqrt{2})}(\vec{s'})-1 \right)g_{\mathcal{N}\setminus \mathcal{N}_{l}}(\vec{s'})  f(\vec{s'},\bm{\theta},O,\rho)  \right),
  \end{aligned}
\end{equation}
where $g_{\mathcal{N}\setminus \mathcal{N}_{l}}(\vec{s})$ is the noise effect factor of all noises in the circuit except for the $l$-th layer.

Hence, we can rewrite the gradient of the MSE with respect to the noise strength $\lambda_{l}$ as:
\begin{equation}\label{ap:eq:noise_sensitivity}
  \abs{\frac{\partial \mathrm{MSE}(\langle O \rangle)}{\partial \lambda_{l}}} = \abs{\mathbb{E}^*_{\bm{\theta}} h_{\mathrm{sensitivity}}(\bm{\theta})},
\end{equation}
where $\mathbb{E}^*_{\bm{\theta}}$ represents the expectation value that $\bm{\theta}$ takes values in $\{0, \frac{\pi}{2}, \pi, \frac{3\pi}{2}\}^{N_g}$ with uniform distribution, and
\begin{equation}
  h_{\mathrm{sensitivity}}(\bm{\theta}) = -2 \left(\sum_{\vec{s}} (1-g_{\mathcal{N}}(\vec{s}))f(\vec{s},\bm{\theta},O,\rho) \right) \left( \sum_{\vec{s'}}  \left( \mathbf{1}_{(s_l|_{\mathcal{N}_{l}}=\mathbb{I}/\sqrt{2})}(\vec{s'})-1 \right)g_{\mathcal{N}\setminus \mathcal{N}_{l}}(\vec{s'})  f(\vec{s'},\bm{\theta},O,\rho)  \right).
\end{equation}
Actually, we have $\abs{h_{\mathrm{sensitivity}}(\bm{\theta})} \leq 8 \norm{O}_\infty^2$, by following inequalities:
\begin{equation}
\abs{\sum_{\vec{s}} (1-g_{\mathcal{N}}(\vec{s}))f(\vec{s},\bm{\theta},O,\rho)} = \abs{ \langle O \rangle_{\bm{\theta}} - \langle \widetilde{O} \rangle_{\bm{\theta}} } \leq 2 \norm{O}_\infty
\end{equation}  and 
\begin{equation}
\abs{\sum_{\vec{s'}}  \left( \mathbf{1}_{(s_l|_{\mathcal{N}_{l}}=\mathbb{I}/\sqrt{2})}(\vec{s'})-1 \right)g_{\mathcal{N}\setminus \mathcal{N}_{l}}(\vec{s'})  f(\vec{s'},\bm{\theta},O,\rho)  }= \abs{ \langle \widetilde{O} \rangle_{\bm{\theta}}' - \langle \widetilde{O} \rangle_{\bm{\theta}} '' } \leq 2\norm{O}_\infty,
\end{equation}
where $\langle \widetilde{O} \rangle_{\bm{\theta}}'$ and $\langle \widetilde{O} \rangle_{\bm{\theta}}''$ are the noisy expectation values of the observable $O$ with respect to set the noise strength $\lambda_{l}$ as $1$ and $0$, respectively.

\subsubsection{Single-qubit amplitude damping noise}
In the last section, we analyzed the noise sensitivity for single-qubit depolarizing noise. But depolarizing noise is not a representative example for PCS1 channels, since it is unital and diagonal.
Here, we consider another common noise model, the single-qubit amplitude damping noise $\mathcal{N}^{(amp)}_{l}$ with strength $\gamma_{l}$, where $l$ is the layer index.
The single-qubit amplitude damping noise $\mathcal{N}^{(amp)}_{l}$ can be expressed using Pauli transfer matrix as shown in Eq.~\eqref{eq:amplitude_damping_ptm}.

Using Eq.~\eqref{ap:eq:mse_noise_robustness_derivation} and Eq.~\eqref{eq:pauli_path_integral_noisy}, we can express the gradient of the MSE with respect to the noise strength $\gamma_{l}$ as:
\begin{equation}
  \begin{aligned}
    \frac{\partial \mathrm{MSE}(\langle O \rangle)}{\partial \gamma_{l}} =& \frac{\partial}{\partial \gamma_{l}} \left[\frac{1}{4^{N_g}}\sum_{\bm{\theta} \in \{0,\frac{\pi}{2},\pi,\frac{3\pi}{2}\}^{N_g}} \left( \langle O \rangle_{\bm{\theta}} - \langle \widetilde{O} \rangle_{\bm{\theta}} \right)^2\right]\\
    =& \frac{-2 }{4^{N_g}}\sum_{\bm{\theta} \in \{0,\frac{\pi}{2},\pi,\frac{3\pi}{2}\}^{N_g}} \left( \langle O \rangle_{\bm{\theta}} - \langle \widetilde{O} \rangle_{\bm{\theta}} \right) \frac{\partial}{\partial \gamma_{l}} \langle \widetilde{O} \rangle_{\bm{\theta}}\\
    =& \frac{-2}{4^{N_g}}\sum_{\bm{\theta} \in \{0,\frac{\pi}{2},\pi,\frac{3\pi}{2}\}^{N_g}} \left(  \sum_{\vec{s}} f(\vec{s},\bm{\theta},O,\rho) - \widetilde{f}(\vec{s},\bm{\theta},O,\rho) \right) \frac{\partial}{\partial \gamma_{l}} \sum_{\vec{s}} \widetilde{f}(\vec{s},\bm{\theta},O,\rho)\\
    =& \frac{-2 }{4^{N_g}}\sum_{\bm{\theta} \in \{0,\frac{\pi}{2},\pi,\frac{3\pi}{2}\}^{N_g}} \left(  \sum_{\vec{s}} f(\vec{s},\bm{\theta},O,\rho) - \widetilde{f}(\vec{s},\bm{\theta},O,\rho) \right)\\
    & \cdot \frac{\partial}{\partial \gamma_{l}} \left\{\sum_{\vec{s}} \sum_{\vec{\tau}}  \tr{O s_{N_g}} \tr{s_0 \rho} \prod_{i=1}^{{N_g}} \tr{\tau_i U_i(\theta_i) s_{i-1} U_i^\dagger(\theta_i)} \tr{s_i \mathcal{N}^{(amp)}_{i}(\tau_i)} \right\} \\
    =& \frac{-2}{4^{N_g}}\sum_{\bm{\theta} \in \{0,\frac{\pi}{2},\pi,\frac{3\pi}{2}\}^{N_g}} \left(  \sum_{\vec{s}} f(\vec{s},\bm{\theta},O,\rho) - \widetilde{f}(\vec{s},\bm{\theta},O,\rho) \right)  \left\{\sum_{\vec{s}} \sum_{\vec{\tau}}  \tr{O s_{N_g}} \tr{s_0 \rho} \right.\\
    & \cdot \left. \left(\prod_{i=1}^{{N_g}} \tr{\tau_i U_i(\theta_i) s_{i-1} U_i^\dagger(\theta_i)}\right) \left(\prod_{i=1, i\neq l}^{{N_g}}\tr{s_i \mathcal{N}^{(amp)}_{i}(\tau_i)}\right) \left(\frac{\partial}{\partial \gamma_{l}}\tr{s_l \mathcal{N}^{(amp)}_{l}(\tau_l)}\right) \right\}.
  \end{aligned}
\end{equation}

By the Pauli transfer matrix of the single-qubit amplitude damping noise shown in Eq.~\eqref{eq:amplitude_damping_ptm}, we have:
\begin{equation}
  \frac{\partial}{\partial \gamma_{l}}\tr{s_l \mathcal{N}^{(amp)}_{l}(\tau_l)} 
  =\begin{cases}
    -\frac{1}{2\sqrt{1-\gamma_{l}}} & \text{if } s_l = \tau_l \text{ and } s_l\mid_{\mathcal{N}^{(amp)}_{l}} \in \{X/\sqrt{2}, Y/\sqrt{2}\},\\
    -1 & \text{if } s_l = \tau_l \text{ and } s_l\mid_{\mathcal{N}^{(amp)}_{l}} = Z/\sqrt{2},\\
    1 & \text{if }s_l \mid_{\mathcal{N}^{(amp)}_{l}}= \mathbb{I}/\sqrt{2}, \tau_l\mid_{\mathcal{N}^{(amp)}_{l}} = Z/\sqrt{2}, \text{ and } s_l\mid^c_{\mathcal{N}^{(amp)}_{l}} = \tau_l\mid^c_{\mathcal{N}^{(amp)}_{l}},\\
    0 & \text{otherwise}.
  \end{cases}
\end{equation}
where $s_l\mid_{\mathcal{N}^{(amp)}_{l}}$ and $\tau_l\mid_{\mathcal{N}^{(amp)}_{l}}$ are the restrictions of the Pauli operators $s_l$ and $\tau_l$ to the subsystem acted on by the noise $\mathcal{N}^{(amp)}_{l}$, respectively, and $s_l\mid^c_{\mathcal{N}^{(amp)}_{l}}$ and $\tau_l\mid^c_{\mathcal{N}^{(amp)}_{l}}$ are the restrictions of the Pauli operators $s_l$ and $\tau_l$ to the complement of the subsystem acted on by the noise $\mathcal{N}^{(amp)}_{l}$, respectively.

We consider a linear map $\mathcal{E}^{(amp)}_{l}$, defined such that its Pauli transfer matrix is given by $\frac{1}{2}\frac{\partial}{\partial \gamma_{l}}\tr{s_l \mathcal{N}^{(amp)}_{l}(\tau_l)}$.
Equivalently, its Pauli transfer matrix $\mathcal{S}_{\mathcal{E}^{(amp)}_{l}}$, with entries $\left(\mathcal{S}_{\mathcal{E}^{(amp)}_{l}}\right)_{i,j}=\tr{\mathcal{E}^{(amp)}_{l}(\sigma_i) \sigma_j}$ can be expressed as:
\begin{equation}
  \mathcal{S}_{\mathcal{E}^{(amp)}_{l}} = \begin{pmatrix}
    \frac{1}{2} & 0 & 0 & \frac{1}{2} \\
    0 & -\frac{1}{4\sqrt{1-\gamma_{l}}} & 0 & 0 \\
    0 & 0 & -\frac{1}{4\sqrt{1-\gamma_{l}}} & 0 \\
    0 & 0 & 0 & -\frac{1}{2}
  \end{pmatrix},
\end{equation}
where $\sigma_0=\mathbb{I}$, $\sigma_1=X$, $\sigma_2=Y$, and $\sigma_3=Z$.
Although the map $\mathcal{E}^{(amp)}_{l}$ is not a quantum channel, we can still use it to express the gradient of the MSE with respect to the noise strength $\gamma_{l}$ as:
\begin{equation}\label{ap:eq:amplitude_damping_noise_sensitivity}
  \abs{\frac{\partial \mathrm{MSE}(\langle O \rangle)}{\partial \gamma_{l}}} = \abs{\mathbb{E}^*_{\bm{\theta}} h_{\mathrm{sensitivity}}^{(amp)}(\bm{\theta})},
\end{equation}
where $\mathbb{E}^*_{\bm{\theta}}$ represents the expectation value that $\bm{\theta}$ takes values in $\{0, \frac{\pi}{2}, \pi, \frac{3\pi}{2}\}^{N_g}$ with uniform distribution, and
\begin{equation}
    h_{\mathrm{sensitivity}}^{(amp)}(\bm{\theta}) = -2 \left(  \sum_{\vec{s}} f(\vec{s},\bm{\theta},O,\rho) - \widetilde{f}(\vec{s},\bm{\theta},O,\rho) \right) \left\{\sum_{\vec{s},\vec{\tau}} 2 \widetilde{f}_{*}^{(\vec{\tau})}(\vec{s},\bm{\theta},O,\rho) \right\},
\end{equation}
with
\begin{equation}
  \widetilde{f}_{*}^{(\vec{\tau})}(\vec{s},\bm{\theta},O,\rho) = \tr{O s_{N_g}} \tr{s_0 \rho} \left(\prod_{i=1}^{{N_g}} \tr{\tau_i U_i(\theta_i) s_{i-1} U_i^\dagger(\theta_i)}\right) \left(\prod_{i=1, i\neq l}^{{N_g}}\tr{s_i \mathcal{N}^{(amp)}_{i}(\tau_i)}\right) \tr{s_l \mathcal{E}^{(amp)}_{l}(\tau_l)}.
\end{equation}

\begin{remark}
  We futher assume that the noise strength $\gamma_{l}$ satisfies $\gamma_{l} < 0.5$.
  Under this condition, $\mathcal{E}^{(amp)}_{l}$ is a PCS1 map, which enable the efficient back-propagation algorithm in Section~\ref{sec:OBPPP}.
  This assumption is reasonable because, in practical quantum devices, the amplitude damping noise strength is typically much smaller than $\frac{1}{2}$.
\end{remark}

It's worth noting that the quantity $\frac{\partial \mathrm{MSE}(\langle O \rangle)}{\partial \gamma_{l}}$ can be alternatively estimated by using an amplitude damping noise channel $\mathcal{N}^{(amp)'}_{l}$ with strength $\gamma'_{l}=1$ and a Pauli noise channel $\mathcal{N}^{(Pauli)}_{l}(\rho)= (1-p_x-p_y-p_z) \rho + p_x X \rho X + p_y Y \rho Y + p_z Z \rho Z$ with $p_x=p_y=0$ and $p_z=\frac{1}{2}-\frac{1}{4\sqrt{1-\gamma_l}}$, respectively. Specifically, we have:
\begin{equation}
  \frac{\partial}{\partial \gamma_{l}}\tr{s_l \mathcal{N}^{(amp)}_{l}(\tau_l)} 
  = \tr{s_l \mathcal{N}^{(amp)'}_{l}(\tau_l)} - \tr{s_l \mathcal{N}^{(Pauli)}_{l}(\tau_l)}.
\end{equation}
By linearity, we have:
\begin{equation}
  \begin{aligned}
    &\abs{\left\{\sum_{\vec{s},\vec{\tau}} \tr{O s_{N_g}} \tr{s_0 \rho}  \left(\frac{\partial}{\partial \gamma_{l}}\tr{s_l \mathcal{N}^{(amp)}_{l}(\tau_l)}\right) \left(\prod_{i=1}^{{N_g}} \tr{\tau_i U_i(\theta_i) s_{i-1} U_i^\dagger(\theta_i)}\right) \left(\prod_{i=1, i\neq l}^{{N_g}}\tr{s_i \mathcal{N}^{(amp)}_{i}(\tau_i)}\right) \right\} } \\
    &= \abs{ \langle \widetilde{O} \rangle_{\bm{\theta}}' - \langle \widetilde{O} \rangle_{\bm{\theta}} '' } \leq 2\norm{O}_\infty,
  \end{aligned}
\end{equation}
where $\langle \widetilde{O} \rangle_{\bm{\theta}}'$ and $\langle \widetilde{O} \rangle_{\bm{\theta}}''$ are the noisy expectation values of the observable $O$ with respect to replace the noise channel at layer $l$ as $\mathcal{N}^{(amp)'}_{l}$ and $\mathcal{N}^{(Pauli)}_{l}$, respectively, while keeping other noise channels unchanged.

Therefore, combining with the bound of $\abs{ \langle O \rangle_{\bm{\theta}} - \langle \widetilde{O} \rangle_{\bm{\theta}} } \leq 2 \norm{O}_\infty$, we have $\abs{h_{\mathrm{sensitivity}}^{(amp)}(\bm{\theta})} \leq 4\norm{O}_\infty^2$.

\subsection{Noise sensitivity to Fisher Information}

In this section, we establish a connection between the noise sensitivity and the (classical) Fisher information associated with estimating the noise rate $\lambda$ from measurement outcomes.
Fix a noise channel $\mathcal{N}(\lambda)$ with noise rate $\lambda$, the noise sensitivity is given by:
\begin{equation}
\left|\frac{\partial \mathrm{MSE}(\langle O \rangle)}{\partial \lambda}\right| = \left| \mathbb{E}_{\bm{\theta}} \frac{\partial }{\partial \lambda} \left( \langle O \rangle_{\bm{\theta}} - \langle \widetilde{O} \rangle_{\bm{\theta}} \right)^2 \right| = \left| -2\mathbb{E}_{\bm{\theta}}  \left( \langle O \rangle_{\bm{\theta}} - \langle \widetilde{O} \rangle_{\bm{\theta}} \right)  \frac{\partial }{\partial \lambda}  \langle \widetilde{O} \rangle_{\bm{\theta}}  \right|.
\end{equation}

To make this connection explicit, consider estimating the expectation value of a Pauli observable $O \in \{\mathbb{I},X,Y,Z\}^{\otimes n}$, whose measurement outputs are $X \in \{+1,-1\}$ with probabilities $\Pr(X=+1)=p(\lambda)$ and $\Pr(X=-1)=1- p(\lambda)$, respectively.
Such measurements can be implemented by performing a projective measurement in the eigenbasis of $O$.Accordingly, the observed noisy expectation value is:
\begin{equation}
  \langle \widetilde{O} \rangle_{\bm{\theta}} = \mathbb{E}\left[X|\lambda,\bm{\theta} \right] = 2p(\lambda)-1.
\end{equation}

Treating $\lambda$ as an unknown parameter in a quantum metrology task, the classical Fisher information associated with estimating $\lambda$ from the measurement outcome $X$ is given by:
\begin{equation}
  \begin{aligned}
  F_{\bm{\theta}}(\lambda) &= \mathbb{E}_{X}  \left( \frac{\partial}{\partial \lambda} \ln p(X|\lambda,\bm{\theta}) \right)^2 = \mathbb{E}_{X}  \left( \frac{1}{p(X|\lambda,\bm{\theta})} \frac{\partial p(X|\lambda,\bm{\theta})}{\partial \lambda} \right)^2 \\
&= \sum_{X} \frac{1}{p(X|\lambda,\bm{\theta})} \left( \frac{\partial p(X|\lambda,\bm{\theta})}{\partial \lambda} \right)^2 
  = \frac{p^\prime(\lambda)^2}{p(\lambda)}+\frac{p^\prime(\lambda)^2}{1-p(\lambda)} = \frac{p^\prime(\lambda)^2}{(1-p(\lambda))p(\lambda)},
  \end{aligned}
\end{equation}
where $p(X|\lambda,\bm{\theta})$ is the probability of observing measurement outcome $X$ given noise rate $\lambda$ and parameter configuration $\bm{\theta}$, and $p^\prime(\lambda) = \frac{\partial p(\lambda)}{\partial \lambda}$.

Rewriting the expression for the noise sensitivity of $\lambda$, we have:
\begin{equation}
  \begin{aligned}
  \left|\frac{\partial \mathrm{MSE}(\langle O \rangle)}{\partial \lambda}\right| & = \left|\mathbb{E}_{\bm{\theta}} 4 \left(2p(\lambda) - 1 - \langle O \rangle_{\bm{\theta}}\right)p^\prime(\lambda)\right| \\
&\leq \mathbb{E}_{\bm{\theta}} 4 \left|\left(2p(\lambda) - 1-\langle O \rangle_{\bm{\theta}} \right)p^\prime(\lambda)\right| \\
&= \mathbb{E}_{\bm{\theta}} 2\left|\left(\langle \widetilde{O} \rangle_{\bm{\theta}}-\langle O \rangle_{\bm{\theta}}  \right)\right|\sqrt{F_{\bm{\theta}}(\lambda) \mathrm{Var}\left[X\mid \lambda,\bm{\theta} \right]}.
  \end{aligned}
\end{equation}
Employing the Cauchy–Schwarz inequality and using the fact that $\mathrm{Var}\left[X \mid \lambda, \bm{\theta}\right] \leq 1$, we obtain
\begin{equation}
\left|\frac{\partial \mathrm{MSE}(\langle O \rangle)}{\partial \lambda}\right| \leq 2 \sqrt{\mathrm{MSE}(\langle O \rangle)} \sqrt{\mathbb{E}_{\bm{\theta}} F_{\bm{\theta}}(\lambda)}.
\end{equation}
This inequality implies that there exists at least one configuration $\bm{\theta}$ such that
\begin{equation}
F_{\bm{\theta}}(\lambda) \geq \frac{1}{4\mathrm{MSE}(\langle O \rangle)} \left|\frac{\partial \mathrm{MSE}(\langle O \rangle)}{\partial \lambda}\right|^2.
\end{equation}
Therefore, our framework can also be employed to evaluate the power of the PQC architecture with respect to the estimation of the noise rate $\lambda$.

Assume the expectation value of the observable $O$ is obtained by performing positive operator-valued measure~(POVM) $\mathcal{M}=\{M_1,M_2,\cdots,M_k\}$ with probability $p(x|\lambda)$.
Let $X$ be the random variable representing the measurement outcome, which is distributed according to $p(x|\lambda)$.

The classical Fisher information of the measurement outcome $X$ is given by:
\begin{equation}
  F(\lambda) = \mathbb{E}_{X}  \left( \frac{\partial}{\partial \lambda} \ln p(X|\lambda,\bm{\theta}) \right)^2 = \mathbb{E}_{X}  \left( \frac{1}{p(X|\lambda,\bm{\theta})} \frac{\partial p(X|\lambda,\bm{\theta})}{\partial \lambda} \right)^2 = \sum_{X} \frac{1}{p(X|\lambda,\bm{\theta})} \left( \frac{\partial p(X|\lambda,\bm{\theta})}{\partial \lambda} \right)^2.
\end{equation}

On the other hand, the noise sensitivity of the expectation value of the observable $O$ with respect to the noise strength $\lambda$ is given by:
\begin{equation}
  \begin{aligned}
  \abs{\frac{\partial \mathrm{MSE}(\langle O \rangle)}{\partial \lambda}} &= \abs{ \mathbb{E}_{\bm{\theta}} \frac{\partial \left(\langle O \rangle_{\bm{\theta}} -\sum_{X} X \cdot p(X|\lambda,\bm{\theta}) \right)^2}{\partial \lambda} }\\
  & = \abs{ \mathbb{E}_{\bm{\theta}} \left[ 2\left(\sum_{X} X \cdot p(X|\lambda,\bm{\theta}) \right)\left(\sum_{X} X \cdot \frac{\partial p(X|\lambda,\bm{\theta})}{\partial \lambda} \right)  - 2\langle O \rangle_{\bm{\theta}} \left(\sum_{X} X \cdot \frac{\partial p(X|\lambda,\bm{\theta})}{\partial \lambda} \right) \right] } \\
  & = 2\abs{ \mathbb{E}_{\bm{\theta}} \left[ \left(\langle \widetilde{O} \rangle_{\bm{\theta}} - \langle O \rangle_{\bm{\theta}} \right) \left(\sum_{X} X \cdot \frac{\partial p(X|\lambda,\bm{\theta})}{\partial \lambda} \right)   \right] }
  \end{aligned}
\end{equation}
By the Cauchy-Schwarz inequality, we have:
\begin{equation}
  \abs{ \sum_{X} X \cdot \frac{\partial p(X|\lambda,\bm{\theta})}{\partial \lambda} } \leq \sqrt{ \sum_{X} X^2 \cdot p(X|\lambda,\bm{\theta})} \sqrt{ \sum_{X} \frac{1}{p(X|\lambda,\bm{\theta})} \left( \frac{\partial p(X|\lambda,\bm{\theta})}{\partial \lambda} \right)^2 } = \sqrt{ \mathbb{E}_{X} X^2 } \sqrt{ F(\lambda)}.
\end{equation}

We can obtain the following inequality:
\begin{equation}
  \abs{\frac{\partial \mathrm{MSE}(\langle O \rangle)}{\partial \lambda} } \leq 2 \sqrt{\mathrm{MSE}(\langle O \rangle)} \sqrt{\mathbb{E}_{\bm{\theta}} \abs{ \sum_{X} X \cdot \frac{\partial p(X|\lambda,\bm{\theta})}{\partial \lambda} }^2} \leq 2 \sqrt{\mathrm{MSE}(\langle O \rangle)} \sqrt{\mathbb{E}_{\bm{\theta}} \mathbb{E}_{X} X^2 F(\lambda)} .
\end{equation}

\section{Trainability}

\subsection{Estimation of the variance of the gradient}
\label{sec:vargrad}

The gradient of the expectation value $\langle O \rangle$ with respect to the parameter $\theta_i$ in the parameterized quantum circuit $\mathcal{C}(\bm{\theta})$ is given by:
\begin{equation}
  \frac{\partial \langle O \rangle}{\partial{\theta_i}} = \frac{\partial}{\partial{\theta_i}} \sum_{\vec{s}} f(\vec{s},\bm{\theta},O,\rho) = \sum_{\vec{s}} \frac{\partial f(\vec{s},\bm{\theta},O,\rho)}{\partial{\theta_i}}.
\end{equation}
And by using the parameter shift rule~\cite{schuld2019evaluating}, we can express the gradient as:
\begin{equation}
  \frac{\partial \langle O \rangle}{\partial{\theta_i}} =\frac{ \langle O \rangle_{\bm{\theta}+\frac{\pi}{2}e_i}-\langle O \rangle_{\bm{\theta}-\frac{\pi}{2}e_i}}{2} = \sum_{\vec{s}} \frac{f(\vec{s},\bm{\theta}+\frac{\pi}{2}e_i,O,\rho)-f(\vec{s},\bm{\theta}-\frac{\pi}{2}e_i,O,\rho)}{2},
\end{equation}
where $ \langle O \rangle_{\bm{\theta}}$ is the expectation value of the observable $O$ with respect to rotation angles $\bm{\theta}$, and $e_i$ is the $i$-th unit vector in the parameter space $\bm{\theta}$.

The variance of the gradient $\frac{\partial \langle O \rangle}{\partial{\theta_i}}$ is an important quantity in quantum machine learning, as it provides a measure of the trainability of the quantum circuit.
Because the expectation of the gradient $\frac{\partial \langle O \rangle}{\partial{\theta_i}}$ can be calculated by:
\begin{equation}
  \begin{aligned}
    \mathbb{E}_{\bm{\theta}} \left( \frac{\partial \langle O \rangle}{\partial{\theta_i}} \right) & = \mathbb{E}_{\bm{\theta}} \left(\frac{ \langle O \rangle_{\bm{\theta}+\frac{\pi}{2}e_i}-\langle O \rangle_{\bm{\theta}-\frac{\pi}{2}e_i}}{2}\right)\\
    &=  \frac{ \mathbb{E}_{\bm{\theta}}\langle O \rangle_{\bm{\theta}}-\mathbb{E}_{\bm{\theta}}\langle O \rangle_{\bm{\theta}}}{2}\\
    &= 0,
  \end{aligned}
\end{equation}
where $\mathbb{E}_{\bm{\theta}}$ denotes the expectation over the uniform parameter space $\bm{\theta}\in [0,2\pi)$.

The variance of the gradient $\frac{\partial \langle O \rangle}{\partial{\theta_i}}$ can be expressed as:
\begin{equation}
  \begin{aligned}\label{eq:variance_gradient}
    \mathrm{Var}\left(\frac{\partial \langle O \rangle}{\partial{\theta_i}}\right) &= \mathbb{E}_{\bm{\theta}} \left(\frac{ \langle O \rangle_{\bm{\theta}+\frac{\pi}{2}e_i}-\langle O \rangle_{\bm{\theta}-\frac{\pi}{2}e_i}}{2} \right)^2\\
    &= \mathbb{E}_{\bm{\theta}} \left(\sum_{\vec{s}} \frac{f(\vec{s},\bm{\theta}+\frac{\pi}{2}e_i,O,\rho)-f(\vec{s},\bm{\theta}-\frac{\pi}{2}e_i,O,\rho)}{2}\right)^2\\
    &= \sum_{\vec{s},\vec{s'}} \frac{1}{4} \mathbb{E}_{\bm{\theta}} \left(f(\vec{s},\bm{\theta}+\frac{\pi}{2}e_i,O,\rho)-f(\vec{s},\bm{\theta}-\frac{\pi}{2}e_i,O,\rho)\right) \left(f(\vec{s'},\bm{\theta}+\frac{\pi}{2}e_i,O,\rho)-f(\vec{s'},\bm{\theta}-\frac{\pi}{2}e_i,O,\rho)\right)\\
    &= \sum_{\vec{s},\vec{s'}} \frac{1}{4} \biggl( \mathbb{E}_{\bm{\theta}}f(\vec{s},\bm{\theta}+\frac{\pi}{2}e_i,O,\rho)f(\vec{s'},\bm{\theta}+\frac{\pi}{2}e_i,O,\rho)-\mathbb{E}_{\bm{\theta}}f(\vec{s},\bm{\theta}-\frac{\pi}{2}e_i,O,\rho)f(\vec{s'},\bm{\theta}-\frac{\pi}{2}e_i,O,\rho)  \\
    &  \quad + \mathbb{E}_{\bm{\theta}}f(\vec{s},\bm{\theta}+\frac{\pi}{2}e_i,O,\rho)f(\vec{s'},\bm{\theta}-\frac{\pi}{2}e_i,O,\rho)-\mathbb{E}_{\bm{\theta}}f(\vec{s},\bm{\theta}-\frac{\pi}{2}e_i,O,\rho)f(\vec{s'},\bm{\theta}+\frac{\pi}{2}e_i,O,\rho) \biggr).
  \end{aligned}
\end{equation}
The first term in the above equation can be simplified as:
\begin{equation}
  \begin{aligned}
    &\frac{1}{4} \mathbb{E}_{\bm{\theta}}f(\vec{s},\bm{\theta}+\frac{\pi}{2}e_i,O,\rho)f(\vec{s'},\bm{\theta}+\frac{\pi}{2}e_i,O,\rho)\\
    =&\frac{1}{4} \tr{O s_{N_g}}\tr{O s'_{N_g}} \tr{s_0 \rho}\tr{s'_0 \rho} \prod_{j=1,j\neq i}^{{N_g}}\mathbb{E}_{\theta_j} \tr{s_j U_j(\theta_j) s_{j-1} U_j^\dagger(\theta_j)}\tr{s'_j U_j(\theta_j) s'_{j-1} U_j^\dagger(\theta_j)}\\
    & \quad \mathbb{E}_{\theta_i}\tr{s_i U_i(\theta_i+\frac{\pi}{2}) s_{i-1} U_i^\dagger(\theta_i+\frac{\pi}{2})}\tr{s'_i U_i(\theta_i+\frac{\pi}{2}) s'_{i-1} U_i^\dagger(\theta_i+\frac{\pi}{2})}
  \end{aligned}
\end{equation}

For terms $\mathbb{E}_{\theta_j} \tr{s_j U_j(\theta_j) s_{j-1} U_j^\dagger(\theta_j)}\tr{s'_j U_j(\theta_j) s'_{j-1} U_j^\dagger(\theta_j)}$ we have:
\begin{equation}
  \begin{aligned}
    &\mathbb{E}_{\theta_j} \tr{s_j U_j(\theta_j) s_{j-1} U_j^\dagger(\theta_j)}\tr{s'_j U_j(\theta_j) s'_{j-1} U_j^\dagger(\theta_j)}\\
    =&\mathbb{E}_{\theta_j} \tr{s_j \exp{-i \frac{\theta_j}{2} P_j}C_j s_{j-1} C_j^\dagger \exp{i \frac{\theta_j}{2} P_j}}\tr{s'_j \exp{-i \frac{\theta_j}{2} P_j}C_j s'_{j-1} C_j^\dagger \exp{i \frac{\theta_j}{2} P_j}}\\
    =&\frac{1}{4}\sum_{\theta_j \in \{0,\frac{\pi}{2},\pi,\frac{3\pi}{2}\}} \tr{s_j \exp{-i \frac{\theta_j}{2} P_j}C_j s_{j-1} C_j^\dagger \exp{i \frac{\theta_j}{2} P_j}}\tr{s'_j \exp{-i \frac{\theta_j}{2} P_j}C_j s'_{j-1} C_j^\dagger \exp{i \frac{\theta_j}{2} P_j}}\\
    =&\frac{1}{4}\sum_{\theta_j \in \{0,\frac{\pi}{2},\pi,\frac{3\pi}{2}\}} \tr{s_j U_j(\theta_j) s_{j-1} U_j^\dagger(\theta_j)}\tr{s'_j U_j(\theta_j) s'_{j-1} U_j^\dagger(\theta_j)},
  \end{aligned}
\end{equation}
where the second equality is obtained by using the fact in Eq.~\eqref{eq:two_design_cor}.

Similarly, for terms $\mathbb{E}_{\theta_i}\tr{s_i U_i(\theta_i+\frac{\pi}{2}) s_{i-1} U_i^\dagger(\theta_i+\frac{\pi}{2})}\tr{s'_i U_i(\theta_i+\frac{\pi}{2}) s'_{i-1} U_i^\dagger(\theta_i+\frac{\pi}{2})}$ we have:
\begin{equation}
  \begin{aligned}
    &\mathbb{E}_{\theta_i}\tr{s_i U_i(\theta_i+\frac{\pi}{2}) s_{i-1} U_i^\dagger(\theta_i+\frac{\pi}{2})}\tr{s'_i U_i(\theta_i+\frac{\pi}{2}) s'_{i-1} U_i^\dagger(\theta_i+\frac{\pi}{2})}\\
    =&\mathbb{E}_{\theta_i}\tr{s_i \exp{-i \frac{\theta_i+\frac{\pi}{2}}{2} P_i}C_i s_{i-1} C_i^\dagger \exp{i \frac{\theta_i+\frac{\pi}{2}}{2} P_i}}\tr{s'_i \exp{-i \frac{\theta_i+\frac{\pi}{2}}{2} P_i}C_i s'_{i-1} C_i^\dagger \exp{i \frac{\theta_i+\frac{\pi}{2}}{2} P_i}}\\
    =&\mathbb{E}_{\theta_i}\tr{s_i \exp{-i \frac{\theta_i}{2} P_i}\exp{-i \frac{\pi}{4} P_i}C_i s_{i-1} C_i^\dagger \exp{i \frac{\pi}{4} P_i} \exp{i \frac{\theta_i}{2} P_i}}\\
    &\tr{s'_i \exp{-i \frac{\theta_i}{2} P_i}\exp{-i \frac{\pi}{4} P_i}C_i s'_{i-1} C_i^\dagger \exp{i \frac{\pi}{4} P_i} \exp{i \frac{\theta_i}{2} P_i}}\\
    =&\frac{1}{4}\sum_{\theta_i \in \{0,\frac{\pi}{2},\pi,\frac{3\pi}{2}\}}\tr{s_i U_i(\theta_i+\frac{\pi}{2}) s_{i-1} U_i^\dagger(\theta_i+\frac{\pi}{2})}\tr{s'_i U_i(\theta_i+\frac{\pi}{2}) s'_{i-1} U_i^\dagger(\theta_i+\frac{\pi}{2})}.
  \end{aligned}
\end{equation}

Thus we have:
\begin{equation}
  \mathbb{E}_{\bm{\theta}}f(\vec{s},\bm{\theta}+\frac{\pi}{2}e_i,O,\rho)f(\vec{s'},\bm{\theta}+\frac{\pi}{2}e_i,O,\rho)=\frac{1}{4^{N_g}}\sum_{\bm{\theta} \in \{0,\frac{\pi}{2},\pi,\frac{3\pi}{2}\}^{N_g}}f(\vec{s},\bm{\theta}+\frac{\pi}{2}e_i,O,\rho)f(\vec{s'},\bm{\theta}+\frac{\pi}{2}e_i,O,\rho).
\end{equation}

As a result, by similarly replacing $\mathbb{E}_{\bm{\theta}}$ with $\frac{1}{4^{N_g}}\sum_{\bm{\theta} \in \{0,\frac{\pi}{2},\pi,\frac{3\pi}{2}\}^{N_g}}$ in each terms of Eq.~\eqref{eq:variance_gradient}, we can obtain:
\begin{equation}\label{ap:eq:discrete_variance_gradient}
  \begin{aligned}
    \mathrm{Var}\left(\frac{\partial \langle O \rangle}{\partial{\theta_i}}\right)& = \mathbb{E}_{\bm{\theta}} \left(\frac{ \langle O \rangle_{\bm{\theta}+\frac{\pi}{2}e_i}-\langle O \rangle_{\bm{\theta}-\frac{\pi}{2}e_i}}{2} \right)^2=\frac{1}{4^{N_g}}\sum_{\bm{\theta} \in \{0,\frac{\pi}{2},\pi,\frac{3\pi}{2}\}^{N_g}} \left(\frac{ \langle O \rangle_{\bm{\theta}+\frac{\pi}{2}e_i}-\langle O \rangle_{\bm{\theta}-\frac{\pi}{2}e_i}}{2} \right)^2\\
    &= \frac{1}{4^{N_g}}\sum_{\bm{\theta} \in \{0,\frac{\pi}{2},\pi,\frac{3\pi}{2}\}^{N_g}} \left(\sum_{\vec{s}} \frac{f(\vec{s},\bm{\theta}+\frac{\pi}{2}e_i,O,\rho)-f(\vec{s},\bm{\theta}-\frac{\pi}{2}e_i,O,\rho)}{2}\right)^2 \\
    &=\frac{1}{4^{N_g}}\sum_{\bm{\theta} \in \{0,\frac{\pi}{2},\pi,\frac{3\pi}{2}\}^{N_g}} \left\{\sum_{\vec{s}} \tr{O s_{N_g}} \tr{s_0 \rho} \prod_{j=1,j\neq i}^{{N_g}}\mathbb{E}_{\theta_j} \tr{s_j U_j(\theta_j) s_{j-1} U_j^\dagger(\theta_j)} \right.\\
    &\quad  \frac{\tr{s_i U_i(\theta_i+\frac{\pi}{2}) s_{i-1} U_i^\dagger(\theta_i+\frac{\pi}{2})} - \tr{s_i U_i(\theta_i-\frac{\pi}{2}) s_{i-1} U_i^\dagger(\theta_i-\frac{\pi}{2})}}{2} \Biggr\}^2.
  \end{aligned}
\end{equation}
On the other hand, by Eq~\eqref{eq:gate_term_in_f_discrete}, if the rotation angle $\theta_i$ in the gate $U_i(\theta_i) = C_i e^{-i\frac{\theta_i}{2} P}$ takes values in $\{0, \frac{\pi}{2}, \pi, \frac{3\pi}{2}\}$, we have:
\begin{equation}
  \frac{\tr{s_i U_i(\theta_i+\frac{\pi}{2}) s_{i-1} U_i^\dagger(\theta_i+\frac{\pi}{2})} - \tr{s_i U_i(\theta_i-\frac{\pi}{2}) s_{i-1} U_i^\dagger(\theta_i-\frac{\pi}{2})}}{2} = \begin{cases}
    0, & [P_i, s_{i-1}] = 0, \\
    \tr{s_i U_i(\theta_i+\frac{\pi}{2}) s_{i-1} U_i^\dagger(\theta_i+\frac{\pi}{2})}, & \{P_i, {s_{i-1}}\} = 0.
    \end{cases}
\end{equation}
Therefore, we have:
\begin{equation}
  \mathrm{Var}\left(\frac{\partial \langle O \rangle}{\partial{\theta_i}}\right) = \frac{1}{4^{N_g}}\sum_{\bm{\theta} \in \{0,\frac{\pi}{2},\pi,\frac{3\pi}{2}\}^{N_g}} \left(\sum_{\vec{s}: \{P_i, {s_{i-1}}\} = 0} f(\vec{s},\bm{\theta}+\frac{\pi}{2}e_i,O,\rho)\right)^2
\end{equation}

For noisy case, or there is PCS1 channel in circuit, we can also use the same method to estimate the variance of the gradient, and rewrite the variance of the gradient of the noisy expectation value $\langle \widetilde{O} \rangle$ as:
\begin{equation}\label{ap:eq:variance_gradient_noisy}
  \mathrm{Var}\left(\frac{\partial \langle \widetilde{O} \rangle}{\partial{\theta_i}}\right) = \frac{1}{4^{N_g}}\sum_{\bm{\theta} \in \{0,\frac{\pi}{2},\pi,\frac{3\pi}{2}\}^{N_g}} \left(\sum_{\vec{s}: \{P_i, {s_{i-1}}\} = 0} \widetilde{f}(\vec{s},\bm{\theta}+\frac{\pi}{2}e_i,O,\rho)\right)^2 = \mathbb{E}^*_{\bm{\theta}} h_{\mathrm{trainability}}(\bm{\theta}),
\end{equation}
where $\mathbb{E}^*_{\bm{\theta}}$ represents the expectation value that $\bm{\theta}$ takes values in $\{0, \frac{\pi}{2}, \pi, \frac{3\pi}{2}\}^{N_g}$ with uniform distribution, and
\begin{equation}
  h_{\mathrm{trainability}}(\bm{\theta}) = \left(\sum_{\vec{s}} \mathbf{1}_{ac}(s_{i-1},P_i) \widetilde{f}(\vec{s},\bm{\theta}+\frac{\pi}{2}e_i,O,\rho)\right)^2 ,
\end{equation}
where $\mathbf{1}_{ac}(s_{i-1},P_i)$ is the indicator function that equals to $1$ if $s_{i-1}$ and $P_i$ anti-commute, otherwise it equals to $0$.
There also holds that $\abs{h_{\mathrm{trainability}}(\bm{\theta})} = \left(\frac{ \langle \widetilde{O} \rangle_{\bm{\theta}+\frac{\pi}{2}e_i}-\langle \widetilde{O} \rangle_{\bm{\theta}-\frac{\pi}{2}e_i}}{2} \right)^2 \leq \norm{O}_\infty^2$.

\subsection{Noise induced barren plateaus}

In this section, we present an alternative proof for the noise induced barren plateaus phenomenon by using the Pauli path integral formalism.
For simplicity, we only consider the case that the observable $O$ is a single Pauli operator $O=P$.

Recall in Eq.~\eqref{eq:pauli_path_integral_noiseless}, the expectation value of an observable $O$ for circuit $\mathcal{C}(\bm{\theta})$ can be written as:
\begin{equation}
    \langle O \rangle = \sum_{\vec{s}} f(\vec{s},\bm{\theta},O,\rho)
\end{equation}
then we have 
\begin{equation}
\begin{aligned}
  \frac{\partial \langle P \rangle}{\partial{\theta_j}} &= \sum_{\vec{s}} \frac{\partial f(\vec{s},\bm{\theta},P,\rho)}{\partial{\theta_j}}\\
  &= \sum_{s_{N_g},s_{{N_g}-1},\cdots,s_0}  \tr{s_0 \rho} \tr{P s_{N_g}}\prod_{i\neq j}^{{N_g}} \tr{s_i U_i(\theta_i) s_{i-1} U_i^\dagger(\theta_i)} \frac{\partial }{\partial{\theta_j}}\left(\tr{s_j U_j(\theta_j) s_{j-1} U_j^\dagger(\theta_j)}\right).
\end{aligned}  
\end{equation}

We compute the explicit form of the term associated with the partial derivative:
\begin{equation}
  \begin{aligned}
    &\frac{\partial }{\partial{\theta_j}}\left(\tr{s_j U_j(\theta_j) s_{j-1} U_j^\dagger(\theta_j)}\right)\\
    = &\frac{\partial }{\partial{\theta_j}}\left(\tr{s_j\exp{-i \frac{\theta_j}{2} P_j} \underbrace{C_j s_{j-1} C_j^\dagger}_{Q_j} \exp{i \frac{\theta_j}{2} P_j}}\right)\\
    = &\begin{cases}
      0, & [P_j, Q] = 0, \\
      -\sin(\theta_j) & \{P_j, {Q_j}\} = 0, s_j = Q_j,\\
      \cos(\theta_j) & \{P_j, {Q_j}\} = 0, s_j = iQ_jP_j.
      \end{cases}
  \end{aligned}
\end{equation}

In Ref.~\cite{shao2024simulating}, it has been shown that when $O$ is a single Pauli operator, the contribution function $f(\vec{s},\bm{\theta},P,\rho)$ has the orthogonal property for different Pauli paths $\vec{s}$ and $\vec{s'}$:
\begin{equation}
  \mathbb{E}_{\bm{\theta}} f(\vec{s},\bm{\theta},P,\rho)f(\vec{s'},\bm{\theta},P,\rho)=0, \quad \vec{s}\neq \vec{s'}.
\end{equation}

Following the same proof in Ref.~\cite{shao2024simulating}, for partial derivative of contribution function $\frac{\partial }{\partial{\theta_j}}f(\vec{s},\bm{\theta},P,\rho)$, there also holds the orthogonal property:
\begin{equation}
  \mathbb{E}_{\bm{\theta}} \frac{\partial }{\partial{\theta_j}}f(\vec{s},\bm{\theta},P,\rho)\frac{\partial }{\partial{\theta_j}}f(\vec{s'},\bm{\theta},P,\rho)=0, \quad \vec{s}\neq \vec{s'}.
\end{equation}

Thus, the following expressions can be derived:
\begin{equation}
  \begin{aligned}
    \mathbb{E}_{\bm{\theta}}\left(\frac{\partial \langle P \rangle}{\partial{\theta_j}}\right)^2 & = \mathbb{E}_{\bm{\theta}}\left(\sum_{\vec{s}} \frac{\partial f(\vec{s},\bm{\theta},P,\rho)}{\partial{\theta_j}}\right)^2\\
    & = \mathbb{E}_{\bm{\theta}}\sum_{\vec{s}} \left( \frac{\partial f(\vec{s},\bm{\theta},P,\rho)}{\partial{\theta_j}}\right)^2
  \end{aligned}
\end{equation}

If there is a noise channel $\mathcal{N}^{\otimes n}$ following each $U_i({\theta})$, where $\mathcal{N}(\phi)=\left(1-\lambda\right) + \lambda \frac{\tr{\phi}}{\tr{\mathbb{I}}}\mathbb{I}$ is the depolarizing channel. 
We denote $\widetilde{U}_i(\theta_i)=\mathcal{N}^{\otimes n} \circ U_i(\theta_i)$ as the noisy gate in the $i$-layer. 
Then, as discussed in Ref.~\cite{shao2024simulating}, there is $\tr{s_i \widetilde{U}_i(\theta_i) (s_{i-1})}=\tr{\mathcal{N}^{\dagger \otimes n}(s_i) U_i(\theta_i) s_{i-1} U_i^\dagger(\theta_i)}=(1-\lambda)^\abs{s_i}\tr{s_i U_i(\theta_i) s_{i-1} U_i^\dagger(\theta_i)}$, where $\abs{s_i}$ is the number of non-identity Pauli components in $s_i$.
Thus, for noisy partial derivative $\frac{\partial \widetilde{f}(\vec{s},\bm{\theta},P,\rho)}{\partial{\theta_j}}$, there is:
\begin{equation}
  \frac{\partial \widetilde{f}(\vec{s},\bm{\theta},P,\rho)}{\partial{\theta_j}} = (1-\lambda)^\abs{\vec{s}} \frac{\partial f(\vec{s},\bm{\theta},P,\rho)}{\partial{\theta_j}},
\end{equation}
where $\abs{\vec{s}}=\sum_{i=0}^{N_g} \abs{s_i}$.

As a result, the following expressions can be derived:
\begin{equation}
  \begin{aligned}
  \mathbb{E}_{\bm{\theta}}\left(\frac{\partial \langle \widetilde{P} \rangle}{\partial{\theta_j}}\right)^2 & = \mathbb{E}_{\bm{\theta}} \left( \sum_{\vec{s}} \frac{\partial \widetilde{f}(\vec{s},\bm{\theta},P,\rho)}{\partial{\theta_j}}\right)^2\\
  &=\mathbb{E}_{\bm{\theta}} \sum_{\vec{s}} \left( \frac{\partial \widetilde{f}(\vec{s},\bm{\theta},P,\rho)}{\partial{\theta_j}}\right)^2\\  
  &=\mathbb{E}_{\bm{\theta}} \sum_{\vec{s}} (1-\lambda)^{2\abs{\vec{s}}} \left( \frac{\partial f(\vec{s},\bm{\theta},P,\rho)}{\partial{\theta_j}}\right)^2\\  
  & \leq \mathbb{E}_{\bm{\theta}} \sum_{\vec{s}} (1-\lambda)^{2N_g} \left( \frac{\partial f(\vec{s},\bm{\theta},P,\rho)}{\partial{\theta_j}}\right)^2\\  
  & = (1-\lambda)^{2N_g} \mathbb{E}_{\bm{\theta}} \sum_{\vec{s}} \left( \frac{\partial f(\vec{s},\bm{\theta},P,\rho)}{\partial{\theta_j}}\right)^2 = (1-\lambda)^{2N_g} \mathbb{E}_{\bm{\theta}}\left(\frac{\partial \langle P \rangle}{\partial{\theta_j}}\right)^2.
\end{aligned}
\end{equation}
The inequality holds because in any valid Pauli path, each Pauli word must contribute at least one non-trivial Pauli component; otherwise, the entire path becomes trivial and does not contribute to the sum. Therefore, when ${N_g} = \mathrm{poly}(n)$, the variance decays exponentially in $n$, leading to the emergence of barren plateaus.

\section{Expressibility}\label{sec:expressibility}
\subsection{Expressibility of ideal unitary PQC}
The expressibility of a parameterized quantum circuit refers to its ability to generate states that are well representative of the Hilbert space. 
A circuit with higher expressibility can explore a larger portion of the state space, which is often desirable in variational quantum algorithms. One widely used way to quantify the expressibility of a PQC ensemble $\mathcal{C}(\bm{\theta})$ is defined by how close the ensemble $\mathcal{C}(\bm{\theta})$ is to a unitary $t$-design~\cite{holmes2022connecting,sim2019expressibility,thanasilp2024exponential}~(or equivalently, to the Haar measure, and we use $t$-moment difference to quantify):
\begin{equation}\label{eq:t-moment_difference}
  \mathcal{A}_{\mathcal{C}}^{t}(\cdot)=\mathbb{E}_{U} U^{\otimes t}(\cdot)^{\otimes t}\left(U^{\dagger}\right)^{\otimes t}-\mathbb{E}_{\bm{\theta}} \mathcal{C}(\bm{\theta})^{\otimes t}(\cdot)^{\otimes t}\left(\mathcal{C}(\bm{\theta})^{\dagger}\right)^{\otimes t},
\end{equation}
where $U$ is drawn from Haar measure. In this work, we consider the case $t = 2$ and $\ketbra{0^n}$, and aim to evaluate $\norm{\mathcal{A}_{\mathcal{C}}^{2}(\ketbra{0^{n}})}_2^2$, where $\norm{\cdot}_2$ is the Hilbert-Schmidt norm. For simplicity, we denote $\mathcal{A}_{\mathcal{C}}^2 := \mathcal{A}_{\mathcal{C}}^2(\ket{0^n}\bra{0^n})$. The expressibility is then quantified by the following equation
\begin{equation}\label{eq:expressibility}
\begin{aligned}
  &\norm{\mathcal{A}_{\mathcal{C}}^{2}}_2^2 = \tr{\mathcal{A}_{\mathcal{C}}^{2} \left(\mathcal{A}_{\mathcal{C}}^{2}\right)^{\dagger}}\\
   =& \tr{\left[\mathbb{E}_{U}U^{\otimes 2}(\ketbra{0^{n}})^{\otimes 2}\left(U^{\dagger}\right)^{\otimes 2}\right]^2}-2\tr{\mathbb{E}_{U}U^{\otimes 2}(\ketbra{0^{n}})^{\otimes 2}\left(U^{\dagger}\right)^{\otimes 2}\mathbb{E}_{\bm{\theta}} \mathcal{C}(\bm{\theta})^{\otimes 2}(\ketbra{0^{n}})^{\otimes 2}\left(\mathcal{C}(\bm{\theta})^{\dagger}\right)^{\otimes 2}}\\
  & +  \tr{\left[\mathbb{E}_{\bm{\theta}} \mathcal{C}(\bm{\theta})^{\otimes 2}(\ketbra{0^{n}})^{\otimes 2}\left(\mathcal{C}(\bm{\theta})^{\dagger}\right)^{\otimes 2}\right]^2}.
  \end{aligned}  
\end{equation}
We first evaluate the first two terms. According to~\cite{PRXQuantum.3.010333}, for any Hermitian operator $\sigma \in \mathbb{C}^{q^2 \times q^2}$, it holds that
\begin{equation}\label{eq:2_moment_haar_random}
  \mathbb{E}_{U} U^{\otimes 2}\sigma\left(U^{\dagger}\right)^{\otimes 2} = \frac{\operatorname{Tr}(\sigma)-q^{-1} \operatorname{Tr}(\sigma S)}{q^2-1} I+\frac{\operatorname{Tr}(\sigma S)-q^{-1} \operatorname{Tr}(\sigma)}{q^2-1} S,
\end{equation}
where $S$ is the swap operator defined by $S\ket{\psi} \otimes \ket{\phi} = \ket{\phi} \otimes \ket{\psi}$ for all product states $\ket{\psi}, \ket{\phi}$. Taking $\sigma = (\ket{0^n} \bra{0^n})^{\otimes 2}$, we obtain
\begin{equation}\label{eq:2_design_output}
  \mathbb{E}_{U} U^{\otimes 2}(\ketbra{0^{n}})^{\otimes 2}\left(U^{\dagger}\right)^{\otimes 2} = \frac{I + S}{2^n \left(2^n+1\right)}.
\end{equation}
Substituting~\eqref{eq:2_design_output} into the first two terms of $\norm{\mathcal{A}_{\mathcal{C}}^{2}}_2^2$, we have
\begin{equation}\label{eq:expressibility_1}
  \begin{aligned}
    \tr{\left[\mathbb{E}_{U}U^{\otimes 2}(\ketbra{0^{n}})^{\otimes 2}\left(U^{\dagger}\right)^{\otimes 2}\right]^2} &= \frac{\tr{I + S}^2}{4^n \left(2^n+1\right)^2}\\
    & = \frac{1}{2^{n-1} \left(2^n+1\right)},
  \end{aligned}
\end{equation}

\begin{equation}\label{eq:expressibility_2}
  \begin{aligned}
    &\tr{\mathbb{E}_{U}U^{\otimes 2}(\ketbra{0^{n}})^{\otimes 2}\left(U^{\dagger}\right)^{\otimes 2}\mathbb{E}_{\bm{\theta}} \mathcal{C}(\bm{\theta})^{\otimes 2}(\ketbra{0^{n}})^{\otimes 2}\left(\mathcal{C}(\bm{\theta})^{\dagger}\right)^{\otimes 2}}\\
    =& \frac{1}{2^n \left(2^n+1\right)}\tr{\left(I + S \right)\mathbb{E}_{\bm{\theta}} \mathcal{C}(\bm{\theta})^{\otimes 2}(\ketbra{0^{n}})^{\otimes 2}\left(\mathcal{C}(\bm{\theta})^{\dagger}\right)^{\otimes 2}}\\
    =& \frac{1}{2^n \left(2^n+1\right)}\mathbb{E}_{\bm{\theta}}\tr{\left(I + S \right) \mathcal{C}(\bm{\theta})^{\otimes 2}(\ketbra{0^{n}})^{\otimes 2}\left(\mathcal{C}(\bm{\theta})^{\dagger}\right)^{\otimes 2}}\\
    =& \frac{2}{2^n \left(2^n+1\right)}\mathbb{E}_{\bm{\theta}}\tr{ \mathcal{C}(\bm{\theta})^{\otimes 2}(\ketbra{0^{n}})^{\otimes 2}\left(\mathcal{C}(\bm{\theta})^{\dagger}\right)^{\otimes 2}}\\
    =& \frac{1}{2^{n-1} \left(2^n+1\right)}.
  \end{aligned}
\end{equation}
We now turn to the third term of~\eqref{eq:expressibility}, which is
\begin{equation}\label{eq:expressibility_3}
  \begin{aligned}
    &\tr{\left[\mathbb{E}_{\bm{\theta}} \mathcal{C}(\bm{\theta})^{\otimes 2}(\ketbra{0^{n}})^{\otimes 2}\left(\mathcal{C}(\bm{\theta})^{\dagger}\right)^{\otimes 2}\right]^2}\\
   =&  \tr{\left[\mathbb{E}_{\bm{\theta} \in \{0,\frac{\pi}{2},\pi,\frac{3\pi}{2}\}^{N_g}}\mathcal{C}(\bm{\theta})^{\otimes 2}(\ketbra{0^{n}})^{\otimes 2}\left(\mathcal{C}(\bm{\theta})^{\dagger}\right)^{\otimes 2}\right]^2}\\
   =&\mathbb{E}_{\bm{\theta_1} \in \{0,\frac{\pi}{2},\pi,\frac{3\pi}{2}\}^{N_g}} \mathbb{E}_{\bm{\theta_2} \in \{0,\frac{\pi}{2},\pi,\frac{3\pi}{2}\}^{N_g}}\left|\bra{0^{n}}\left(\mathcal{C}(\bm{\theta_1})\right)^\dagger\mathcal{C}(\bm{\theta_2})\ket{0^{n}}\right|^4,
  \end{aligned}
\end{equation}
the last equality holds because ${\left\{R_P(\theta)\right\}}_{\theta=0, \pi/2, \pi, 3\pi/2}$ is a quantum rotation $2$-design. 
For situations for noisy circuit, this will be discussed in the next section.

\subsection{Expressibility of noisy PQC}
Due to the presence of noise, the original parameterized quantum circuit (PQC) is transformed into a noisy circuit, denoted by $\widetilde{\mathcal{C}}(\bm{\theta})$, as defined in Eq.~\eqref{eq:noisy_circuit}.
Similar to the Eq.~\eqref{eq:t-moment_difference}, we can define the difference between the  noisy circuit and the unitary $t$-design:
\begin{equation}\label{eq:noisy_t-moment_difference}
  \mathcal{A}_{\widetilde{\mathcal{C}}}^{t}(\cdot)=\mathbb{E}_{U} U^{\otimes t}(\cdot)^{\otimes t}\left(U^{\dagger}\right)^{\otimes t}-\mathbb{E}_{\bm{\theta}} \widetilde{\mathcal{C}}(\bm{\theta})(\cdot)^{\otimes t},
\end{equation}
where $\widetilde{\mathcal{C}}(\bm{\theta})$ is the noisy circuit.

Therefore, the $2$-order expressibility of noisy circuit $\widetilde{\mathcal{C}}(\bm{\theta})$ is given by:
\begin{equation}\label{eq:noisy_expressibility}
  \begin{aligned}
    &\norm{\mathbb{E}_{U} U^{\otimes 2}(\ketbra{0^n})^{\otimes 2}\left(U^{\dagger}\right)^{\otimes 2}-\mathbb{E}_{\bm{\theta}} \widetilde{\mathcal{C}}(\bm{\theta})(\ketbra{0^{n}})^{\otimes 2}}_2^2\\
     &= \tr{\left[\mathbb{E}_{U}U^{\otimes 2}(\ketbra{0^{n}})^{\otimes 2}\left(U^{\dagger}\right)^{\otimes 2}\right]^2}-2\tr{\mathbb{E}_{U}U^{\otimes 2}(\ketbra{0^{n}})^{\otimes 2}\left(U^{\dagger}\right)^{\otimes 2}\mathbb{E}_{\bm{\theta}} \widetilde{\mathcal{C}}(\bm{\theta})(\ketbra{0^{n}})^{\otimes 2}}\\
    & \quad +  \tr{\left[\mathbb{E}_{\bm{\theta}} \widetilde{\mathcal{C}}(\bm{\theta})(\ketbra{0^{n}})^{\otimes 2}\right]^2}.
  \end{aligned}  
\end{equation}
The first term is identical to that in~\eqref{eq:expressibility_1}, and we have 
\begin{equation}\label{eq:noisy_expressibility_1}
\tr{\left[\mathbb{E}_{U}U^{\otimes 2}(\ketbra{0^{n}})^{\otimes 2}\left(U^{\dagger}\right)^{\otimes 2}\right]^2} = \frac{1}{2^{n-1} (2^n + 1)}.
\end{equation}
We now evaluate the second and third terms in~\eqref{eq:noisy_expressibility} via Monte Carlo sampling. Recall from~\eqref{eq:expressibility_1} that:
\begin{equation}
  \mathbb{E}_{U} U^{\otimes 2}(\ketbra{0^{n}})^{\otimes 2}\left(U^{\dagger}\right)^{\otimes 2} = \frac{I + S}{2^n \left(2^n+1\right)}.
\end{equation}

Substituting this into the second term gives:
\begin{equation}\label{eq:noisy_expressibility_2}
  \begin{aligned}
    &\tr{\mathbb{E}_{U}U^{\otimes 2}(\ketbra{0^{n}})^{\otimes 2}\left(U^{\dagger}\right)^{\otimes 2}\mathbb{E}_{\bm{\theta}} \widetilde{\mathcal{C}}(\bm{\theta})(\ketbra{0^{n}})^{\otimes 2}}\\
    &= \frac{1}{2^n \left(2^n+1\right)}\tr{\left(I + S \right)\mathbb{E}_{\bm{\theta}} \widetilde{\mathcal{C}}(\bm{\theta})(\ketbra{0^{n}})^{\otimes 2}}\\
    &= \frac{1}{2^n \left(2^n+1\right)}\mathbb{E}_{\bm{\theta} \in \{0,\frac{\pi}{2},\pi,\frac{3\pi}{2}\}^{N_g}}\tr{\left(I + S \right) \widetilde{\mathcal{C}}(\bm{\theta})(\ketbra{0^{n}})^{\otimes 2}}\\
    &= \frac{1}{2^n \left(2^n+1\right)}+\frac{1}{2^n \left(2^n+1\right)}\mathbb{E}_{\bm{\theta} \in \{0,\frac{\pi}{2},\pi,\frac{3\pi}{2}\}^{N_g}}\tr{\left(\widetilde{\mathcal{C}}(\bm{\theta})(\ketbra{0^{n}})\right)^2}\\
    &= \frac{1}{2^n \left(2^n+1\right)}+\frac{1}{2^n \left(2^n+1\right)}\mathbb{E}_{\bm{\theta} \in \{0,\frac{\pi}{2},\pi,\frac{3\pi}{2}\}^{N_g}} \sum_{\sigma \in \{\sfrac{\mathbb{I}}{\sqrt{2}},\sfrac{X}{\sqrt{2}},\sfrac{Y}{\sqrt{2}},\sfrac{Z}{\sqrt{2}}\}^{\otimes n}} \tr{\widetilde{\mathcal{C}}(\bm{\theta})(\ketbra{0^{n}}) \sigma}^2 \\
    &= \frac{1}{2^n \left(2^n+1\right)}+\frac{2^n}{ \left(2^n+1\right)}\mathbb{E}_{\bm{\theta} \in \{0,\frac{\pi}{2},\pi,\frac{3\pi}{2}\}^{N_g}} \mathbb{E}_{\sigma \in \{\sfrac{\mathbb{I}}{\sqrt{2}},\sfrac{X}{\sqrt{2}},\sfrac{Y}{\sqrt{2}},\sfrac{Z}{\sqrt{2}}\}^{\otimes n}} \tr{\widetilde{\mathcal{C}}(\bm{\theta})(\ketbra{0^{n}}) \sigma}^2, \\
    &= \frac{1}{2^n \left(2^n+1\right)}+\frac{1}{ \left(2^n+1\right)}\mathbb{E}_{\bm{\theta} \in \{0,\frac{\pi}{2},\pi,\frac{3\pi}{2}\}^{N_g}} \mathbb{E}_{\sigma \in \{\mathbb{I},X,Y,Z\}^{\otimes n}} \tr{\widetilde{\mathcal{C}}(\bm{\theta})(\ketbra{0^{n}}) \sigma}^2, \\
  \end{aligned}
\end{equation}
where the second equality holds because $\{ R_P(\theta) \}_{\theta = 0, \pi/2, \pi, 3\pi/2}$ forms a single-qubit unitary 2-design, the third equality follows from the equation $\tr(S A \otimes B) = \tr(AB)$, and $\mathbb{E}_{\sigma}$ takes the uniform distribution $\mathbb{P}=\frac{1}{4^n}$ over all Pauli operators.

Similarly, we evaluate the last term in Eq.~\eqref{eq:noisy_expressibility}, we have:
\begin{equation}\label{eq:noisy_expressibility_3}
  \begin{aligned}
    &\tr{\left[\mathbb{E}_{\bm{\theta}} \widetilde{\mathcal{C}}(\bm{\theta})(\ketbra{0^{n}})^{\otimes 2}\right]^2}\\
    &=  \tr{\left[\mathbb{E}_{\bm{\theta} \in \{0,\frac{\pi}{2},\pi,\frac{3\pi}{2}\}^{N_g}}\widetilde{\mathcal{C}}(\bm{\theta})(\ketbra{0^{n}})^{\otimes 2}\right]^2}\\
    &=\mathbb{E}_{\bm{\theta_1} \in \{0,\frac{\pi}{2},\pi,\frac{3\pi}{2}\}^{N_g}} \mathbb{E}_{\bm{\theta_2} \in \{0,\frac{\pi}{2},\pi,\frac{3\pi}{2}\}^{N_g}} \tr{\prod_{i=1}^2\widetilde{\mathcal{C}}(\bm{\theta_i})(\ketbra{0^{n}})^{\otimes 2}}\\
    &=\mathbb{E}_{\bm{\theta_1} \in \{0,\frac{\pi}{2},\pi,\frac{3\pi}{2}\}^{N_g}} \mathbb{E}_{\bm{\theta_2} \in \{0,\frac{\pi}{2},\pi,\frac{3\pi}{2}\}^{N_g}} \tr{\prod_{i=1}^2 \widetilde{\mathcal{C}}(\bm{\theta_i})(\ketbra{0^{n}})}^2\\
    &=\mathbb{E}_{\bm{\theta_1} \in \{0,\frac{\pi}{2},\pi,\frac{3\pi}{2}\}^{N_g}} \mathbb{E}_{\bm{\theta_2} \in \{0,\frac{\pi}{2},\pi,\frac{3\pi}{2}\}^{N_g}} \tr{\widetilde{\mathcal{C}}(\bm{\theta_1})(\ketbra{0^{n}}) \widetilde{\mathcal{C}}(\bm{\theta_2})(\ketbra{0^{n}})}^2\\
    &=\mathbb{E}_{\bm{\theta_1} \in \{0,\frac{\pi}{2},\pi,\frac{3\pi}{2}\}^{N_g}} \mathbb{E}_{\bm{\theta_2} \in \{0,\frac{\pi}{2},\pi,\frac{3\pi}{2}\}^{N_g}} \tr{\widetilde{\mathcal{C}}^\dagger (\bm{\theta_2}) \left(\widetilde{\mathcal{C}}(\bm{\theta_1})(\ketbra{0^{n}}) \right) \ketbra{0^{n}}}^2\\
    &=\mathbb{E}_{\bm{\theta_1} \in \{0,\frac{\pi}{2},\pi,\frac{3\pi}{2}\}^{N_g}} \mathbb{E}_{\bm{\theta_2} \in \{0,\frac{\pi}{2},\pi,\frac{3\pi}{2}\}^{N_g}} \tr{\widetilde{\mathcal{C}}^\dagger (\bm{\theta_2}) \left(\widetilde{\mathcal{C}}(\bm{\theta_1})(\ketbra{0^{n}}) \right) \left(\frac{I+Z}{2}\right)^{\otimes{n}} }^2\\
    &=\mathbb{E}_{\bm{\theta_1} \in \{0,\frac{\pi}{2},\pi,\frac{3\pi}{2}\}^{N_g}} \mathbb{E}_{\bm{\theta_2} \in \{0,\frac{\pi}{2},\pi,\frac{3\pi}{2}\}^{N_g}} \left[\mathbb{E}_{\sigma \in \{I,Z\}^{\otimes n}} \tr{\widetilde{\mathcal{C}}^\dagger (\bm{\theta_2}) \left(\widetilde{\mathcal{C}}(\bm{\theta_1})(\ketbra{0^{n}}) \right) \sigma}\right]^2,
  \end{aligned}
\end{equation}
where the notation $\widetilde{\mathcal{C}}^\dagger (\bm{\theta_2})$ denotes the adjoint channel of $\widetilde{\mathcal{C}}(\bm{\theta_2})$, and the $\mathbb{E}_{\sigma}$ takes the uniform distribution $\mathbb{P}=\frac{1}{2^n}$ over all $\{I,Z\}^{\otimes n}$.
To efficiently estimate the term $\tr{\widetilde{\mathcal{C}}^\dagger (\bm{\theta_2}) \left(\widetilde{\mathcal{C}}(\bm{\theta_1})(\ketbra{0^{n}}) \right) \sigma}$, we need the additional assumption that the adjoint channel $\widetilde{\mathcal{C}}^\dagger (\bm{\theta_2})$ can be efficiently back-propagated.
This assumption is satisfied by the noisy circuit with the noise model is not only PCS1, but also Pauli row-wise sum at most one:
\begin{definition}{(Pauli row-wise sum at most one channel)}\label{def:PRS1}
  A quantum channel $\mathcal{E}$ is a Pauli row-wise sum at most one channel~(PCS1) if its Pauli transform matrix $\left(\mathcal{S}_\mathcal{E}\right)_{i,j}=\tr{\mathcal{E}(\sigma_i) \sigma_j}$ satisfies that the 1-norm $\norm{\cdot}_1$ of its each row is no more than $1$, where $\sigma_i$ and $\sigma_j$ are the $i$-th and $j$-th Pauli operator in the normalized Pauli group $\bm{P}_n$, respectively.
  This can be expressed as:
  \begin{equation}
    \sum_{j} \abs{\left(\mathcal{S}_\mathcal{E}\right)_{i,j}}\leq 1.
  \end{equation}

\end{definition}

If a channel is both PCS1 and PRS1, then the channel is unital, thus the amplitude damping and the thermal relaxation process are not valid.
But luckily, the Pauli error channels and the measurement-based feedback control channels with Pauli action are still PRS1.

Consequentially, combining Eq.~\eqref{eq:noisy_expressibility_1}, Eq.~\eqref{eq:noisy_expressibility_2}, and Eq.~\eqref{eq:noisy_expressibility_3} we have the following equation for the expressibility:
\begin{equation}\label{eq:noisy_expressibility_all}
  \begin{aligned}
    &\norm{\mathbb{E}_{U} U^{\otimes 2}(\ketbra{0^n})^{\otimes 2}\left(U^{\dagger}\right)^{\otimes 2}-\mathbb{E}_{\bm{\theta}} \widetilde{\mathcal{C}}(\bm{\theta})^{\otimes 2}(\cdot)^{\otimes 2}\left(\widetilde{\mathcal{C}}(\bm{\theta})^{\dagger}\right)^{\otimes 2}}_2^2\\
    &= \frac{1}{2^{n-1} (2^n + 1)} - 2\left( \frac{1}{2^n \left(2^n+1\right)}+\frac{1}{ \left(2^n+1\right)}\mathbb{E}_{\bm{\theta} \in \{0,\frac{\pi}{2},\pi,\frac{3\pi}{2}\}^{N_g}} \mathbb{E}_{\sigma \in \{\mathbb{I},X,Y,Z\}^{\otimes n}} \tr{\widetilde{\mathcal{C}}(\bm{\theta})(\ketbra{0^{n}}) \sigma}^2 \right)\\
    & \quad + \mathbb{E}_{\bm{\theta_1} \in \{0,\frac{\pi}{2},\pi,\frac{3\pi}{2}\}^{N_g}} \mathbb{E}_{\bm{\theta_2} \in \{0,\frac{\pi}{2},\pi,\frac{3\pi}{2}\}^{N_g}} \left[\mathbb{E}_{\sigma \in \{I,Z\}^{\otimes n}} \tr{\widetilde{\mathcal{C}}^\dagger (\bm{\theta_2}) \left(\widetilde{\mathcal{C}}(\bm{\theta_1})(\ketbra{0^{n}}) \right) \sigma}\right]^2 \\
    & = \mathbb{E}_{\bm{\theta_1} \in \{0,\frac{\pi}{2},\pi,\frac{3\pi}{2}\}^{N_g}} \mathbb{E}_{\bm{\theta_2} \in \{0,\frac{\pi}{2},\pi,\frac{3\pi}{2}\}^{N_g}} \left[\mathbb{E}_{\sigma \in \{I,Z\}^{\otimes n}} \tr{\widetilde{\mathcal{C}}^\dagger (\bm{\theta_2}) \left(\widetilde{\mathcal{C}}(\bm{\theta_1})(\ketbra{0^{n}}) \right) \sigma}\right]^2 \\
    & \quad - \frac{2}{ \left(2^n+1\right)}\mathbb{E}_{\bm{\theta} \in \{0,\frac{\pi}{2},\pi,\frac{3\pi}{2}\}^{N_g}} \mathbb{E}_{\sigma \in \{\mathbb{I},X,Y,Z\}^{\otimes n}} \tr{\widetilde{\mathcal{C}}(\bm{\theta})(\ketbra{0^{n}}) \sigma}^2.
  \end{aligned}
\end{equation}

Therefore, rewriting the expressibility in Eq.~\eqref{eq:noisy_expressibility_all}, we have:
\begin{small}
  \begin{equation}\label{ap:eq:noisy_expressibility}
    \begin{aligned}
      \norm{\mathbb{E}_{U} U^{\otimes 2}(\ketbra{0^n})^{\otimes 2}\left(U^{\dagger}\right)^{\otimes 2}-\mathbb{E}_{\bm{\theta}} \widetilde{\mathcal{C}}(\bm{\theta})^{\otimes 2}(\cdot)^{\otimes 2}\left(\widetilde{\mathcal{C}}(\bm{\theta})^{\dagger}\right)^{\otimes 2}}_2^2 =
   \mathbb{E}_{\bm{\theta_1}, \bm{\theta_2}} \left[\mathbb{E}_{\sigma} h_{\mathrm{express}}(\bm{\theta_1}, \bm{\theta_2},\sigma)\right]^2 - \frac{2}{ \left(2^n+1\right)} \mathbb{E}_{\bm{\theta},\sigma'} h'_{\mathrm{express}}(\bm{\theta},\sigma')\\
   = 
   \mathbb{E}_{\bm{\theta_1}, \bm{\theta_2}}\mathbb{E}_{\sigma_1,\sigma_2} h_{\mathrm{express}}(\bm{\theta_1}, \bm{\theta_2},\sigma_1)h_{\mathrm{express}}(\bm{\theta_1}, \bm{\theta_2},\sigma_2) - \frac{2}{ \left(2^n+1\right)} \mathbb{E}_{\bm{\theta},\sigma'} h'_{\mathrm{express}}(\bm{\theta},\sigma')
    \end{aligned}
\end{equation}
\end{small}
where $\bm{\theta}, \bm{\theta_1}, \bm{\theta_2}$ takes values from $\{0,\frac{\pi}{2},\pi,\frac{3\pi}{2}\}^{N_g}$ with uniform distribution, $\sigma, \sigma_1, \sigma_2$ take from $\{I,Z\}^{\otimes n}$ and $\sigma'$ takes from $\{\mathbb{I},X,Y,Z\}^{\otimes n}$ with uniform distribution, respectively.
Here we define the functions:
\begin{equation}
  h_{\mathrm{express}}(\bm{\theta_1}, \bm{\theta_2}, \sigma) =  \tr{\widetilde{\mathcal{C}}^\dagger (\bm{\theta_2}) \left(\widetilde{\mathcal{C}}(\bm{\theta_1})(\ketbra{0^{n}}) \right) \sigma},
\end{equation}
and
\begin{equation}
  h'_{\mathrm{express}}(\bm{\theta}, \sigma') = \tr{\widetilde{\mathcal{C}}(\bm{\theta})(\ketbra{0^{n}}) \sigma'}^2.
\end{equation}

Because, when $\bm{\theta_1},\bm{\theta_2}\in \{0,\frac{\pi}{2},\pi,\frac{3\pi}{2}\}^{N_g}$, $\sigma\in \{I,Z\}^{\otimes n}$ and $\sigma' \in \{\mathbb{I},X,Y,Z\}^{\otimes n}$, both the terms $\tr{\widetilde{\mathcal{C}}^\dagger (\bm{\theta_2}) \left(\widetilde{\mathcal{C}}(\bm{\theta_1})(\ketbra{0^{n}}) \right) \sigma}$ and $\tr{\widetilde{\mathcal{C}}(\bm{\theta})(\ketbra{0^{n}}) \sigma'}^2$ have a bound of $\norm{\sigma}_\infty=\norm{\sigma'}_\infty^2=1$, therefore there are $\abs{h_{\mathrm{express}}(\bm{\theta_1}, \bm{\theta_2}, \sigma)} \leq 1$ and $\abs{h'_{\mathrm{express}}(\bm{\theta}, \sigma')} \leq 1$.

\subsection{Lower bound for expressibility}
For channels that are PCS1 but not PRS1, we can compute a tractable lower bound on the expressibility.
The obstruction caused by non-PRS1 channels is that we cannot efficiently sample the bounded terms $\left[\mathbb{E}_{\sigma \in \{I,Z\}^{\otimes n}} \tr{\widetilde{\mathcal{C}}^\dagger (\bm{\theta_2}) \left(\widetilde{\mathcal{C}}(\bm{\theta_1})(\ketbra{0^{n}}) \right) \sigma}\right]^2$ in Eq.~\eqref{eq:noisy_expressibility_3}.

We now show that the last term in Eq.~\eqref{eq:noisy_expressibility} admits the following lower bound:
\begin{equation}\label{eq:noisy_expressibility_3_lower_bound}
  \begin{aligned}
    &\tr{\left[\mathbb{E}_{\bm{\theta}} \widetilde{\mathcal{C}}(\bm{\theta})(\ketbra{0^{n}})^{\otimes 2}\right]^2}\\
    &=  \tr{\left[\mathbb{E}_{\bm{\theta} \in \{0,\frac{\pi}{2},\pi,\frac{3\pi}{2}\}^{N_g}}\widetilde{\mathcal{C}}(\bm{\theta})(\ketbra{0^{n}})^{\otimes 2}\right]^2}\\
    &= \sum_{P \in \{\mathbb{I}/ \sqrt{2}, X/ \sqrt{2}, Y/ \sqrt{2}, Z/ \sqrt{2}\}^{\otimes 2n}} \tr{\left[\mathbb{E}_{\bm{\theta} \in \{0,\frac{\pi}{2},\pi,\frac{3\pi}{2}\}^{N_g}}\widetilde{\mathcal{C}}(\bm{\theta})(\ketbra{0^{n}})^{\otimes 2}\right] P}^2\\
    & \geq \sum_{P' \in \{\mathbb{I}/ \sqrt{2}, X/ \sqrt{2}, Y/ \sqrt{2}, Z/ \sqrt{2}\}^{\otimes n}} \tr{\left[\mathbb{E}_{\bm{\theta} \in \{0,\frac{\pi}{2},\pi,\frac{3\pi}{2}\}^{N_g}}\widetilde{\mathcal{C}}(\bm{\theta})(\ketbra{0^{n}})^{\otimes 2}\right] P'^{ \otimes 2} }^2\\
    &= \sum_{P' \in \{\mathbb{I}/ \sqrt{2}, X/ \sqrt{2}, Y/ \sqrt{2}, Z/ \sqrt{2}\}^{\otimes n}} \tr{\mathbb{E}_{\bm{\theta} \in \{0,\frac{\pi}{2},\pi,\frac{3\pi}{2}\}^{N_g}}\widetilde{\mathcal{C}}(\bm{\theta})(\ketbra{0^{n}})^{\otimes 2} P'^{ \otimes 2} }^2\\
    &= \sum_{P' \in \{\mathbb{I}/ \sqrt{2}, X/ \sqrt{2}, Y/ \sqrt{2}, Z/ \sqrt{2}\}^{\otimes n}} \left\{\mathbb{E}_{\bm{\theta} \in \{0,\frac{\pi}{2},\pi,\frac{3\pi}{2}\}^{N_g}} \tr{\widetilde{\mathcal{C}}(\bm{\theta})(\ketbra{0^{n}})^{\otimes 2} P'^{ \otimes 2} }\right\}^2\\
    &= \sum_{P' \in \{\mathbb{I}/ \sqrt{2}, X/ \sqrt{2}, Y/ \sqrt{2}, Z/ \sqrt{2}\}^{\otimes n}} \left\{\mathbb{E}_{\bm{\theta} \in \{0,\frac{\pi}{2},\pi,\frac{3\pi}{2}\}^{N_g}} \tr{\widetilde{\mathcal{C}}(\bm{\theta})(\ketbra{0^{n}}) P' }^2 \right\}^2\\
    &= \frac{1}{4^n} \sum_{\sigma \in \{\mathbb{I}, X, Y, Z\}^{\otimes n}} \left\{\mathbb{E}_{\bm{\theta} \in \{0,\frac{\pi}{2},\pi,\frac{3\pi}{2}\}^{N_g}} \tr{\widetilde{\mathcal{C}}(\bm{\theta})(\ketbra{0^{n}}) \sigma }^2 \right\}^2\\
    &= \mathbb{E}_{\sigma \in \{\mathbb{I},X,Y,Z\}^{\otimes n}} \left\{\mathbb{E}_{\bm{\theta} \in \{0,\frac{\pi}{2},\pi,\frac{3\pi}{2}\}^{N_g}} \tr{\widetilde{\mathcal{C}}(\bm{\theta})(\ketbra{0^{n}}) \sigma }^2 \right\}^2,
  \end{aligned}
\end{equation}
where in the second equality we expand in the normalized Pauli basis $\{\mathbb{I}/ \sqrt{2}, X/ \sqrt{2}, Y/ \sqrt{2}, Z/ \sqrt{2}\}^{\otimes 2n}$, and the inequality follows by restricting to the subset $\left\{P' \otimes P' \mid P' \in \{\mathbb{I}/ \sqrt{2}, X/ \sqrt{2}, Y/ \sqrt{2}, Z/ \sqrt{2}\}^{\otimes n}\right\}$ $\subseteq$ $\left\{P  \mid P \in \{\mathbb{I}/ \sqrt{2}, X/ \sqrt{2}, Y/ \sqrt{2}, Z/ \sqrt{2}\}^{\otimes 2n}\right\}$.

Consequentially, combining Eq.~\eqref{eq:noisy_expressibility_1}, Eq.~\eqref{eq:noisy_expressibility_2}, and Eq.~\eqref{eq:noisy_expressibility_3_lower_bound} we obtain the following lower bound for the expressibility:
\begin{equation}\label{eq:noisy_expressibility_lower_bound}
  \begin{aligned}
    &\norm{\mathbb{E}_{U} U^{\otimes 2}(\ketbra{0^n})^{\otimes 2}\left(U^{\dagger}\right)^{\otimes 2}-\mathbb{E}_{\bm{\theta}} \widetilde{\mathcal{C}}(\bm{\theta})^{\otimes 2}(\cdot)^{\otimes 2}\left(\widetilde{\mathcal{C}}(\bm{\theta})^{\dagger}\right)^{\otimes 2}}_2^2\\
    & \geq \frac{1}{2^{n-1} (2^n + 1)} - 2\left( \frac{1}{2^n \left(2^n+1\right)}+\frac{1}{ \left(2^n+1\right)}\mathbb{E}_{\bm{\theta} \in \{0,\frac{\pi}{2},\pi,\frac{3\pi}{2}\}^{N_g}} \mathbb{E}_{\sigma \in \{\mathbb{I},X,Y,Z\}^{\otimes n}} \tr{\widetilde{\mathcal{C}}(\bm{\theta})(\ketbra{0^{n}}) \sigma}^2 \right)\\
    & \quad + \mathbb{E}_{\sigma \in \{\mathbb{I},X,Y,Z\}^{\otimes n}} \left\{\mathbb{E}_{\bm{\theta} \in \{0,\frac{\pi}{2},\pi,\frac{3\pi}{2}\}^{N_g}} \tr{\widetilde{\mathcal{C}}(\bm{\theta})(\ketbra{0^{n}}) \sigma }^2 \right\}^2 \\
    & = \mathbb{E}_{\sigma \in \{\mathbb{I},X,Y,Z\}^{\otimes n}} \left[ \left\{\mathbb{E}_{\bm{\theta} \in \{0,\frac{\pi}{2},\pi,\frac{3\pi}{2}\}^{N_g}} \tr{\widetilde{\mathcal{C}}(\bm{\theta})(\ketbra{0^{n}}) \sigma }^2 \right\}^2  - \frac{2}{ \left(2^n+1\right)}\mathbb{E}_{\bm{\theta} \in \{0,\frac{\pi}{2},\pi,\frac{3\pi}{2}\}^{N_g}} \tr{\widetilde{\mathcal{C}}(\bm{\theta})(\ketbra{0^{n}}) \sigma}^2\right] \\
  \end{aligned}
\end{equation}

Crucially, the bound in Eq.~\eqref{eq:noisy_expressibility_lower_bound} does not involve the adjoint channel; hence it can be applied to circuits withchannels that are PCS1 but not PRS1.

Rewriting the lower bound in Eq.~\eqref{eq:noisy_expressibility_lower_bound}, we have:
\begin{equation}\label{ap:eq:noisy_expressibility_lower_bound}
  \norm{\mathbb{E}_{U} U^{\otimes 2}(\ketbra{0^n})^{\otimes 2}\left(U^{\dagger}\right)^{\otimes 2}-\mathbb{E}_{\bm{\theta}} \widetilde{\mathcal{C}}(\bm{\theta})^{\otimes 2}(\cdot)^{\otimes 2}\left(\widetilde{\mathcal{C}}(\bm{\theta})^{\dagger}\right)^{\otimes 2}}_2^2 \geq
   \mathbb{E}_{\sigma'} \left[ \left\{ \mathbb{E}_{\bm{\theta}} h_{\mathrm{expressibility}}^{\mathrm{lower}}(\bm{\theta},\sigma') \right\}^2 -\frac{2}{2^n+1}\mathbb{E}_{\bm{\theta}} h_{\mathrm{expressibility}}^{\mathrm{lower}}(\bm{\theta},\sigma') \right],
\end{equation}
where $\bm{\theta}$ takes values from $\{0,\frac{\pi}{2},\pi,\frac{3\pi}{2}\}^{N_g}$ with uniform distribution, $\sigma'$ takes values from $\{\mathbb{I},X,Y,Z\}^{\otimes n}$ with uniform distribution, and $h_{\mathrm{expressibility}}^{\mathrm{lower}}(\bm{\theta},\sigma') = \tr{\widetilde{\mathcal{C}}(\bm{\theta})(\ketbra{0^{n}}) \sigma' }^2$ with $\abs{h_{\mathrm{expressibility}}^{\mathrm{lower}}(\bm{\theta},\sigma')} \leq 1$.

The $\mathbb{E}_{\sigma'} \left[ \left\{ \mathbb{E}_{\bm{\theta}} h_{\mathrm{expressibility}}^{\mathrm{lower}}(\bm{\theta},\sigma') \right\}^2 -\frac{2}{2^n+1}\mathbb{E}_{\bm{\theta}} h_{\mathrm{expressibility}}^{\mathrm{lower}}(\bm{\theta},\sigma') \right]$ can be further reformulated to match the form of $\mathbb{E}_{\bm{\theta}} h_{*}(\cdot)$:
\begin{equation}
  \begin{aligned}
    &\mathbb{E}_{\sigma'} \left[ \left\{ \mathbb{E}_{\bm{\theta}} h_{\mathrm{expressibility}}^{\mathrm{lower}}(\bm{\theta},\sigma') \right\}^2 -\frac{2}{2^n+1}\mathbb{E}_{\bm{\theta}} h_{\mathrm{expressibility}}^{\mathrm{lower}}(\bm{\theta},\sigma') \right] \\
    =& \mathbb{E}_{\bm{\theta_1},\bm{\theta_2}} \mathbb{E}_{\sigma'} h_{\mathrm{expressibility}}^{\mathrm{lower}}(\bm{\theta_1},\sigma') h_{\mathrm{expressibility}}^{\mathrm{lower}}(\bm{\theta_2},\sigma') - \frac{2}{2^n+1} \mathbb{E}_{\bm{\theta}} \mathbb{E}_{\sigma'} h_{\mathrm{expressibility}}^{\mathrm{lower}}(\bm{\theta},\sigma').
  \end{aligned}
\end{equation}

\subsection{$l_1$-Lower bound for expressibility}

Except for $\norm{\mathbb{E}_{U} U^{\otimes 2}(\ketbra{0^n})^{\otimes 2}\left(U^{\dagger}\right)^{\otimes 2}-\mathbb{E}_{\bm{\theta}} \widetilde{\mathcal{C}}(\bm{\theta})(\ketbra{0^{n}})^{\otimes 2}}_2^2$, we can also derive a computable lower bound for $\norm{\mathbb{E}_{U} U^{\otimes 2}(\ketbra{0^n})^{\otimes 2}\left(U^{\dagger}\right)^{\otimes 2}-\mathbb{E}_{\bm{\theta}} \widetilde{\mathcal{C}}(\bm{\theta})(\ketbra{0^{n}})^{\otimes 2}}_1$.
Depending on the $l_1$-lower bound, we can obtain similar conclusions at Ref.~\cite{holmes2022connecting}, i.e., the existance of the strong expressibility-induced barren plateau phenomenon.

First, by Holder's inequality, we have:
\begin{small}
\begin{equation}
  \begin{aligned}
  \tr{\left(\mathbb{E}_{\bm{\theta}} \widetilde{\mathcal{C}}(\bm{\theta})(\ketbra{0^{n}})^{\otimes 2} - \mathbb{E}_{U} U^{\otimes 2}(\ketbra{0^n})^{\otimes 2}\left(U^{\dagger}\right)^{\otimes 2}\right) O^{\otimes 2}} &\leq \norm{O^{\otimes 2}}_\infty \norm{\mathbb{E}_{U} U^{\otimes 2}(\ketbra{0^n})^{\otimes 2}\left(U^{\dagger}\right)^{\otimes 2}-\mathbb{E}_{\bm{\theta}} \widetilde{\mathcal{C}}(\bm{\theta})(\ketbra{0^{n}})^{\otimes 2}}_1 \\
  & = \norm{O}^2_\infty \norm{\mathbb{E}_{U} U^{\otimes 2}(\ketbra{0^n})^{\otimes 2}\left(U^{\dagger}\right)^{\otimes 2}-\mathbb{E}_{\bm{\theta}} \widetilde{\mathcal{C}}(\bm{\theta})(\ketbra{0^{n}})^{\otimes 2}}_1.
  \end{aligned}
\end{equation}
\end{small}

On the other hand, for the variance of expectation value, we have:
\begin{equation}
  \begin{aligned}
  \mathrm{Var}\left( \langle \widetilde{O} \rangle_{\bm{\theta}} \right) & = \mathbb{E}_{\bm{\theta}} \left(  \langle \widetilde{O} \rangle_{\bm{\theta}} \right)^2 - \left( \mathbb{E}_{\bm{\theta}}  \langle \widetilde{O} \rangle_{\bm{\theta}} \right)^2 \\
  & = \mathbb{E}_{\bm{\theta}}   \tr{\widetilde{\mathcal{C}}(\bm{\theta})(\ketbra{0^{n}}) O}^2 - \left( \mathbb{E}_{\bm{\theta}}  \langle \widetilde{O} \rangle_{\bm{\theta}} \right)^2 \\
  & = \mathbb{E}_{\bm{\theta}}   \tr{\widetilde{\mathcal{C}}(\bm{\theta})(\ketbra{0^{n}})^{\otimes 2} O^{\otimes 2}} - \left( \mathbb{E}_{\bm{\theta}}  \langle \widetilde{O} \rangle_{\bm{\theta}} \right)^2.
  \end{aligned}
\end{equation}

Combining the Eq.~\eqref{eq:2_design_output} that $\mathbb{E}_{U} U^{\otimes 2}(\ketbra{0^{n}})^{\otimes 2}\left(U^{\dagger}\right)^{\otimes 2} = \frac{I + S}{2^n \left(2^n+1\right)}$ and the traceless assumption of $O$, we obtain:
\begin{equation}
  \tr {\mathbb{E}_{U} U^{\otimes 2}(\ketbra{0^n})^{\otimes 2}\left(U^{\dagger}\right)^{\otimes 2} O^{\otimes 2}} = \frac{\tr{(I + S) O^{\otimes 2}} }{2^n \left(2^n+1\right)} = \frac{\tr{O^2}+\tr{O}^2}{2^n \left(2^n+1\right)} = \frac{\tr{O^2}}{2^n \left(2^n+1\right)}.
\end{equation}

As a result, we have:
\begin{equation}\label{eq:l1_lower_bound}
  \begin{aligned}
    \norm{\mathbb{E}_{U} U^{\otimes 2}(\ketbra{0^n})^{\otimes 2}\left(U^{\dagger}\right)^{\otimes 2}-\mathbb{E}_{\bm{\theta}} \widetilde{\mathcal{C}}(\bm{\theta})(\ketbra{0^{n}})^{\otimes 2}}_1 &\geq \frac{\tr{\left(\mathbb{E}_{\bm{\theta}} \widetilde{\mathcal{C}}(\bm{\theta})(\ketbra{0^{n}})^{\otimes 2} - \mathbb{E}_{U} U^{\otimes 2}(\ketbra{0^n})^{\otimes 2}\left(U^{\dagger}\right)^{\otimes 2}\right) O^{\otimes 2}}}{\norm{O}^2_\infty} \\
    & = \frac{1}{\norm{O}^2_\infty} \left[ \mathrm{Var}\left( \langle \widetilde{O} \rangle_{\bm{\theta}} \right) + \left( \mathbb{E}_{\bm{\theta}}  \langle \widetilde{O} \rangle_{\bm{\theta}} \right)^2 - \frac{\tr{O^2}+\tr{O}^2}{2^n \left(2^n+1\right)} \right] \\
    & \geq \frac{1}{\norm{O}^2_\infty} \left[ \mathrm{Var}\left( \langle \widetilde{O} \rangle_{\bm{\theta}} \right) - \frac{\tr{O^2}}{2^n \left(2^n+1\right)} \right].
  \end{aligned}
\end{equation}
In Eq.~\eqref{eq:l1_lower_bound}, the observable $O$ can be any traceless observable, different $O$ will lead to different $l_1$-lower bound for the expressibility.
The variance $\mathrm{Var}\left( \langle \widetilde{O} \rangle_{\bm{\theta}} \right)$ then can be computed efficiently using the method in Sec.~\ref{sec:OBPPP}, so the $l_1$-lower bound for the expressibility is computable.

\subsection{Connection between expressibility and barren plateau}\label{sec:expressibility_barren_plateau}
Theoretically, we can use Eq.~\eqref{eq:l1_lower_bound} to connect the expressibility and the barren plateau phenomenon.
For Pauli observable $O$, we have $\tr{O^2}=2^n$ and $\norm{O}^2_\infty=1$.
If $\norm{\mathbb{E}_{U} U^{\otimes 2}(\ketbra{0^n})^{\otimes 2}\left(U^{\dagger}\right)^{\otimes 2}-\mathbb{E}_{\bm{\theta}} \widetilde{\mathcal{C}}(\bm{\theta})(\ketbra{0^{n}})^{\otimes 2}}_1$ is vanishing, i.e., $\norm{\mathbb{E}_{U} U^{\otimes 2}(\ketbra{0^n})^{\otimes 2}\left(U^{\dagger}\right)^{\otimes 2}-\mathbb{E}_{\bm{\theta}} \widetilde{\mathcal{C}}(\bm{\theta})(\ketbra{0^{n}})^{\otimes 2}}_1=\order{\frac{1}{2^n}}$, then $\mathrm{Var}\left( \langle \widetilde{O} \rangle_{\bm{\theta}} \right)$ is vanishing.
On the other hand, there is:
\begin{equation}
  \begin{aligned}
  \mathrm{Var}\left(\frac{\partial \langle \widetilde{O} \rangle}{\partial{\theta_i}}\right) &= \mathbb{E}_{\bm{\theta}} \left(\frac{ \langle \widetilde{O} \rangle_{\bm{\theta}+\frac{\pi}{2}e_i}-\langle \widetilde{O} \rangle_{\bm{\theta}-\frac{\pi}{2}e_i}}{2} \right)^2 \\
  &= \frac{1}{4}\mathbb{E}_{\bm{\theta}} \left[ \left[\langle \widetilde{O} \rangle_{\bm{\theta}+\frac{\pi}{2}e_i} - \left(\mathbb{E}_{\bm{\theta}}\langle \widetilde{O} \rangle_{\bm{\theta}}\right)\right]
  - \left[ \langle \widetilde{O} \rangle_{\bm{\theta}-\frac{\pi}{2}e_i}  - \left(\mathbb{E}_{\bm{\theta}}\langle \widetilde{O} \rangle_{\bm{\theta}}\right) \right] \right]^2 \\
  & \leq \frac{1}{2}\mathbb{E}_{\bm{\theta}}  \left\{ \left[\langle \widetilde{O} \rangle_{\bm{\theta}+\frac{\pi}{2}e_i} - \left(\mathbb{E}_{\bm{\theta}}\langle \widetilde{O} \rangle_{\bm{\theta}}\right)\right]^2
  + \left[ \langle \widetilde{O} \rangle_{\bm{\theta}-\frac{\pi}{2}e_i}  - \left(\mathbb{E}_{\bm{\theta}}\langle \widetilde{O} \rangle_{\bm{\theta}}\right) \right]^2 \right\} \\
  & = \mathrm{Var}\left( \langle \widetilde{O} \rangle_{\bm{\theta}} \right).
  \end{aligned}
\end{equation}
Thus, if $\mathrm{Var}\left( \langle \widetilde{O} \rangle_{\bm{\theta}} \right)$ is vanishing, then $\mathrm{Var}\left(\frac{\partial \langle \widetilde{O} \rangle}{\partial{\theta_i}}\right)$ is vanishing, this implies that the strong expressibility can induce the barren plateau phenomenon.

\section{Estimation algorithm}\label{sec:OBPPP}

\subsection{2MC-OBPPP}

In this section, we introduce an efficient classical algorithm to estimate diagnostic quantities for noisy parameterized quantum circuits with PCS1 noise channels.
The algorithm is inspired by the OBPPP algorithm proposed in Ref.~\cite{shao2024simulating} for simulating noisy quantum circuits.
First, we list the diagnostic quantities that we aim to estimate, including noise robustness, noise sensitivity, trainability, expressivity and its lower bound in Table~\ref{tab:table_diagnostic}.

\begin{table*}[h!]
\centering
\begin{small}
\begin{tabular}{|p{0.11\textwidth}||p{0.15\textwidth}|p{0.20\textwidth}|p{0.13\textwidth}|p{0.225\textwidth}|p{0.19\textwidth}|}
\hline
Diagnostic Quantities & Noise robustness & Noise sensitivity & Trainability & Expressivity & Expressivity Lower bound \\ \hline
Definition & $\mathbb{E}_{\bm{\theta}} \left( \langle O \rangle_{\bm{\theta}} - \langle \widetilde{O} \rangle_{\bm{\theta}} \right)^2$ & $\abs{\frac{\partial}{\partial \lambda_{i,j}} \mathbb{E}_{\bm{\theta}} \left( \langle O \rangle_{\bm{\theta}} - \langle \widetilde{O} \rangle_{\bm{\theta}} \right)^2}$ & $\mathrm{Var}\left(\frac{\partial \langle O \rangle}{\partial{\theta_i}}\right)$ & $\norm{\mathcal{A}_{\widetilde{\mathcal{C}}}^{2}(\ketbra{0}^{\otimes n})}_2^2$ & Eq.~\eqref{eq:noisy_expressibility_lower_bound} \\ \hline
Reformulation & $\mathbb{E}^*_{\bm{\theta}} h_{\mathrm{robustness}}(\bm{\theta})$ \eqref{ap:eq:mse_noise_robustness} & $\abs{\mathbb{E}^*_{\bm{\theta}} h_{\mathrm{sensitivity}}(\bm{\theta})}$ \eqref{ap:eq:noise_sensitivity} for Pauli noise and \eqref{ap:eq:amplitude_damping_noise_sensitivity} for amplitude damping noise & $\mathbb{E}^*_{\bm{\theta}} h_{\mathrm{trainability}}(\bm{\theta})$ \eqref{ap:eq:variance_gradient_noisy} & $\mathbb{E}^*_{\bm{\theta_1}, \bm{\theta_2}} \left[\mathbb{E}_{\sigma} h_{\mathrm{express}}(\bm{\theta_1}, \bm{\theta_2},\sigma)\right]^2$ $- \frac{2}{ \left(2^n+1\right)} \mathbb{E}^*_{\bm{\theta},\sigma'} h'_{\mathrm{express}}(\bm{\theta},\sigma')$ \eqref{ap:eq:noisy_expressibility} & $\mathbb{E}_{\sigma'} \left[ \left\{ \mathbb{E}^*_{\bm{\theta}} h_{\mathrm{express}}^{\mathrm{lower}}(\bm{\theta},\sigma') \right\}^2 \right. $ $-\left. \frac{2}{2^n+1}\mathbb{E}^*_{\bm{\theta}} h_{\mathrm{express}}^{\mathrm{lower}}(\bm{\theta},\sigma') \right]$ \eqref{ap:eq:noisy_expressibility_lower_bound} \\ \hline
Upper bound for $\abs{h_{*}}$ & $4\norm{O}_\infty^2$ & $8 \norm{O}_\infty^2$ & $\norm{O}_\infty^2$ & 1 & 1 \\ \hline
Constraint & PCS1 channels & PCS1 channels & PCS1 channels & PCS1 + PRS1 channels & PCS1 channels \\ \hline
Runtime for a inner-layer sample & $\order{nL}$ & $\order{nL}$ & $\order{nL}$ & $\order{nL}$ & $\order{nL}$ \\ \hline
\end{tabular}
\end{small}
\caption{A summary of the diagnostic quantities considered in this work. 
``Reformulation'' show how each diagnostic quantity can be expressed as an expectation value over a bounded function $h_{*}$.
In the reformulation expressions, $\mathbb{E}^*_{\bm{\theta}}$ represents the expectation value that $\bm{\theta}$, $\bm{\theta_1}$, and $\bm{\theta_2}$ are sampled from the discrete set $\{0,\frac{\pi}{2},\pi,\frac{3\pi}{2}\}^{N_g}$ with equal probability, where $N_g$ is the number of parameterized gates in the circuit.
$\sigma$ and $\sigma'$ are Pauli operators sampled from the sets $\{\mathbb{I},Z\}^{\otimes n}$ and $\{\mathbb{I},X,Y,Z\}^{\otimes n}$ with equal probability, respectively.
Here, $h_{*}$ represents the corresponding function for each diagnostic quantity, $n$ is the number of qubits, and $L$ is depth of the quantum circuit.}
\label{tab:table_diagnostic}
\end{table*}

As we briefly illustrated in main text, the 2MC-OBPPP algorithm has two layers, the outer-layer is used to sample the parameters $\bm{\theta}$ and other configurations (i.e., the $\sigma$ in expressibility and its lower bound), while the inner-layer is used to estimate the bounded function $h_{*}(\cdot)$ for each sampled configuration.

\textbf{Outer-layer:}
In the outer-layer, we sample the parameters $\bm{\theta}$ and any additional configurations according to their respective distributions. 
For each sampled configuration, we call the inner-layer to estimate the bounded function $h_{*}(\cdot)$.
For example, for noise robustness, noise sensitivity, and trainability, the parameters $\bm{\theta}$ are sampled with equal probability from the discrete sets $\{0,\frac{\pi}{2},\pi,\frac{3\pi}{2}\}^{N_g}$, where $N_g$ is the number of parameterized gates in the circuit.
The corresponding diagnostic quantity is then estimated by averaging the inner-layer outputs over $N_{\text{out}}$ outer-layer samples:
\begin{equation}
  \mathbb{E}^*_{\bm{\theta}} h_{*}(\bm{\theta}) \approx \frac{1}{N_{\text{out}}} \sum_{i=1}^{N_{\text{out}}} \widehat{h}_{*}(\bm{\theta}_i),
\end{equation}
where $N_{\text{out}}$ is the number of outer-layer samples, $\bm{\theta}_i$ is the $i$-th sampled parameter configuration, and $\widehat{h}_{*}(\bm{\theta}_i)$ is the corresponding inner-layer estimate.

For expressibility and its lower bound, the parameters $\bm{\theta}$, $\bm{\theta_1}$, and $\bm{\theta_2}$ in Table~\ref{tab:table_diagnostic} are also sampled with equal probability from the discrete sets $\{0,\frac{\pi}{2},\pi,\frac{3\pi}{2}\}^{N_g}$, while $\sigma$ and $\sigma'$ are sampled with equal probability from the set $\{\mathbb{I},Z\}^{\otimes n}$ and $\{\mathbb{I},X,Y,Z\}^{\otimes n}$, respectively.
The target diagnostic quantities are estimated via a cross-product estimator:
\begin{equation}
  \mathbb{E}^*_{\bm{\theta_1}, \bm{\theta_2}} \left[\mathbb{E}_{\sigma} h_{\mathrm{express}}(\bm{\theta_1}, \bm{\theta_2},\sigma)\right]^2 \approx \frac{1}{N_{\text{out}}} \sum_{i=1}^{N_{\text{out}}} \left( \widehat{h}_{\mathrm{express}}(\bm{\theta}_{1,i}, \bm{\theta}_{2,i},\sigma_i) \widehat{h}_{\mathrm{express}}(\bm{\theta}_{1,i}, \bm{\theta}_{2,i},\sigma'_i)  \right),
\end{equation}
where $\bm{\theta}_{1,i}$, $\bm{\theta}_{2,i}$, $\sigma_i$ and $\sigma'_i$ are the $i$-th sampled configurations, and $\widehat{h}_{\mathrm{express}}(\bm{\theta}_{1,i}, \bm{\theta}_{2,i},\sigma_i)$ and $\widehat{h}_{\mathrm{express}}(\bm{\theta}_{1,i}, \bm{\theta}_{2,i},\sigma'_i)$ denotes the corresponding inner-layer estimates.
Other terms can be estimated analogously using the same cross-product construction.

In terms of computational cost, since $N_g = \order{L}$, the outer-layer cost is $\order{N_{\text{out}} (n+L)}$.

\textbf{Inner-layer:}
In the inner-layer, we estimate the bounded function $h_{*}(\cdot)$ for each sampled configuration from the outer-layer.
The definition of bounded function $h_{*}(\cdot)$ depends on the diagnostic quantity of interest, and it can be expressed as a statistic of Pauli paths.
We summarize the definitions of $h_{*}(\cdot)$ for all diagnostic quantities in Table~\ref{tab:table_diagnostic_h}:
\begin{table*}[h!]
\centering
\begin{small}
\begin{tabular}{|p{0.11\textwidth}||p{0.85\textwidth}|}
\hline
Diagnostic Quantities & Definition of $h_{*}(\cdot)$ \\ \hline
Noise robustness & $h_{\mathrm{robustness}}(\bm{\theta}) = \left( \sum_{\vec{s}} f(\vec{s},\bm{\theta},O,\rho) - \widetilde{f}(\vec{s},\bm{\theta},O,\rho) \right)^2$ \eqref{ap:eq:mse_noise_robustness} \\ \hline
Noise sensitivity & $h_{\mathrm{sensitivity}}(\bm{\theta}) = -2 \left(\sum_{\vec{s}} (1-g_{\mathcal{N}}(\vec{s}))f(\vec{s},\bm{\theta},O,\rho) \right) \left( \sum_{\vec{s'}}  \left( \mathbf{1}_{(s_l|_{\mathcal{N}_{l}}=\mathbb{I}/\sqrt{2})}(\vec{s'})-1 \right)g_{\mathcal{N}\setminus \mathcal{N}_{l}}(\vec{s'})  f(\vec{s'},\bm{\theta},O,\rho)  \right)$ \eqref{ap:eq:noise_sensitivity} and 
$h_{\mathrm{sensitivity}}^{(amp)}(\bm{\theta}) = -2 \left(  \sum_{\vec{s}} f(\vec{s},\bm{\theta},O,\rho) - \widetilde{f}(\vec{s},\bm{\theta},O,\rho) \right) \left\{\sum_{\vec{s},\vec{\tau}} 2 \widetilde{f}_{*}^{(\vec{\tau})}(\vec{s},\bm{\theta},O,\rho) \right\}$ \eqref{ap:eq:amplitude_damping_noise_sensitivity} \\ \hline
Trainability & $h_{\mathrm{trainability}}(\bm{\theta}) = \left(\sum_{\vec{s}} \mathbf{1}_{ac}(s_{i-1},P_i) \widetilde{f}(\vec{s},\bm{\theta}+\frac{\pi}{2}e_i,O,\rho)\right)^2$ \eqref{ap:eq:variance_gradient_noisy} \\ \hline
Expressivity & $h_{\mathrm{express}}(\bm{\theta_1}, \bm{\theta_2}, \sigma) =  \tr{\widetilde{\mathcal{C}}^\dagger (\bm{\theta_2}) \left(\widetilde{\mathcal{C}}(\bm{\theta_1})(\ketbra{0^{n}}) \right) \sigma}$
and
$h'_{\mathrm{express}}(\bm{\theta}, \sigma') = \tr{\widetilde{\mathcal{C}}(\bm{\theta})(\ketbra{0^{n}}) \sigma'}^2$ \eqref{ap:eq:noisy_expressibility} \\ \hline
Expressivity Lower bound & $h_{\mathrm{expressibility}}^{\mathrm{lower}}(\bm{\theta},\sigma') = \tr{\widetilde{\mathcal{C}}(\bm{\theta})(\ketbra{0^{n}}) \sigma' }^2$ \eqref{ap:eq:noisy_expressibility_lower_bound} \\ \hline
\end{tabular}
\end{small}
\caption{The definition of the bounded function $h_{*}(\cdot)$ for each diagnostic quantity.}
\label{tab:table_diagnostic_h}
\end{table*}

From Table~\ref{tab:table_diagnostic_h}, we observe that, for each diagnostic quantity, the bounded function $h_{*}(\cdot)$ can be written as a statistic of Pauli paths. 
For expressibility and its lower bound, $h_{*}(\cdot)$ is not given directly as such a statistic; however, it can be estimated from Pauli measurement expectation values, which themselves can be expressed as Pauli path statistics using Eq.~\eqref{eq:pauli_path_integral_noisy}.
As a result, estimating $h_{*}(\cdot)$ reduces to evaluating the following quantities, when $\bm{\theta} \in \{0,\frac{\pi}{2},\pi,\frac{3\pi}{2}\}^{N_g}$:
\begin{equation}\label{ap:eq:general_pauli_path_statistic}
  \sum_{\vec{s}} g_*(\vec{s})f(\vec{s},\bm{\theta},O,\rho), \quad \sum_{\vec{s}} g_*(\vec{s})\widetilde{f}(\vec{s},\bm{\theta},O,\rho), \quad \sum_{\vec{s},\vec{\tau}} g_*(\vec{s},\vec{\tau}) \widetilde{f}^{(\vec{\tau})}(\vec{s},\bm{\theta},O,\rho),
\end{equation}
where $g_*(\cdot)$ is a known function depending on the specific diagnostic quantity. The $f(\vec{s},\bm{\theta},O,\rho)$ is defined in Eq.~\eqref{eq:contribution_function_noiseless}, $\widetilde{f}(\vec{s},\bm{\theta},O,\rho)$ is defined in Eq.~\eqref{eq:contribution_function_noise}, and $\widetilde{f}^{(\vec{\tau})}(\vec{s},\bm{\theta},O,\rho)$ is defined in Eq.~\eqref{eq:contribution_function_noise_tau}. 
For $\sum_{\vec{s}} g_*(\vec{s})f(\vec{s},\bm{\theta},O,\rho)$, it can be viewed as a special case of the second quantity when the noise channels are identity channels.
By the fact that $\widetilde{f}(\vec{s},\bm{\theta},O,\rho) = \sum_{\vec{\tau}} \widetilde{f}^{(\vec{\tau})}(\vec{s},\bm{\theta},O,\rho)$, the second quantity in Eq.~\eqref{ap:eq:general_pauli_path_statistic} can be obtained by the third quantity. 
Thus, we mainly focus on designing an efficient algorithm to estimate the third quantity in Eq.~\eqref{ap:eq:general_pauli_path_statistic}.

Furthermore, when $\theta_i \in \{0,\frac{\pi}{2},\pi,\frac{3\pi}{2}\}$, the unitary $U_i(\theta_i)$ is a Clifford gate, and its conjugation map $U_i(\theta_i) (\cdot) U_i^\dagger(\theta_i)$ is therefore a PCS1 channel.
Without loss of generality, the entire parameterized quantum circuit $\widetilde{\mathcal{C}}(\bm{\theta})$ can be written as a composition of PCS1 channels:
\begin{equation}\label{ap:eq:PCS1_circuit}
  \widetilde{\mathcal{C}}(\bm{\theta}) = \mathcal{C}_{L} \circ \mathcal{C}_{L-1} \circ \cdots \circ \mathcal{C}_1,
\end{equation}
where $\left\{\mathcal{C}_i \right\}_{i=1}^L$ includes both the parameterized gates and the noise channels and $L$ is the depth of the quantum circuit.
Using the explicit form of $\sum_{\vec{s},\vec{\tau}} g_*(\vec{s},\vec{\tau}) \widetilde{f}^{(\vec{\tau})}(\vec{s},\bm{\theta},O,\rho) = \sum_{\vec{s},\vec{\tau}} g_*(\vec{s},\vec{\tau}) \tr{O s_{N_g}} \tr{s_0 \rho} \prod_{i=1}^{{N_g}} \tr{\tau_i U_i(\theta_i) s_{i-1} U_i^\dagger(\theta_i)}\tr{s_i \mathcal{N}_i(\tau_i)}$, and the decomposition in Eq.~\eqref{ap:eq:PCS1_circuit}, we can replace the unitary conjugation maps $U_i(\theta_i) (\cdot) U_i^\dagger(\theta_i)$ and the noise channels $\mathcal{N}_i$ with PCS1 channel $\mathcal{C}_i$.
This yields:
\begin{equation}\label{ap:eq:general_pauli_path_statistic_PCS1}
  \sum_{\vec{s},\vec{\tau}} g_*(\vec{s},\vec{\tau}) \widetilde{f}^{(\vec{\tau})}(\vec{s},\bm{\theta},O,\rho) = \sum_{\vec{s}} g_*(\vec{s}) \tr{O s_{L}} \tr{s_0 \rho} \prod_{i=1}^{{L}} \tr{s_i \mathcal{C}_i (s_{i-1})},
\end{equation}
where we replace $(\vec{s},\vec{\tau})$ by a single, unified Pauli path, while retaining the notation $\vec{s}$; explicitly, $\vec{s}=(s_0,\tau_1,s_1,\tau_2,s_2,\cdots,\tau_{N_g},s_{N_g})$.

Here, we design a algorithm to estimate the statistic in Eq.~\eqref{ap:eq:general_pauli_path_statistic_PCS1} of a parameterized quantum circuit $\widetilde{\mathcal{C}}(\bm{\theta})$, where the rotation angles $\{\theta_i\}$ taking the value in $\{0,\frac{\pi}{2},\pi,\frac{3\pi}{2}\}$.
To ensure the input of the quantum circuits is classical efficient, we assume that the initial state $\rho$ is sparse, i.e $\rho=\sum_{a,b} \rho_{a,b}\ketbra{a}{b}$, where $\ket{a}$ and $\ket{b}$ are computational basis states and there are $N_\rho=\order{\mathrm{Poly}(n)}$ of non-zero $\rho_{a,b}$.
Furthermore, we assume the Pauli $1$-norm of observable $O$ is polynomially bounded.
Concretely, we write $O=\sum_{h=1}^{N_O} c_h P_h$, where $P_h$ are Pauli operators, $c_h$ are the corresponding coefficient and $N_O$ is number of nonzero coefficients.
We call $O$ has polynomially bounded Pauli $1$-norm if $\norm{O}_{\mathrm{Pauli},1}\coloneq \sum_{h=1}^{N_O} \abs{c_h} =\order{\mathrm{Poly}(n)}$.

To illustrate the procedure, we first rewrite Eq.~\eqref{ap:eq:general_pauli_path_statistic_PCS1} as:
\begin{small}
\begin{equation}
  \begin{aligned}
    &\sum_{\vec{s}} g_*(\vec{s}) \tr{O s_{L}} \tr{s_0 \rho} \prod_{i=1}^{{L}} \tr{s_i \mathcal{C}_i (s_{i-1})} \\
    =& \sum_{h=1}^{N_O} c_h  \left[ \sum_{\vec{s}} g_*(\vec{s}) \tr{P_h s_{L}} \tr{s_0 \rho} \prod_{i=1}^{{L}} \tr{s_i \mathcal{C}_i (s_{i-1})}\right] \\
    =& \sum_{h=1}^{N_O} c_h  \left[ \sum_{\vec{s}} g_*(\vec{s}) \tr{P_h s_{L}} \tr{s_0 \rho} \prod_{i=1}^{{L}} \frac{\abs{\tr{s_i \mathcal{C}_i (s_{i-1})}}}{\sum_\tau \abs{\tr{s_i \mathcal{C}_i (\tau)}}} \left(\sum_\tau \abs{\tr{s_i \mathcal{C}_i (\tau)}} \right) \operatorname{sign}\left(\tr{s_i \mathcal{C}_i (s_{i-1})}\right) \right] \\
    =& \sum_{h=1}^{N_O} c_h  \left[ \sum_{\vec{s}\mid s_L=\frac{P_h}{\sqrt{2^n}}} g_*(\vec{s}) \sqrt{2^n} \tr{s_0 \rho} \prod_{i=1}^{{L}} \frac{\abs{\tr{s_i \mathcal{C}_i (s_{i-1})}}}{\sum_\tau \abs{\tr{s_i \mathcal{C}_i (\tau)}}} \left(\sum_\tau \abs{\tr{s_i \mathcal{C}_i (\tau)}} \right) \operatorname{sign}\left(\tr{s_i \mathcal{C}_i (s_{i-1})}\right) \right] \\
    =& \sum_{h=1}^{N_O} \frac{\abs{c_h}}{\sum_{k=1}^{N_O} \abs{c_k}} \left(\sum_{k=1}^{N_O} \abs{c_k}\right) \operatorname{sign}\left(c_h\right)  \left[ \sum_{\vec{s}\mid s_L=\frac{P_h}{\sqrt{2^n}}} g_*(\vec{s}) \sqrt{2^n} \tr{s_0 \rho} \prod_{i=1}^{{L}} \frac{\abs{\tr{s_i \mathcal{C}_i (s_{i-1})}}}{\sum_\tau \abs{\tr{s_i \mathcal{C}_i (\tau)}}} \left(\sum_\tau \abs{\tr{s_i \mathcal{C}_i (\tau)}} \right) \operatorname{sign}\left(\tr{s_i \mathcal{C}_i (s_{i-1})}\right) \right] \\
  \end{aligned}
\end{equation}  
\end{small}

where $\sum_{\vec{s}\mid s_L=\frac{P_h}{\sqrt{2^n}}}$ denotes the summation over all Pauli paths $\vec{s}$ with the last element $s_L$ fixed to $\frac{P_h}{\sqrt{2^n}}$.

Defining $p\left(s_L=\frac{P_h}{\sqrt{2^n}}\right) = \frac{\abs{c_h}}{\sum_{k=1}^{N_O} \abs{c_k}}$ as the probability of choosing the last element $s_L=\frac{P_h}{\sqrt{2^n}}$ and $p\left( s_{i-1}\mid s_i \right) = \frac{\abs{\tr{s_i \mathcal{C}_i (s_{i-1})}}}{\sum_\tau \abs{\tr{s_i \mathcal{C}_i (\tau)}}}$ as the conditional probability of $s_{i-1}$ given $s_i$, we can express the entire summation as an expectation value over the distribution of Pauli paths:
\begin{small}
\begin{equation}\label{eq:pauli_path_expectation}
  \begin{aligned}
    &\sum_{\vec{s}} g_*(\vec{s}) \tr{O s_{L}} \tr{s_0 \rho} \prod_{i=1}^{{L}} \tr{s_i \mathcal{C}_i (s_{i-1})} \\
    =& \sum_{h=1}^{N_O} \frac{\abs{c_h}}{\sum_{k=1}^{N_O} \abs{c_k}} \left(\sum_{k=1}^{N_O} \abs{c_k}\right) \operatorname{sign}\left(c_h\right)  \left[ \sum_{\vec{s}\mid s_L=\frac{P_h}{\sqrt{2^n}}} g_*(\vec{s}) \sqrt{2^n} \tr{s_0 \rho} \prod_{i=1}^{{L}} \frac{\abs{\tr{s_i \mathcal{C}_i (s_{i-1})}}}{\sum_\tau \abs{\tr{s_i \mathcal{C}_i (\tau)}}} \left(\sum_\tau \abs{\tr{s_i \mathcal{C}_i (\tau)}} \right) \operatorname{sign}\left(\tr{s_i \mathcal{C}_i (s_{i-1})}\right) \right] \\
    =& \sum_{s_L} p\left(s_L\right) \left(\sum_{k=1}^{N_O} \abs{c_k}\right) \operatorname{sign}\left(\text{coff}(s_L)\right) \left[ \sum_{\vec{s}\mid s_L=\frac{P_h}{\sqrt{2^n}}} g_*(\vec{s}) \sqrt{2^n} \tr{s_0 \rho} \prod_{i=1}^{{L}} p\left( s_{i-1}\mid s_i \right) \left(\sum_\tau \abs{\tr{s_i \mathcal{C}_i (\tau)}} \right) \operatorname{sign}\left(\tr{s_i \mathcal{C}_i (s_{i-1})}\right) \right] \\
    =& \sum_{\vec{s}} \left[ p\left(s_L\right) \left(\sum_{k=1}^{N_O} \abs{c_k}\right) \operatorname{sign}\left(\text{coff}(s_L)\right) g_*(\vec{s}) \sqrt{2^n} \tr{s_0 \rho} \prod_{i=1}^{{L}} p\left( s_{i-1}\mid s_i \right) \left(\sum_\tau \abs{\tr{s_i \mathcal{C}_i (\tau)}} \right) \operatorname{sign}\left(\tr{s_i \mathcal{C}_i (s_{i-1})}\right) \right] \\
    =& \sum_{\vec{s}} \left[ p\left(\vec{s}\right) g_*(\vec{s}) \sqrt{2^n} \tr{s_0 \rho} \left(\sum_{k=1}^{N_O} \abs{c_k}\right) \operatorname{sign}\left(\text{coff}(s_L)\right) \prod_{i=1}^{{L}} \left(\sum_\tau \abs{\tr{s_i \mathcal{C}_i (\tau)}} \right) \operatorname{sign}\left(\tr{s_i \mathcal{C}_i (s_{i-1})}\right) \right] \\
    =& \mathbb{E}_{\vec{s}} \left[ g_*(\vec{s}) \sqrt{2^n} \tr{s_0 \rho} \left(\sum_{k=1}^{N_O} \abs{c_k}\right) \operatorname{sign}\left(\text{coff}(s_L)\right) \prod_{i=1}^{{L}} \left(\sum_\tau \abs{\tr{s_i \mathcal{C}_i (\tau)}} \right) \operatorname{sign}\left(\tr{s_i \mathcal{C}_i (s_{i-1})}\right) \right]
  \end{aligned}
\end{equation}  
\end{small}

where $\text{coff}(s_L)$ denotes the coefficient $c_h$ corresponding to the Pauli operator $P_h$ such that $s_L=\frac{P_h}{\sqrt{2^n}}$, and $p\left(\vec{s}\right)= p\left(s_L\right) \prod_{i=1}^L p\left( s_{i-1}\mid s_i \right)$ is the joint probability of the Pauli path $\vec{s}$.
Here, $\mathbb{E}_{\vec{s}}$ is the expectation value over the distribution of Pauli paths, i.e., $\mathbb{E}_{\vec{s}}[X]=\sum_{\vec{s}} p(\vec{s}) X(\vec{s})$, where $p(\vec{s})=p\left(s_L\right) \prod_{i=1}^L p\left( s_{i-1}\mid s_i \right)$.

Depending on Eq.~\eqref{eq:pauli_path_expectation}, we estimate the statistic $\sum_{\vec{s},\vec{\tau}} g_*(\vec{s},\vec{\tau}) \widetilde{f}^{(\vec{\tau})}(\vec{s},\bm{\theta},O,\rho)$ by sampling Pauli paths $\vec{s}$ from the distribution $p(\vec{s})$ and evaluating the corresponding contribution for each sampled path.
The algorithm starts from the final observable and back-propagates in the Hesienberg picture, the procedure is summarized as follows:
\begin{enumerate}
  \item Initialize by sampling the last element $s_{N_g}$ of the Pauli path $\vec{s}$ from the distribution $p\left(s_{N_g}\right) = \frac{\abs{c_h}}{\sum_{k=1}^{N_O} \abs{c_k}}$, where $s_{N_g}=\frac{P_h}{\sqrt{2^n}}$.
  Compute the factor $\left(\sum_{k=1}^{N_O} \abs{c_k}\right) \operatorname{sign}\left(c_h\right)$.
  The cost of sampling $s_{N_g}$ is $\order{n}$ by leveraging the tree data structure~\cite{shao2024simulating}.
  \item For each layer $i$ from $N_g$ down to $1$, given the Pauli operator $s_i$ at the $i$-th layer, we sample $s_{i-1}$ from the conditional distribution $p\left( s_{i-1}\mid s_i \right) = \frac{\abs{\tr{s_i \mathcal{C}_i (s_{i-1})}}}{\sum_\tau \abs{\tr{s_i \mathcal{C}_i (\tau)}}}$ and compute the factor $\left(\sum_\tau \abs{\tr{s_i \mathcal{C}_i (\tau)}} \right) \operatorname{sign}\left(\tr{s_i \mathcal{C}_i (s_{i-1})}\right)$.
  The computational cost of this step is:
  \begin{itemize}
    \item If $\mathcal{C}_i$ is a $\order{1}$-local PCS1 channel, the number of non-zero terms in the set $\{\tr{s_i \mathcal{C}_i (\tau)}\}_\tau$ is $\order{1}$. Hence one can enumerate all non-zero terms and sample $s_{i-1}$ with cost $\order{1}$.
    \item If $\mathcal{C}_i$ is a unitary conjugation map of a Clifford gate, we can use the stabilizer formalism to calculate $\mathcal{C}_i^\dagger (s_i)$ efficiently, and set $s_{i-1}=\mathcal{C}_i^\dagger (s_i)$. 
    This costs $\order{n}$, since $\mathcal{C}_i$ may act on all $n$ qubits and the Pauli label must be updated on each qubit.
  \end{itemize}
  \item[\vdots] 
  \item[] Continue the layer-wise back-propagation until reaching the initial state $\rho$ and yield the final Pauli operator $s_0$.
  \item Compute the factor $\tr{s_0 \rho}$ with $\tr{s_0 \rho}=\sum_{a,b}\tr{s_0  (\rho_{a,b}\ketbra{a}{b})}=\sum_{a,b}\rho_{a,b} \bra{b}s_0 \ket{a}=\sum_{a,b}\rho_{a,b} \prod_{j=1}^n \bra{b}_j (s_0)|_j \ket{a}_j$, where $|_j$ is the restriction to the $j$-th qubit and $\ket{\cdot}_j$ denotes $j$-th component of $\ket{\cdot}$.
  The cost of this step is $\order{N_\rho n}$.
  \item Finally, combine all the calculated factors to obtain the contribution of the sampled Pauli path $\vec{s}$, denoted by $T(\vec{s})$:
  \begin{equation}
      T(\vec{s}) := g_*(\vec{s}) \sqrt{2^n} \tr{s_0 \rho} \left(\sum_{k=1}^{N_O} \abs{c_k}\right) \operatorname{sign}\left(\text{coff}(s_{L})\right) \prod_{i=1}^{{L}} \left(\sum_\tau \abs{\tr{s_i \mathcal{C}_i (\tau)}} \right) \operatorname{sign}\left(\tr{s_i \mathcal{C}_i (s_{i-1})}\right).
  \end{equation}
\end{enumerate}
To estimate the statistic $\mathbb{E}_{\vec{s}} T(\vec{s})$, we repeat the above procedure $N_{\text{in}}$ times to obtain independent samples of Pauli paths $\{\vec{s}_i\}_{i=1}^{N_{\text{in}}}$, and approximate the expectation by the empirical mean:
\begin{equation}\label{ap:eq:inner_layer_empirical_mean}
  \mathbb{E}_{\vec{s}} T(\vec{s}) \approx \frac{1}{N_{\text{in}}} \sum_{i=1}^{N_{\text{in}}} T(\vec{s}_i).
\end{equation}
The empirical mean estimate the statistic $\sum_{\vec{s},\vec{\tau}} g_*(\vec{s},\vec{\tau}) \widetilde{f}^{(\vec{\tau})}(\vec{s},\bm{\theta},O,\rho)$ and therefore the bounded function $h_{*}(\cdot)$ for the sampled configuration from the outer-layer.
Assuming that there are $N_{\text{in}}$ samples in the inner-layer, the computational cost of the inner-layer is $\order{N_{\text{in}} (nL + N_\rho n)}$.

On the other hand, each outer-layer sample requires calling the inner-layer to estimate the bounded function $h_{*}(\cdot)$.
Although there are several Pauli path statistics in a single $h_{*}(\cdot)$ (e.g., noise robustness has two Pauli path statistics in Eq.~\eqref{ap:eq:mse_noise_robustness}), they can share the same sampled Pauli paths $\{\vec{s}_i\}_{i=1}^{N_{\text{in}}}$ from the inner-layer.
Thus, the inner-layer only needs to be called once for each outer-layer sample.
Assuming there are $N_{\text{out}}$ samples in the outer-layer, the total computational cost of the two-layer sampling algorithm is:
\begin{equation}\label{ap:eq:total_sample_complexity}
  \order{N_{\text{out}} N_{\text{in}} (nL + N_\rho n)} + \order{N_{\text{out}} (n+L)} = \order{N_{\text{out}} N_{\text{in}} (nL + N_\rho n)}.
\end{equation}

\subsection{Error analysis}

We analyze the estimation error of the two-layer sampling algorithm.
Note that in the definitions of ``Expressivity'' and ``Expressivity Lower bound'', the terms $- \frac{2}{ \left(2^n+1\right)} \mathbb{E}^*_{\bm{\theta},\sigma'} h'_{\mathrm{express}}(\bm{\theta},\sigma')$ and $-\frac{2}{2^n+1}\mathbb{E}^*_{\bm{\theta}} h_{\mathrm{express}}^{\mathrm{lower}}(\bm{\theta},\sigma')$ in Eqs.~\eqref{ap:eq:noisy_expressibility} and~\eqref{ap:eq:noisy_expressibility_lower_bound}, respectively, contribute at most $\frac{2}{2^n+1}$ error, since $\abs{h'_{\mathrm{express}}(\bm{\theta},\sigma')}\leq 1$ and $\abs{h_{\mathrm{express}}^{\mathrm{lower}}(\bm{\theta},\sigma')}\leq 1$.
We therefore omit these terms in the subsequent error analysis.

\textbf{Inner layer:}
We first analyze the inner-layer estimation error.
The bounded function $h_{*}(\cdot)$ is computed from the Pauli path statistics in Eq.~\eqref{ap:eq:general_pauli_path_statistic}, we estimate using the empirical mean in Eq.~\eqref{ap:eq:inner_layer_empirical_mean} based on $N_{\text{in}}$ independent samples.
Hoeffding's inequality then yields a bound on the estimation error. Specifically, the magnitude of a single sample $T(\vec{s})$ satisfies:
\begin{equation}\label{ap:eq:inner_layer_Ts_bound}
  \begin{aligned}
    \abs{T(\vec{s})} &\leq \abs{g_*(\vec{s})} \sqrt{2^n} \abs{\tr{s_0 \rho}} \left(\sum_{k=1}^{N_O} \abs{c_k}\right) \prod_{i=1}^{{L}} \left(\sum_\tau \abs{\tr{s_i \mathcal{C}_i (\tau)}} \right) \\
    &\leq \abs{g_*(\vec{s})} \sqrt{2^n} \norm{\rho}_1 \norm{s_0}_\infty \left(\sum_{k=1}^{N_O} \abs{c_k}\right)  \\
    & = \abs{g_*(\vec{s})} \left(\sum_{k=1}^{N_O} \abs{c_k}\right), \\
    & \leq \max_{\vec{s}}\abs{g_*(\vec{s})} \norm{O}_{\mathrm{Pauli},1} \\
    & \leq 2 \norm{O}_{\mathrm{Pauli},1},
  \end{aligned}
\end{equation}
where we use the fact that $\left(\sum_\tau \abs{\tr{s_i \mathcal{C}_i (\tau)}} \right)\leq 1$ for PCS1 channels, $\abs{\tr{s_0 \rho}} \leq \norm{\rho}_1 \norm{s_0}_\infty = \frac{1}{\sqrt{2^n}}$, and $\max_{\vec{s}}\abs{g_*(\vec{s})} \leq 2$ from Table~\ref{tab:table_diagnostic_h}.
Consequently, Hoeffding's inequality implies:
\begin{equation}\label{ap:eq:inner_layer_hoeffding}
  \mathbb{P} \left(  
    \abs{\mathbb{E}_{\vec{s}} T(\vec{s})
   - \frac{1}{N_{\text{in}}} \sum_{i=1}^{N_{\text{in}}} T(\vec{s}_i)} \geq \varepsilon_{\text{in}} \right) \leq 2 \exp{ -\frac{2 N_{\text{in}} \varepsilon_{\text{in}}^2}{\left(2 \norm{O}_{\mathrm{Pauli},1}\right)^2}}.
\end{equation}
For any $\varepsilon_{\text{in}}>0$ and $\delta_{\text{in}}\in(0,1)$, it suffices to choose:
\begin{equation}\label{ap:eq:inner_layer_sample_complexity}
  N_{\text{in}} \geq \frac{2 \norm{O}_{\mathrm{Pauli},1}^2}{\varepsilon_{\text{in}}^2} \ln{\frac{2}{\delta_{\text{in}}}},
\end{equation}
in which case the inner layer estimates the statistic in Eq.~\eqref{ap:eq:general_pauli_path_statistic} with an error at most $\varepsilon_{\text{in}}$ with probability at least $1-\delta_{\text{in}}$.

Finally, the bounded function $h_{*}(\cdot)$ associated with each diagnostic quantity can be estimated with an error that depends on $\varepsilon_{\text{in}}$.
Assume that the inner-layer estimation error is at most $\varepsilon_{\text{in}}$ with probability at least $1-\delta_{\text{in}}$ for each term in Eq.~\eqref{ap:eq:general_pauli_path_statistic}.
Using the union bound, we can derive the estimation error for each bounded function $h_{*}(\cdot)$.
We summarize the results in Table~\ref{tab:table_diagnostic_h_error}.

\begin{remark}
  For the expressibility and its lower bound, the $\norm{O}_{\mathrm{Pauli},1}$ in Eq.~\eqref{ap:eq:inner_layer_sample_complexity} is $1$ due to the choice of observable $\sigma\in \{ \mathbb{I}, X, Y, Z\}^{\otimes n}$.
\end{remark}

\begin{table*}[h!]
\centering
\begin{small}
\begin{tabular}{|p{0.2\textwidth}||p{0.8\textwidth}|}
\hline
$h_{*}(\cdot)$ & Estimation error \\ \hline
$h_{\mathrm{robustness}}(\bm{\theta})$~\eqref{ap:eq:mse_noise_robustness} & $\mathbb{P} \left(\abs{h_{\mathrm{robustness}}(\bm{\theta}) - \widehat{h}_{\mathrm{robustness}}(\bm{\theta})} \leq 8\norm{O}_\infty\varepsilon_{\text{in}} + 4\varepsilon_{\text{in}}^2 \right) \geq 1 - 2\delta_{\text{in}} $  \\ \hline
$h_{\mathrm{sensitivity}}(\bm{\theta})$ \eqref{ap:eq:noise_sensitivity} & $\mathbb{P} \left(\abs{h_{\mathrm{sensitivity}}(\bm{\theta}) - \widehat{h}_{\mathrm{sensitivity}}(\bm{\theta})} \leq 4\norm{O}_\infty\varepsilon_{\text{in}} + 2\varepsilon_{\text{in}}^2 \right) \geq 1 - 2\delta_{\text{in}} $ \\ \hline
$h_{\mathrm{sensitivity}}^{(amp)}(\bm{\theta})$ \eqref{ap:eq:amplitude_damping_noise_sensitivity}& $\mathbb{P} \left(\abs{h_{\mathrm{sensitivity}}^{(amp)}(\bm{\theta}) - \widehat{h}_{\mathrm{sensitivity}}^{(amp)}(\bm{\theta})} \leq 8\norm{O}_\infty\varepsilon_{\text{in}} + 4\varepsilon_{\text{in}}^2 \right) \geq 1 - 2\delta_{\text{in}} $\\ \hline
$h_{\mathrm{trainability}}(\bm{\theta})$ \eqref{ap:eq:variance_gradient_noisy}& $\mathbb{P} \left(\abs{h_{\mathrm{trainability}}(\bm{\theta}) - \widehat{h}_{\mathrm{trainability}}(\bm{\theta})} \leq 2\norm{O}_\infty\varepsilon_{\text{in}} + \varepsilon_{\text{in}}^2 \right) \geq 1 - 2\delta_{\text{in}} $  \\ \hline
$h_{\mathrm{express}}(\bm{\theta_1}, \bm{\theta_2}, \sigma)$ \eqref{ap:eq:noisy_expressibility} & $\mathbb{P} \left(\abs{h_{\mathrm{express}}(\bm{\theta_1}, \bm{\theta_2}, \sigma) - \widehat{h}_{\mathrm{express}}(\bm{\theta_1}, \bm{\theta_2}, \sigma)} \leq \varepsilon_{\text{in}} \right) \geq 1 -\delta_{\text{in}} $ and $\norm{O}_{\mathrm{Pauli},1}=1$\\ \hline
$h_{\mathrm{express}}^{\mathrm{lower}}(\bm{\theta},\sigma')$ \eqref{ap:eq:noisy_expressibility_lower_bound} & $\mathbb{P} \left(\abs{h_{\mathrm{express}}^{\mathrm{lower}}(\bm{\theta},\sigma') - \widehat{h}_{\mathrm{express}}^{\mathrm{lower}}(\bm{\theta},\sigma')} \leq 2\varepsilon_{\text{in}} + \varepsilon_{\text{in}}^2 \right) \geq 1 -\delta_{\text{in}} $ and $\norm{O}_{\mathrm{Pauli},1}=1$\\ \hline
\end{tabular}
\end{small}
\caption{Error bounds for estimating the bounded functions $h_{*}(\cdot)$ (and thus each diagnostic quantity) induced by the inner-layer sampling error $\varepsilon_{\mathrm{in}}$ with failure probability at most $\delta_{\mathrm{in}}$.}
\label{tab:table_diagnostic_h_error}
\end{table*}

Let $\varepsilon_h$ and $\delta_h$ denote the desired estimation error and failure probability for the bounded function $h_{*}(\cdot)$, respectively.
We consider the worst-case bound among the first four diagnostic quantities ($h_{\mathrm{robustness}}, h_{\mathrm{sensitivity}}, h_{\mathrm{sensitivity}}^{(amp)}, h_{\mathrm{trainability}}$) in Table~\ref{tab:table_diagnostic_h_error}, i.e $8\norm{O}_\infty\varepsilon_{\text{in}} + 4\varepsilon_{\text{in}}^2 \leq \varepsilon_h$ and $2\delta_{\text{in}} \leq \delta_h$.
Accordingly, we choose
\begin{equation}
  \varepsilon_{\text{in}} = \frac{-8\norm{O}_\infty + \sqrt{64\norm{O}_\infty^2 + 16\varepsilon_h}}{8} \leq \frac{\varepsilon_h}{8\norm{O}_\infty}, \quad \delta_{\text{in}} = \frac{\delta_h}{2},
\end{equation}
which ensure that the estimation error of $h_{*}(\cdot)$ is at most $\varepsilon_h$ with probability at least $1-\delta_h$.
Substituting these choices into Eq.~\eqref{ap:eq:inner_layer_sample_complexity}, yields the required number of inner-layer samples:
\begin{equation}\label{ap:eq:inner_layer_sample_complexity_final}
  N_{\text{in}} \geq \frac{256 \norm{O}_{\mathrm{Pauli},1}^2 \norm{O}_\infty^2}{\varepsilon_h^2} \ln{\frac{4}{\delta_h}},
\end{equation}
which guarantees the desired accuracy and confidence level for estimating $h_{\mathrm{robustness}}, h_{\mathrm{sensitivity}}, h_{\mathrm{sensitivity}}^{(amp)}, h_{\mathrm{trainability}}$.

For $h_{\mathrm{express}}(\bm{\theta_1}, \bm{\theta_2}, \sigma)$ and $h_{\mathrm{express}}^{\mathrm{lower}}(\bm{\theta},\sigma')$, considering the error $2\varepsilon_{\text{in}} + \varepsilon_{\text{in}}^2 \leq \varepsilon_h$ and $\delta_{\text{in}} \leq \delta_h$, we choose
\begin{equation}
  \varepsilon_{\text{in}} = \sqrt{1+\varepsilon_h} -1 \leq \frac{\varepsilon_h}{2}, \quad \delta_{\text{in}} = \delta_h.
\end{equation}
Substituting these choices and $\norm{\sigma}_{\mathrm{Pauli},1}=\norm{\sigma}_\infty=1$ into Eq.~\eqref{ap:eq:inner_layer_sample_complexity}, yields the required number of inner-layer samples:
\begin{equation}
  N_{\text{in}} \geq \frac{8}{\varepsilon_h^2} \ln{\frac{2}{\delta_h}},
\end{equation}
which guarantees the desired accuracy and confidence level for estimating $h_{\mathrm{express}}(\bm{\theta_1}, \bm{\theta_2}, \sigma)$ and $h_{\mathrm{express}}^{\mathrm{lower}}(\bm{\theta},\sigma')$.

\textbf{Outer layer:}
Next, we analyze the outer-layer estimation error.
In the outer layer, the diagnostic quantities are estimated by taking the empirical mean of the bounded inner-layer estimator $\widehat{h}_{*}(\cdot)$ over $N_{\text{out}}$ independent samples.
Suppose that each evaluation of $h_{*}(\cdot)$ is approximated with error at most $\varepsilon_h$ with probability at least $1-\delta_h$, i.e, $\mathbb{P} \left(\abs{h_{*}(\cdot) - \widehat{h}_{*}(\cdot, r)} \leq \varepsilon_h \right) \geq 1-\delta_h$ over the randomness of $r$, where $r$ denotes the random variable in the inner-layer sampling.
Note that the estimator $\widehat{h}_{*}(\cdot,r)$ shares the same bound for $h_{*}(\cdot)$ as shown in Table~\ref{tab:table_diagnostic}, by using Eq.~\eqref{ap:eq:inner_layer_Ts_bound}.

For ``Noise robustness'', ``Noise sensitivity'', and ``Trainability'', the diagnostic quantities take the form of $\mathbb{E}^*_{\bm{\theta}} h_{*}(\bm{\theta})$.
We estimate this expectation using the empirical mean over $N_{\text{out}}$ independent samples:
\begin{equation}\label{ap:eq:outer_layer_empirical_mean}
  \mathbb{E}^*_{\bm{\theta}} h_{*}(\bm{\theta}) \approx \frac{1}{N_{\text{out}}} \sum_{i=1}^{N_{\text{out}}} \widehat{h}_{*}(\bm{\theta}_i,r_i).
\end{equation}

The resulting error admits the decomposition:
\begin{equation}
  \abs{\mathbb{E}^*_{\bm{\theta}} h_{*}(\bm{\theta}) - \frac{1}{N_{\text{out}}} \sum_{i=1}^{N_{\text{out}}} \widehat{h}_{*}(\bm{\theta}_i, r_i)} \leq \abs{\mathbb{E}^*_{\bm{\theta}} h_{*}(\bm{\theta}) - \mathbb{E}^*_{\bm{\theta}}\mathbb{E}_{r} \widehat{h}_{*}(\bm{\theta},r)} + \abs{\mathbb{E}^*_{\bm{\theta}} \mathbb{E}_{r} \widehat{h}_{*}(\bm{\theta},r) - \frac{1}{N_{\text{out}}} \sum_{i=1}^{N_{\text{out}}} \widehat{h}_{*}(\bm{\theta}_i, r_i)}.
\end{equation}
For $\abs{\mathbb{E}^*_{\bm{\theta}} h_{*}(\bm{\theta}) - \mathbb{E}^*_{\bm{\theta}} \mathbb{E}_{r} \widehat{h}_{*}(\bm{\theta}, r)}$, we have:
\begin{equation}
  \abs{\mathbb{E}^*_{\bm{\theta}} h_{*}(\bm{\theta}) - \mathbb{E}^*_{\bm{\theta}} \mathbb{E}_{r} \widehat{h}_{*}(\bm{\theta},r)} \leq \mathbb{E}^*_{\bm{\theta}} \mathbb{E}_{r} \abs{h_{*}(\bm{\theta}) - \widehat{h}_{*}(\bm{\theta},r)} \leq (1-\delta_h) \varepsilon_h + \delta_h \cdot \max_{\bm{\theta}} (\abs{h_{*}(\bm{\theta})} + \abs{\widehat{h}_{*}(\bm{\theta},r)}) \leq \varepsilon_h + 16 \norm{O}_\infty^2 \delta_h,
\end{equation}
where we use the bounds of $h_{*}(\cdot)$ and $\widehat{h}_{*}(\cdot)$ from Table~\ref{tab:table_diagnostic}.

For the second term $\abs{\mathbb{E}^*_{\bm{\theta}} \mathbb{E}_{r} \widehat{h}_{*}(\bm{\theta},r) - \frac{1}{N_{\text{out}}} \sum_{i=1}^{N_{\text{out}}} \widehat{h}_{*}(\bm{\theta}_i,r_i)}$, Hoeffding's inequality yields:
\begin{equation}
  \mathbb{P} \left(
    \abs{\mathbb{E}^*_{\bm{\theta}} \mathbb{E}_{r} \widehat{h}_{*}(\bm{\theta},r) - \frac{1}{N_{\text{out}}} \sum_{i=1}^{N_{\text{out}}} \widehat{h}_{*}(\bm{\theta}_i,r_i)} \geq \varepsilon_{\text{out}} \right) \leq 2 \exp{ -\frac{2 N_{\text{out}} \varepsilon_{\text{out}}^2}{\left(16 \norm{O}_\infty^2\right)^2}} = 2 \exp{ -\frac{N_{\text{out}} \varepsilon_{\text{out}}^2}{128 \norm{O}_\infty^4}}
\end{equation}
Combining the above two parts, to ensure an overall error at most $\varepsilon$ with probability at least $1-\delta$, i.e.,
\begin{equation}
  \mathbb{P} \left(
    \abs{\mathbb{E}^*_{\bm{\theta}} h_{*}(\bm{\theta}) - \frac{1}{N_{\text{out}}} \sum_{i=1}^{N_{\text{out}}} \widehat{h}_{*}(\bm{\theta}_i,r_i)} \geq \varepsilon \right) \leq \delta,
\end{equation}
it suffices to enforce $\varepsilon_{\text{out}} + \varepsilon_h + 16 \norm{O}_\infty^2 \delta_h = \varepsilon$ and $2 \exp{ -\frac{N_{\text{out}} \varepsilon_{\text{out}}^2}{128 \norm{O}_\infty^4}} \leq \delta$.
One convenient choice is:
\begin{equation}\label{ap:eq:outer_layer_error_requirements}
  \varepsilon_{\text{out}} \leq \frac{\varepsilon}{3}, \quad \varepsilon_h \leq \frac{\varepsilon}{3}, \quad \delta_h \leq \frac{\varepsilon}{48 \norm{O}_\infty^2}.
\end{equation}
With this choice, the required number of outer-layer samples is:
\begin{equation}
  N_{\text{out}} \geq \frac{1152 \norm{O}_\infty^4}{\varepsilon^2} \ln{\frac{2}{\delta}}.
\end{equation}
Substituting Eq.~\eqref{ap:eq:outer_layer_error_requirements} into the inner-layer sample complexity~\eqref{ap:eq:inner_layer_sample_complexity_final}, we obtain:
\begin{equation}
  N_{\text{in}} \geq \frac{2304 \norm{O}_{\mathrm{Pauli},1}^2 \norm{O}_\infty^2}{\varepsilon^2} \ln{\frac{\norm{O}_\infty^2}{\varepsilon}}.
\end{equation}

Therefore, estimating ``Noise robustness'', ``Noise sensitivity'', and ``Trainability'' with error at most $\varepsilon$ and failure probability at most $\delta$, requires total cost:
\begin{equation}
  \order{N_{\text{out}} N_{\text{in}} (nL + N_\rho n)}=\order{\frac{\norm{O}_{\mathrm{Pauli},1}^2 \norm{O}_\infty^6 (nL + N_\rho n)}{\varepsilon^4} \ln{\frac{\norm{O}_\infty^2}{\varepsilon}}\ln{\frac{1}{\delta}} }
\end{equation}
which follows by substituting the abovee $N_{\text{out}}$ and $N_{\text{in}}$ into Eq.~\eqref{ap:eq:total_sample_complexity}.

For ``Expressivity'' and ``Expressivity Lower bound'', we take $\mathbb{E}^*_{\bm{\theta_1}, \bm{\theta_2}} \left[\mathbb{E}_{\sigma} h_{\mathrm{express}}(\bm{\theta_1}, \bm{\theta_2},\sigma)\right]^2$ as an representative example, the remaining terms can be analyzed analogously.
We estimate the expectation via the empirical mean over $N_{\text{out}}$ independent samples:
\begin{equation}
  \mathbb{E}^*_{\bm{\theta_1}, \bm{\theta_2}} \left[\mathbb{E}_{\sigma} h_{\mathrm{express}}(\bm{\theta_1}, \bm{\theta_2},\sigma)\right]^2 \approx \frac{1}{N_{\text{out}}} \sum_{i=1}^{N_{\text{out}}} \left( \widehat{h}_{\mathrm{express}}(\bm{\theta}_{1,i}, \bm{\theta}_{2,i},\sigma_i) \widehat{h}_{\mathrm{express}}(\bm{\theta}_{1,i}, \bm{\theta}_{2,i},\sigma'_i)  \right).
\end{equation}
Moreover, using the inner-layer guarantee $\mathbb{P} \left(\abs{\widehat{h}_{\mathrm{express}}-\widehat{h}_{\mathrm{express}}} > \varepsilon_h \right) \leq \delta_h$ and the bound $\abs{h_{\mathrm{express}}(\cdot)}, \abs{\widehat{h}_{\mathrm{express}}(\cdot)} \leq 1$, we have:
\begin{equation}
  \mathbb{P} \left(\abs{\widehat{h}_{\mathrm{express}}(\bm{\theta}_{1,i}, \bm{\theta}_{2,i},\sigma_i) \widehat{h}_{\mathrm{express}}(\bm{\theta}_{1,i}, \bm{\theta}_{2,i},\sigma'_i) - {h}_{\mathrm{express}}(\bm{\theta}_{1,i}, \bm{\theta}_{2,i},\sigma_i) {h}_{\mathrm{express}}(\bm{\theta}_{1,i}, \bm{\theta}_{2,i},\sigma'_i)} \geq \varepsilon_h^2 + 2\varepsilon_h \right) \leq 2\delta_h.
\end{equation}
Consequently, relative to the analyses for ``Noise robustness'', ``Noise sensitivity'', and ``Trainability'' the only change is that the effective inner-layer accuracy requirement becomes $\varepsilon_h \to \varepsilon_h^2 + 2\varepsilon_h$ and the failure probability scales as $\delta_h \to 2\delta_h$.

\subsection{Performance benchmark of estimation algorithm}
To benchmark the performance of the estimation algorithm, we consider the specific instance employed in the numerical simulations of Ref.~\cite{ragone2024lie}.
The circuit consists of $p$ blocks on $n$ qubits arranged linearly, see Fig.~\ref{ap:fig:line_ZXX_circuit}.
Each block contains a layer of single-qubit $R_Z$ rotations applied to all $n$ qubits and nearest-neighbor two-qubit $R_{XX}$ rotations, contributing $n-1$ two-qubit gates per block.
The observable is chosen as $O = X_q X_{q+1} + Z_q$, where $q = \lfloor n/2 \rfloor$ denotes the middle qubit, and the initial state is $\ket{0}^{\otimes n}$.

\begin{figure}[H]
  \centering
  \includegraphics[width=0.3\textwidth]{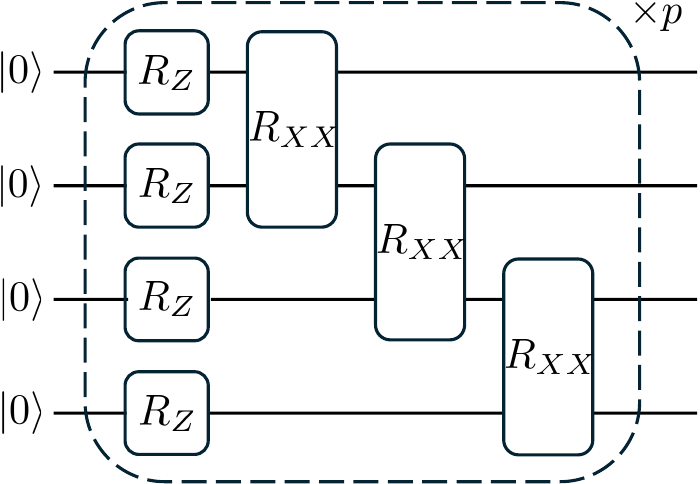}
  \caption{The circuit architecture used in the benchmark, composed of $p$ blocks, with one block shown in the dashed box.}
  \label{ap:fig:line_ZXX_circuit}
\end{figure}

Following the analytical framework of Refs.~\cite{ragone2024lie,fontana2024characterizing}, when the dynamical Lie algebra (DLA) is simple, the variance of the expectation value $\langle O \rangle$ over the circuit ensemble induced by DLA $\mathfrak{g}$ is
\begin{equation}
  \mathrm{Var}(\langle O \rangle) = \frac{\mathcal{P}_{\mathfrak{g}}(\ketbra{0}^{\otimes n})\mathcal{P}_{\mathfrak{g}}(O)}{\dim{\mathfrak{g}}},
\end{equation}
where $\dim{\mathfrak{g}}$ is the dimension of the DLA and $\mathcal{P}_{\mathfrak{g}}$ denotes the $\mathfrak{g}$-purity~\cite{somma2004nature,somma2005quantum}:
\begin{equation}
  \mathcal{P}_{\mathfrak{g}}(H) = \tr{H_{\mathfrak{g}}^2} = \sum_{i=1}^{\dim{\mathfrak{g}}} \tr{B_i^\dagger H}^2,
\end{equation}
with $H_{\mathfrak{g}} := \Pi_{\mathfrak{g}}(H)$ the Hilbert-Schmidt projection of $H$ onto the complexified algebra $\mathfrak{g}_{\mathbb{C}} := \mathrm{span}_{\mathbb{C}}\,\mathfrak{g}$, and $\{B_i\}_{i=1}^{\dim \mathfrak{g}}$ an orthonormal basis of $\mathfrak{g}_{\mathbb{C}}$ with respect to the Hilbert-Schmidt inner product.

The DLA generated by the gate set in Fig.~\ref{ap:fig:line_ZXX_circuit} is:
\begin{equation}
  \mathfrak{g} = \left\langle \left( \left\{ iX_jX_{j+1} \right\}^{n-1}_{j=1} \right) \cup \left( \left\{ iZ_j \right\}^{n}_{j=1} \right) \right\rangle_{\mathrm{Lie}},
\end{equation}
where $\langle \cdot \rangle_{\mathrm{Lie}}$ denotes the Lie algebra generated by the specified operators under the commutator $[A,B]=AB-BA$.
As shown in Refs.~\cite{ragone2024lie,kokcu2022fixed}, this DLA is simple and isomorphic to $\mathfrak{so}(2n)$; consequently, $\dim \mathfrak{g} = n(2n-1)$.
An orthogonal basis of $\mathfrak{g}_{\mathbb{C}}$ can be chosen as:
\begin{equation}
  \{\widehat{X_iX_j}, \widehat{X_iY_j}, \widehat{Y_iX_j}, \widehat{Y_iY_j}\}_{1\leq i < j \leq n}, \quad \mathrm{where}\quad \widehat{A_iB_j} \coloneqq A_iZ_{i+1} \cdots Z_{j-1} B_j.
\end{equation}

For the initial state $\ket{0}^{\otimes n}$, the $\mathfrak{g}$-purity evaluates to:
\begin{equation}
  \mathcal{P}_{\mathfrak{g}}(\ketbra{0}^{\otimes n}) = \sum_{i=1}^{n(2n-1)} \tr{B_i^\dagger \ketbra{0}^{\otimes n}}^2 = \frac{n}{2^n}.
\end{equation}
Because the observable $O$ belongs to the DLA, we have $O_{\mathfrak{g}} = O$, and hence:
\begin{equation}
  \mathcal{P}_{\mathfrak{g}}(O) = \tr{O_{\mathfrak{g}}^2} = 2^{n+1}.
\end{equation}
Therefore, when the number of blocks $p$ is large enough that the circuit ensemble forms an approximate design, the variance of $\langle O \rangle$ converges to
\begin{equation}\label{ap:eq:benchmark_variance}
  \mathrm{Var}(\langle O \rangle) = \frac{\mathcal{P}_{\mathfrak{g}}(\ketbra{0}^{\otimes n})\mathcal{P}_{\mathfrak{g}}(O)}{\dim{\mathfrak{g}}} = \frac{2}{2n-1}.
\end{equation}

We employ the estimation algorithm in Sec.~\ref{sec:OBPPP} to obtain the estimated variance $\widehat{\mathrm{Var}}(\langle O \rangle)$ for various system sizes $n$.
To compare against the analytical result in Eq.~\eqref{ap:eq:benchmark_variance}, we set $p = 2048$.
Since there is no PCS1 channel present in the experiment, we can fix $N_{\text{in}} = 1$.
We then analyze the estimation error as a function of the number of samples $N_{\text{out}}$ (denoted as “$\# samples$” in the figures).

The relative error defined as:
\begin{equation}
  \frac{\mathrm{Var}(\langle O \rangle) - \widehat{\mathrm{Var}}(\langle O \rangle)}{\mathrm{Var}(\langle O \rangle)},
\end{equation}
is shown in Fig.~\ref{ap:fig:convergence}.
For all $n$ considered, the error decreases as the number of samples increases, and the asymptotic behavior is consistent with $(\# samples)^{-1/2}$, as expected.

\begin{figure}[H]
  \centering
  \includegraphics[width=0.61\linewidth]{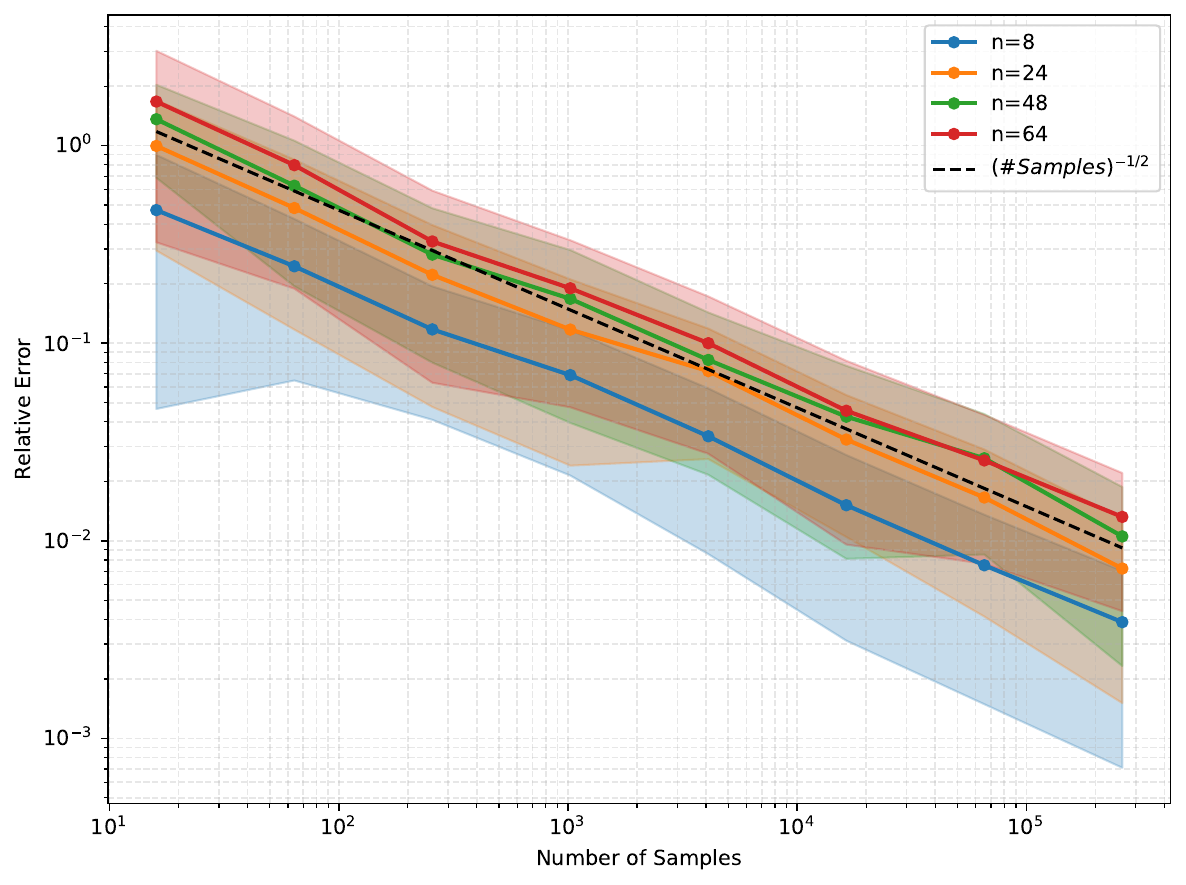}
  \caption{
  Relative error vs.\ number of samples $\# samples$ for different number of qubits $n$.
  Solid lines denote the mean over $100$ independent trials; shaded bands indicate the corresponding standard deviation.
  The black dashed line shows the reference slope $(\# samples)^{-1/2}$.
  }
  \label{ap:fig:convergence}
\end{figure}

The estimated variance $\widehat{\mathrm{Var}}(\langle O \rangle)$ is reported in Fig.~\ref{ap:fig:calibration}.
The estimates agree with the theoretical values: the points lie along the reference line $y=x$, indicating that the estimator is unbiased.
Moreover, the standard deviation decreases with the number of samples, consistent with the central limit theorem.

\begin{figure}[H]
  \centering
  \includegraphics[width=0.5\linewidth]{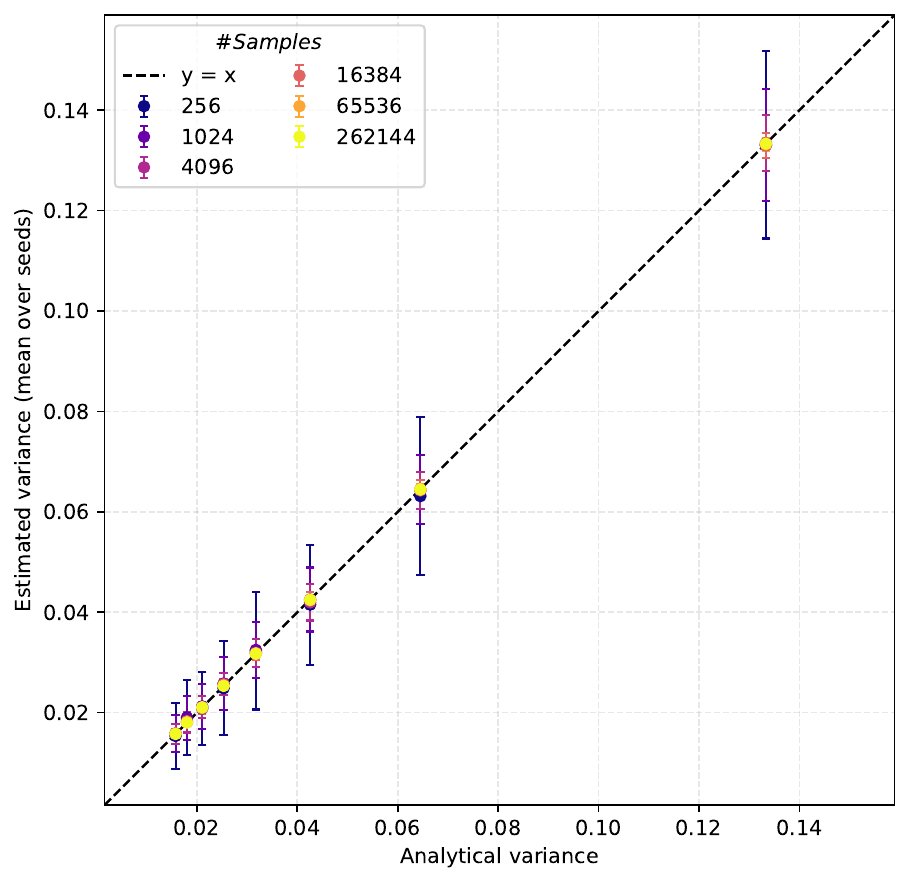}
  \caption{
  Estimated variance (mean over $100$ independent trials) vs.\ theoretical value.
  Error bars denote the standard deviation across $100$ trials for different numbers of samples (“$\# samples$”).
  The black dashed line is the reference $y=x$.
  }
  \label{ap:fig:calibration}
\end{figure}

\section{Numerical Simulation}
In this section, we will illustrate the details of the numerical simulation.

\subsection{Noise robustness and noise bottleneck}
In the Noise Robustness and Noise Bottleneck section of the main text, we use a 75-layer, 435-qubit circuit, designed on a hypothetical chip architecture and composed of $R_X$, $R_Z$ and $R_{ZZ}$ gates. 
The circuit is shown in Fig.~\ref{fig:circuit_435}, and the observable is given by:
\begin{equation}
  O = Z_{(13,14)},
\end{equation}
where $Z_{(13,14)}$ is the Pauli $Z$ operator on the qubit in the $13$-th row~(from top to bottom) and $14$-th column~(from left to right) of the hypothetical chip architecture.
During the circuit evolution, the noise is applied after each gate, and the noise model is given by:
\begin{equation}
  \mathcal{N}_{dep}(\rho)=(1-\lambda)\rho + \lambda\frac{\mathbb{I}}{2},
\end{equation}
where $\mathbb{I}$ is the identity operator, and $\lambda$ is the noise strength.
For two-qubit gates, the noise applied after gates is the tensor product of the noise applied after each qubit $\mathcal{N}_{dep}^{\otimes 2}$.

In the estimation of the noise robustness and noise sensitivity, we set the number of samples are $N_{\text{out}}=2^{22}$ and $N_{\text{in}}=1$ (since the effect of Pauli noise can use Eq.~\eqref{eq:pauli_noise_effect_factor} directly), thus the total number of circuit evaluations is $2^{22}$.
In the estimation of Reduction in MSE, the number of samples is $N_{\text{out}}=2^{18}$ ($N_{\text{in}}=1$) and we repeat the estimation $100$ times to get the standard deviation.

\begin{figure}[H]
  \centering
  \includegraphics[width=\columnwidth]{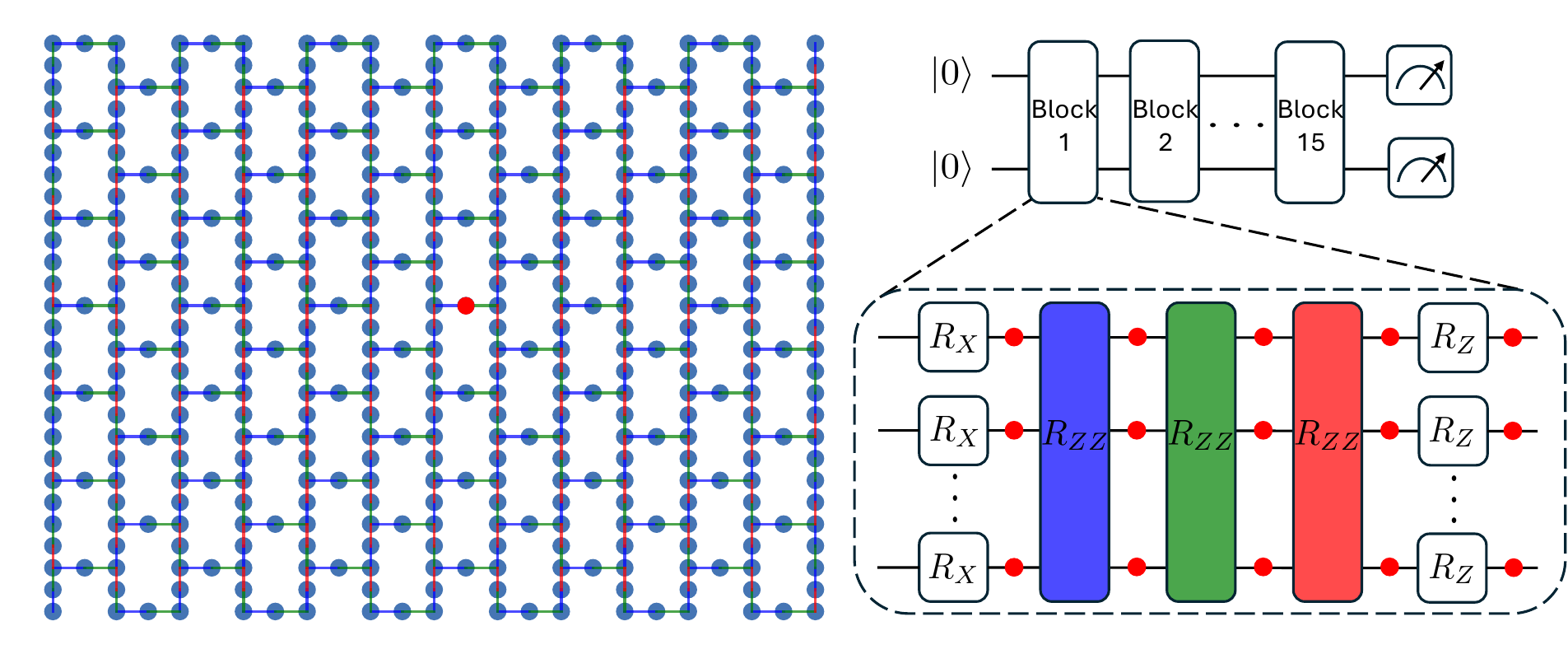}
  \caption{
    The circuit for Noise robustness and Noise Bottleneck estimation.
    The left figure is the hypothetical chip architecture, where the qubits are arranged in a 2D grid, and the edges are the physical connections between the qubits, the red qubit in the center is the final measurement qubit.
    These edges are colored by three colors, two-qubit edges in same color can be performed simultaneously, and the three colors are used for the three layers of $R_{ZZ}$ gates.
    The circuit are shown in the right figure, where the circuit contains $15$ blocks, each block contains $5$ layers, one $R_X$ layer, three $R_{ZZ}$ layers and one $R_Z$ layer.
    The three layers of $R_{ZZ}$ gates are labeled by three different colors, blue, red and green, corresponding to the corresponding colors of the edges in the graph.
    The noise are labeled by the red points, which are applied after gates, to describe the gate-based noise model.
  }
  \label{fig:circuit_435}
\end{figure}

\subsection{Trainability Estimation}
For the trainability estimation, we employed circuits similar to that in Fig.~\ref{fig:circuit_435}, with the key difference that the two-qubit $R_{ZZ}$ gates were replaced by the  $CZ$ gates in order to visualize the trainability across the qubit layout.
And the number of blocks is no longer $15$, but $1,2,3$, corresponding to $5,10,15$ layers of gates.
The circuit is shown in Fig.~\ref{fig:circuit_435_trainability}, and the observable is given by:
\begin{equation}
  O = Z_{(13,1)} Z_{(13,7)} Z_{(13,13)} Z_{(13,19)} Z_{(13,25)} ,
\end{equation}
where $Z_{(i,j)}$ is the Pauli $Z$ operator on the qubit in the $i$-th row~(from top to bottom) and $j$-th column~(from left to right) of the hypothetical chip architecture.
The noise is similar to the gate-based noise model in Fig.~\ref{fig:circuit_435}, but the noise type is changed to the amplitude damping noise:
\begin{equation}
\mathcal{N}_{ad}\left(\begin{pmatrix}
  \rho_{00} & \rho_{01} \\
  \rho_{10} & \rho_{11}
\end{pmatrix}\right)=\begin{pmatrix}
  \rho_{00}+\gamma\rho_{11} & \sqrt{1-\gamma}\rho_{01} \\
  \sqrt{1-\gamma}\rho_{10} & (1-\gamma)\rho_{11}
\end{pmatrix},
\end{equation}
where $\gamma$ is the damping parameter.
Because the noise is PCS1 channel, thus the estimation is running in Algorithm~\ref{alg:OBPPP2}.

In the estimation of the trainability, when $\gamma=0$, there is no noise and the number of samples are $N_{\text{out}}=2^{18}$ and $N_{\text{in}}=1$, when $\gamma\neq 0$, the number of samples are $N_{\text{out}}=2^{12}$ and $N_{\text{in}}=2^6$.
For both cases, we repeat the estimation $100$ times to get the standard deviation.

\begin{figure}[H]
  \centering
  \includegraphics[width=\columnwidth]{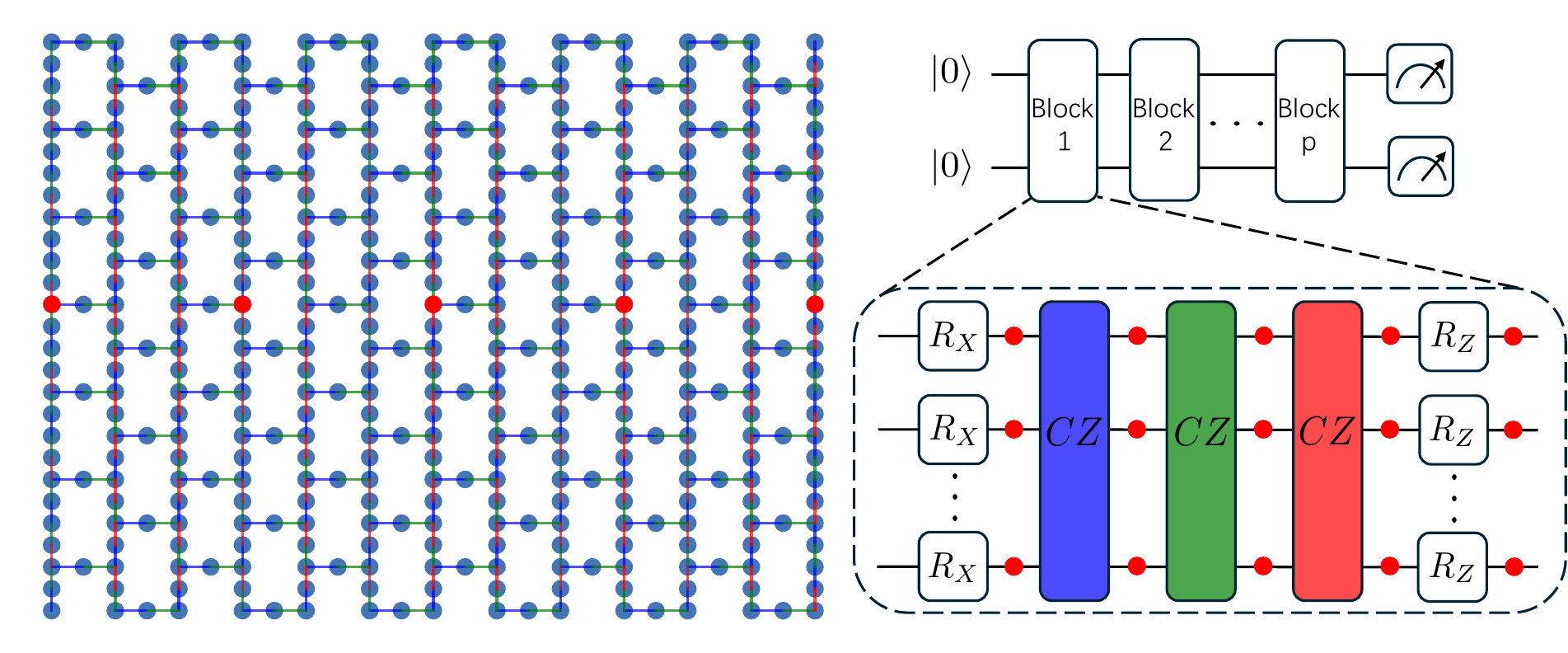}
  \caption{
    The circuit for Trainability Estimation.
    The left figure is the hypothetical chip architecture, where the qubits are arranged in a 2D grid, and the edges are the physical connections between the qubits, the red qubit in the center is the final measurement qubit.
    These edges are colored by three colors, two-qubit edges in same color can be performed simultaneously, and the three colors are used for the three layers of controlled-Z ($CZ$) gates.
    The circuit are shown in the right figure, where the circuit contains $p=1,2,3$ blocks, each block contains $5$ layers, one $R_X$ layer, three $CZ$ layers and one $R_Z$ layer.
    The three layers of $CZ$ gates are labeled by three different colors, blue, red and green, corresponding to the corresponding colors of the edges in the graph.
    The noise are labeled by the red points, which are applied after gates, to describe the gate-based noise model.
  }
  \label{fig:circuit_435_trainability}
\end{figure}

\subsection{Expressibility Estimation}
For the expressibility estimation, we employed 8-qubit circuits with $R_X$, $CZ$ and $R_Z$ gates, which can be viewd as a reduced unit of the circuit in Fig.~\ref{fig:circuit_435_trainability}.
The number of blocks is $1,2,3$, corresponding to $4,8,12$ layers of gates.
The circuit is shown in Fig.~\ref{fig:circuit_8_exp}, and there is no observable involved in the estimation.
The noise is the gate-based amplitude damping noise:
\begin{equation}
  \mathcal{N}_{ad}\left(\begin{pmatrix}
    \rho_{00} & \rho_{01} \\
    \rho_{10} & \rho_{11}
  \end{pmatrix}\right)=\begin{pmatrix}
    \rho_{00}+\gamma\rho_{11} & \sqrt{1-\gamma}\rho_{01} \\
    \sqrt{1-\gamma}\rho_{10} & (1-\gamma)\rho_{11}
  \end{pmatrix},
\end{equation}
where $\gamma$ is the damping parameter.

When $\gamma=0$, it is noiseless case, we can use Eq.~\eqref{eq:noisy_expressibility_all} to estimate the expressibility.
When $\gamma\neq 0$, because the adjoint of amplitude damping is no longer PCS1, we instead estimate the lower bound of the expressibility $\widetilde{\mathcal{M}}_{2\leq}^2$ using Eq.~\eqref{eq:noisy_expressibility_lower_bound}.
The number of samples are $N_{\text{out}}=2^{22}$ and $N_{\text{in}}=2^8$.
In all cases, we repeat the estimation $100$ times to get the standard deviation.

\begin{figure}[H]
  \centering
  \includegraphics[width=0.6\columnwidth]{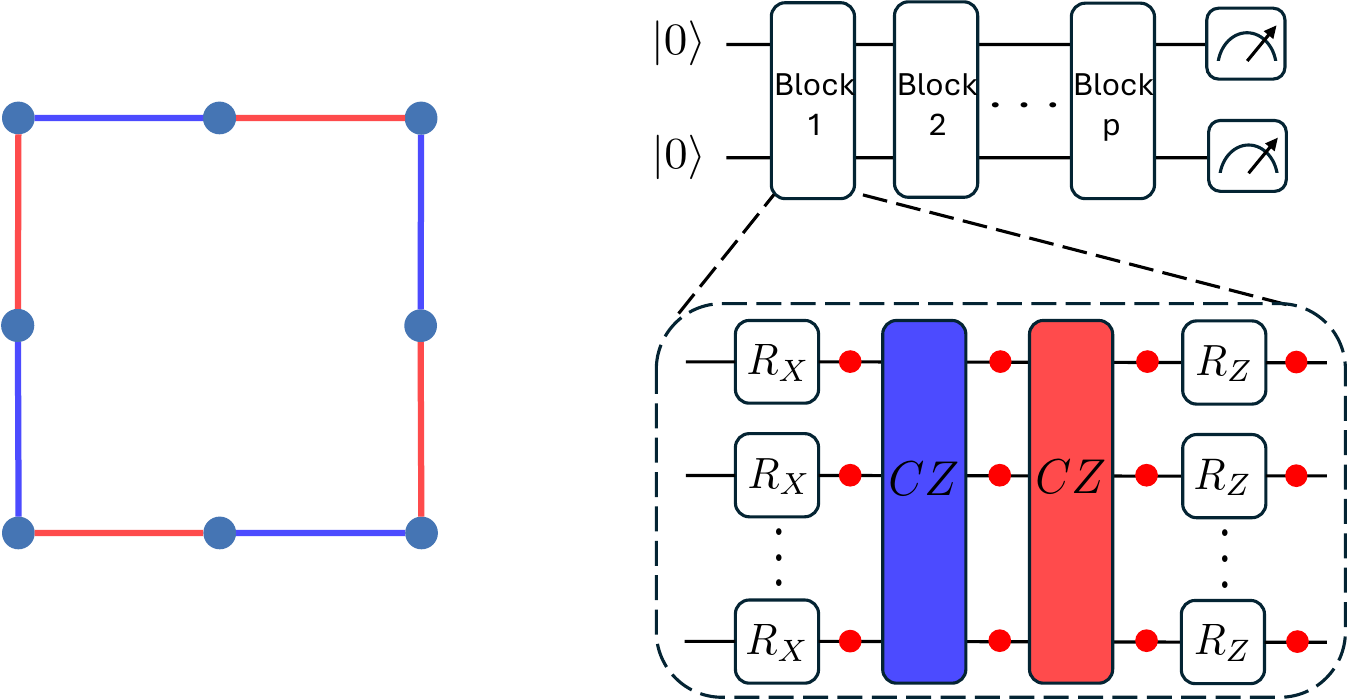}
  \caption{
    The circuit for Expressibility Estimation.
    The left figure is the hypothetical chip architecture, where the qubits are arranged in a circle, and the edges are the physical connections between the qubits. 
    These edges are colored by two colors, two-qubit edges in same color can be performed simultaneously, and the two colors are used for the two layers of controlled-Z ($CZ$) gates.
    The circuit are shown in the right figure, where the circuit contains $p=1,2,3$ blocks, each block contains $4$ layers, one $R_X$ layer, two $CZ$ layers and one $R_Z$ layer.
    The two layers of $CZ$ gates are labeled by two different colors, blue and red, corresponding to the corresponding colors of the edges in the graph.
    The noise are labeled by the red points, which are applied after gates, to describe the gate-based noise model.
    It can be viewed as a reduced unit of the circuit in Fig.~\ref{fig:circuit_435_trainability}.
  }
  \label{fig:circuit_8_exp}
\end{figure}



\end{document}